\newcommand*\bigcdot{\mathpalette\bigcdot@{.4}}
\newcommand*\bigcdot@[2]{\mathbin{\vcenter{\hbox{\scalebox{#2}{$\m@th#1\bullet$}}}}}
\renewcommand{\emptyset}{\varnothing}
\newcommand{\oldcomment}[1]{}
\newcommand{\myfootnote}[1]{\footnote{#1}}
\renewcommand{\epsilon}{\varepsilon}
\tikzset{commutative diagrams/.cd}
\renewcommand{\emph}[1]{\index{#1}\textit{#1}}
\renewcommand{\emptyset}{\varnothing}
\newcommand{\IN}{\mathbb{N}}
\newcommand{\IE}{\ensuremath{\mathbb{E}}}
\newcommand{\F}[1]{\mathfrak{#1}}
\newcommand{\C}[1]{\mathcal{#1}}
\def\moverlay{\mathpalette\mov@rlay}
\def\mov@rlay#1#2{\leavevmode\vtop{%
   \baselineskip\z@skip \lineskiplimit-\maxdimen
   \ialign{\hfil$\m@th#1##$\hfil\cr#2\crcr}}}
\newcommand{\charfusion}[3][\mathord]{
    #1{\ifx#1\mathop\vphantom{#2}\fi
        \mathpalette\mov@rlay{#2\cr#3}
      }
    \ifx#1\mathop\expandafter\displaylimits\fi}
\newcommand{\aprove}{\textsf{AProVE}}
\newcommand{\muterm}{\textsf{MU-TERM}}
\newcommand{\natt}{\textsf{NaTT}}
\newcommand{\ttttwo}{\textsf{T\kern-0.15em\raisebox{-0.55ex}T\kern-0.15emT\kern-0.15em\raisebox{-0.55ex}2}}
\newcommand{\tct}{\textsf{TcT}}
\newcommand{\tool}[1]{\textsf{#1}}
\newcommand{\Var}{\mathcal{V}}
\newcommand{\Trans}{\mathcal{T}}
\newcommand{\I}{\mathcal{I}}
\newcommand{\ISum}{\mathcal{I}_{\mbox{\tiny $\sum$}}}
\renewcommand{\P}{\mathcal{P}}
\renewcommand{\S}{\mathcal{S}}
\newcommand{\K}{\mathcal{K}}
\newcommand{\bat}{\text{(\textbf{at})}}
\newcommand{\baf}{\text{(\textbf{af})}}
\newcommand{\bnt}{\text{(\textbf{nt})}}
\newcommand{\bnf}{\text{(\textbf{nf})}}
\newcommand{\D}{\C{D}}
\newcommand{\Proc}{\operatorname{Proc}}
\newcommand{\TSet}[2]{\mathcal{T}\left(#1,#2\right)}
\newcommand{\VSet}{\mathcal{V}}
\renewcommand{\O}{\mathcal{O}}
\newcommand{\R}{\mathcal{R}}
\newcommand{\ADPair}[1]{\mathcal{A}(#1)}
\newcommand{\FDist}{\operatorname{FDist}}
\newcommand{\Supp}{\operatorname{Supp}}
\newcommand{\rootsym}{\operatorname{root}}
\newcommand{\rootterm}{\operatorname{root}}
\newcommand{\rules}{\operatorname{Rules}}
\newcommand{\urules}{\mathcal{U}}
\newcommand{\geo}{\operatorname{geo}}
\newcommand{\edh}{\operatorname{edh}}
\newcommand{\idh}{\operatorname{dh}}
\newcommand{\idl}{\operatorname{dl}}
\newcommand{\edl}{\operatorname{edl}}
\newcommand{\CPI}{\operatorname{CPI}}
\newcommand{\Pol}{\operatorname{Pol}}
\newcommand{\pl}{\operatorname{pol}}
\newcommand{\capt}{\operatorname{capt}}
\newcommand{\Exp}{\operatorname{Exp}}
\newcommand{\DExp}{\operatorname{2-Exp}}
\newcommand{\Fin}{\operatorname{Fin}}
\newcommand{\nonprob}{\normalfont{\text{np}}}
\newcommand{\nonprobDP}{\normalfont{\text{dp}}}
\newcommand{\nonprobDT}{\normalfont{\text{dt}}}
\newcommand{\PP}{\mathcal{P}}
\newcommand{\JJ}{\mathcal{J}}
\newcommand{\SSS}{\mathcal{S}}
\newcommand{\TT}{\mathcal{T}}
\newcommand{\Pre}{\operatorname{Pre}}
\newcommand{\tQ}{\mathsf{Q}}
\newcommand{\tstart}{\mathsf{start}}
\newcommand{\tz}{\mathsf{0}}
\newcommand{\tplus}{\mathsf{plus}}
\newcommand{\ts}{\mathsf{s}}
\renewcommand{\O}{\mathcal{O}}
\newcommand{\tf}{\mathsf{f}}
\newcommand{\tq}{\mathsf{q}}
\newcommand{\tg}{\mathsf{g}}
\renewcommand{\th}{\mathsf{h}}
\newcommand{\ta}{\mathsf{a}}
\newcommand{\tb}{\mathsf{b}}
\newcommand{\tc}{\mathsf{c}}
\newcommand{\td}{\mathsf{d}}
\newcommand{\te}{\mathsf{e}}
\newcommand{\ttrue}{\mathsf{true}}
\newcommand{\tfalse}{\mathsf{false}}
\newcommand{\tF}{\mathsf{F}}
\newcommand{\tgeo}{\mathsf{geo}}
\newcommand{\tStart}{\mathsf{Start}}
\newcommand{\tGeo}{\mathsf{Geo}}
\newcommand{\ruleArr}[3]{
  \mathrel{
    \xrightarrow{{}_{\scriptstyle #1}}
    \!\!{}^{#2}_{#3}
  }
}
\newcommand{\tored}[3]{
  \mathrel{
    \xhookrightarrow{{}_{\mathsf{i}}}
    \!\!{}^{#2}_{#3}
  }
}
\newcommand{\PosDPoss}{\pos_{\mathcal{D} \land \lnot \mathtt{NF}_{\R}}}
\newcommand{\PathComp}{c}
\newcommand{\fun}[1]{\mathrm{#1}}
\newcommand{\pos}{\fun{Pos}}
\newcommand{\posT}{\fun{Pos}_{\SignatureA}}
\newcommand{\posD}{\fun{Pos}_{\SignatureD}}
\newcommand{\NF}{\mathtt{NF}} 
\newcommand{\ANF}{\mathtt{ANF}} 
\newcommand{\anno}{\sharp} 
\newcommand{\annoD}{\anno_{\SignatureD}}
\newcommand{\disannoPos}[1]{\flat^{\uparrow}_{#1}}
\newcommand{\SignatureADC}{\Sigma^\sharp}
\newcommand{\SignatureC}{\mathcal{C}}
\newcommand{\SignatureD}{\mathcal{D}}
\newcommand{\SignatureA}{\mathcal{D}^\sharp}
\newcommand{\ctleaf}{\operatorname{Leaf}}
\newcommand{\AST}{\operatorname{\texttt{AST}}}
\newcommand{\maybePAST}{\operatorname{\texttt{(P)AST}}}
\newcommand{\PAST}{\operatorname{\texttt{PAST}}}
\newcommand{\SAST}{\operatorname{\texttt{SAST}}}
\NewDocumentCommand{\dparrow}{+O{} +O{0.5cm}}{%
\begin{tikzpicture}[baseline=-0.5ex] {
\node[inner sep=0](@1) at (0,0) {};
\node[inner sep=0](@2) at (#2,0) {};
\draw [arrows={-Triangle[open]},shorten >= 1pt,shorten <= 1pt](@1)--(@2) node[pos=.5,above,inner sep=1pt] {\ensuremath{#1}};}
\end{tikzpicture}\xspace
}
\NewDocumentCommand{\myto}{+O{} +O{0.5cm}}{%
\begin{tikzpicture}[baseline=-0.5ex] {
\node[inner sep=0](@1) at (0,0) {};
\node[inner sep=0](@2) at (#2,0) {};
\draw [arrows={-to}](@1)--(@2) node[pos=.5,above,inner sep=1pt] {\ensuremath{#1}};}
\end{tikzpicture}\xspace
}
\NewDocumentCommand{\paraarrow}{+O{} +O{0.4cm}}{%
\begin{tikzpicture}[baseline=-0.5ex] {
\node[inner sep=0](@1) at (0,0) {};
\node[inner sep=0](@2) at (#2,0) {};
\node[inner sep=0](@3) at (0.07,0) {};
\draw [arrows={-to}](@1)--(@2) node[pos=.5,above,inner sep=1pt] {\ensuremath{#1}};
\draw [arrows={-to}](@1)--(@3);}
\end{tikzpicture}\xspace
}
\NewDocumentCommand{\paradparrow}{+O{} +O{0.4cm}}{%
\begin{tikzpicture}[baseline=-0.5ex] {
\node[inner sep=0](@1) at (0,0) {};
\node[inner sep=0](@2) at (#2,0) {};
\node[inner sep=0](@3) at (0.07,0) {};
\draw [arrows={-Triangle[open]}](@1)--(@2) node[pos=.5,above,inner sep=1pt] {\ensuremath{#1}};
\draw [arrows={-to}](@1)--(@3);}
\end{tikzpicture}\xspace
}
\newcommand{\oset}[2]{%
  {\mathop{#2}\limits^{\vbox to 1\ex@{\kern-\tw@\ex@
   \hbox{\scriptsize #1}\vss}}}}
\newcommand{\osetthree}[2]{%
  {\mathop{#2}\limits^{\vbox to 3\ex@{\kern-\tw@\ex@
   \hbox{\scriptsize #1}\vss}}}}
\newcommand{\osetfive}[2]{%
  {\mathop{#2}\limits^{\vbox to 5\ex@{\kern-\tw@\ex@
   \hbox{\scriptsize #1}\vss}}}}
\newcommand{\osetminus}[2]{%
  {\mathop{#2}\limits^{\vbox to -2\ex@{\kern-\tw@\ex@
   \hbox{\scriptsize #1}\vss}}}}
\newcommand{\ito}{\mathrel{\smash{\stackrel{\raisebox{2pt}{\tiny $\mathsf{i}\:$}}{\smash{\rightarrow}}}}}
\newcommand{\itor}{\mathrel{\ito_{\R}}}
\newcommand{\itorn}{\mathrel{\ito^{n}_{\mathop{\R}}}}
\newcommand{\TB}{\mathcal{B}\!\mathcal{T}}   
\newcommand{\irc}{\mathrm{rc}}    
\newcommand{\eirc}{\mathrm{erc}}    
\crefname{definition}{Def.}{Def.}
\crefname{example}{Ex.}{Ex.}
\crefname{counterexample}{Counterex.}{Counterex.}
\crefname{appendix}{App.}{App.}
\crefname{ex}{Ex.}{Ex.}
\crefname{theorem}{Thm.}{Thm.}
\crefname{lemma}{Lemma}{Lemmas}
\crefname{remark}{Remark}{Remarks}
\crefname{section}{Sect.}{Sect.}
\crefname{subsection}{Sect.}{Sect.}
\crefname{subsubsection}{Sect.}{Sect.}
\crefname{line}{Line}{Lines}
\crefname{corollary}{Cor.}{Cor.}
\crefname{figure}{Fig.}{Fig.}
\crefname{enumi}{}{}
\crefname{algorithm}{Alg.}{Alg.}
\newcommand{\paper}[1]{}
\newcommand{\report}[1]{#1}
  \theoremstyle{remark}
  \newtheorem{remark}[theorem]{Remark}
\begin{document}

\title[DPs for Expected Innermost Runtime Complexity of Probabilistic Term
  Rewriting]{Dependency Pairs for Expected Innermost Runtime Complexity and Strong
  Almost-Sure Termination of\\Probabilistic Term Rewriting} 

\author{Jan-Christoph Kassing}
\orcid{0009-0001-9972-2470}
\affiliation{%
  \institution{RWTH Aachen University}
  \city{Aachen}
  \country{Germany}
}
\email{kassing@cs.rwth-aachen.de}

\author{Leon Valentin Spitzer}
\affiliation{%
  \institution{RWTH Aachen University}
  \city{Aachen}
  \country{Germany}
}
\email{leon.spitzer@rwth-aachen.de}

\author{Jürgen Giesl}
\orcid{0000-0003-0283-8520}
\affiliation{%
  \institution{RWTH Aachen University}
  \city{Aachen}
  \country{Germany}
}
\email{giesl@informatik.rwth-aachen.de}

\renewcommand{\shortauthors}{Kassing, Spitzer, and Giesl}

\begin{abstract}
  The dependency pair (DP) framework is one of the most powerful tech\-niques for automatic
  termination and complexity analysis of term rewrite systems.
  While DPs were extended to prove almost-sure termination of
  \emph{probabilistic} term rewrite systems (PTRSs), automatic complexity
  analysis for PTRSs  is largely unexplored. We\linebreak introduce the first DP framework for
  analyzing expected complexity
  and for proving \emph{positive} or \emph{strong} almost-sure termination ($\SAST$)\linebreak
  of innermost rewriting with PTRSs, i.e., finite expected runtime.
  We implemented our framework in the tool \aprove{}
  and demonstrate its power compared to 
  existing techniques for proving $\SAST$.
\end{abstract}

\begin{CCSXML}
<ccs2012>
  <concept>
       <concept_id>10003752.10003790.10003798</concept_id>
       <concept_desc>Theory of computation~Equational logic and rewriting</concept_desc>
       <concept_significance>500</concept_significance>
       </concept>
    <concept>
       <concept_id>10003752.10003753.10003757</concept_id>
       <concept_desc>Theory of computation~Probabilistic computation</concept_desc>
       <concept_significance>500</concept_significance>
       </concept>
   <concept>
       <concept_id>10003752.10010124.10010138.10010143</concept_id>
       <concept_desc>Theory of computation~Program analysis</concept_desc>
       <concept_significance>500</concept_significance>
       </concept>
 </ccs2012>
\end{CCSXML}

\ccsdesc[500]{Theory of computation~Equational logic and rewriting}
\ccsdesc[500]{Theory of computation~Probabilistic computation}
\ccsdesc[500]{Theory of computation~Program analysis}

\keywords{Term Rewriting, Dependency Pairs, Probabilistic Programs, Termination Analysis, Complexity Analysis}

\maketitle

\section{Introduction}

\emph{Probabilistic programming} integrates probabilistic 
branching into traditional computer models, with applications in
many areas \cite{Gordon14}.
Probabilities do not only handle uncertainty in data, 
but they can also be used to decrease the expected runtime of algorithms.

\emph{Term rewriting} \cite{baader_nipkow_1999} is a fundamental concept to
transform and evaluate expressions, which is used, e.g., for
symbolic computation,
automated theorem proving, and
automatic program analysis.
There exist many approaches to prove termination or infer bounds on the
runtime complexity of TRSs, for example,
via \emph{ranking functions} like polynomial interpretations
\cite{lankford1979proving}.
One of the most powerful approaches to analyze termination and runtime complexity of TRSs is
the \emph{dependency pair} (DP) framework, see, e.g.,  \cite{arts2000termination,giesl2006mechanizing,
  noschinski2013analyzing,avanzini_combination_2016,hirokawa2005automating}.
It uses a divide-and-conquer approach to transform termination or complexity problems into
simpler subproblems repeatedly (via \emph{DP processors}).
Indeed, DPs are used in essentially all current termination and complexity tools for TRSs,
e.g., \aprove{}~\cite{JAR-AProVE2017}, \muterm{}~\cite{gutierrez_mu-term_2020},
\natt{}~\cite{natt_sys_2014}, \tct{}~\cite{avanzini_tct_2016}, \ttttwo{}~\cite{ttt2_sys}, etc.

\emph{Probabilistic TRSs} (PTRSs) have been introduced in 
\cite{bournez2005proving,BournezRTA02,avanzini2020probabilistic}.
A
PTRS $\R$ is \emph{almost-surely terminating}
($\AST$) if every evaluation (or ``reduction'')
terminates with probability $1$. 
A strictly stronger notion is \emph{positive} $\AST$ ($\PAST$),
where
every reduction must 
consist of a finite expected number of rewrite steps.
An even stronger notion is \emph{strong} $\AST$ ($\SAST$) 
which requires that for every term $t$, the supremum over the expected number of rewrite steps of all possible reductions starting in $t$ is finite.
It is well known that $\SAST$ implies $\PAST$ and that $\PAST$ implies $\AST$.
To see the difference between $\PAST$ and $\SAST$,
note that a term $t$ can start infinitely many reductions of different expected numbers of
rewrite steps due to non-determinism. So even if all reductions are finite, the supremum
of their expected lengths
may be infinite. 
In this paper, we develop an approach to prove
$\SAST$ for PTRSs under an innermost evaluation strategy where 
we only consider reductions starting with \emph{basic} terms
(which represent the application of an algorithm to data objects).
Moreover, our approach computes upper bounds on the \emph{expected innermost runtime complexity} 
of PTRSs.
\emph{Runtime complexity}
is one of the standard notions of complexity for non-probabilistic TRSs \cite{DBLP:conf/cade/HirokawaM08},
and it was adapted to \emph{expected runtime complexity} for PTRSs in \cite{FoSSaCS-Journal}.

\vspace*{.2cm}

\textbf{Related Work:}
There are numerous techniques to
prove $\maybePAST$ for \emph{imperative programs on numbers}, e.g., \cite{ferrerfioritiProbabilisticTerminationSoundness2015,agrawal2017LexicographicRankingSupermartingales,mciver2017new,dblp:journals/pacmpl/huang0cg19,kaminski2018WeakestPreconditionReasoning,chatterjee2020TerminationAnalysisProbabilistic,kaminski2020ExpectedRuntimeAnalyis,abateLearningProbabilisticTermination2021,chatterjee2023LexicographicProofRules,majumdar2025SoundCompleteProof,moosbruggerAutomatedTerminationAnalysis2021a,meyerInferringExpectedRuntimes2021a,lommen2024ControlFlowRefinementComplexity,avanziniModularCostAnalysis2020,ngoBoundedExpectationsResource2018,majumdarPositiveAlmostSureTermination2024}.
In particular, there also exist several \emph{tools} to analyze $\maybePAST$ and expected costs
for imperative probabilistic programs, e.g., 
\tool{Absynth} \cite{ngoBoundedExpectationsResource2018}, 
\tool{Amber} \cite{moosbruggerAutomatedTerminationAnalysis2021a},
\tool{Eco-Imp} \cite{avanziniModularCostAnalysis2020}, and
\tool{KoAT} \cite{meyerInferringExpectedRuntimes2021a,lommen2024ControlFlowRefinementComplexity}.
In addition,
there are also several related approaches
for recursive programs,
e.g., to analyze probabilistic higher-order programs based on\linebreak types or martingales
\cite{rajani2024ModalTypeTheory, dallago2021IntersectionTypesPositive,
  avanzini2019TypebasedComplexityAnalysis,  
dallago2019ProbabilisticTerminationMonadic,DBLP:conf/lics/Kenyon-RobertsO21,beutner2021probabilistic, dallago2024AlmostSureTerminationBinary}, 
or probabilistic imperative languages with recursion 
\cite{kobayashi2020TerminationProblemProbabilistic}.
However,  only few  approaches analyze
probabilistic programs on recursive \emph{data structures}, e.g.,
\cite{wang2020RaisingExpectationsAutomating, leutgebAutomatedExpectedAmortised2022, batzCalculusAmortizedExpected2023}.
While \cite{batzCalculusAmortizedExpected2023} uses pointers to represent data structures like tables and lists, \cite{ wang2020RaisingExpectationsAutomating, leutgebAutomatedExpectedAmortised2022} consider a 
probabilistic programming language with matching similar to term rewriting
and develop an automatic amortized resource analysis via fixed template potential functions.
However, these works are mostly targeted towards specific data structures, whereas our aim is a
fully automatic approach for
general PTRSs that can model arbitrary data structures.

Currently, the only approach
to analyze $\SAST$ of PTRSs automatically
is the direct application of
polynomial or matrix interpretations \cite{DBLP:journals/jar/EndrullisWZ08} to the whole
PTRS \cite{avanzini2020probabilistic}, implemented in \natt{}.
However, already for non-probabilistic TRSs such a direct
application of orderings is limited in power. For a powerful approach, one
should combine orderings in a modular way, as in the DP framework.

Therefore,
we already adapted the DP framework to the probabilistic setting in
order to prove $\AST$, both for innermost \cite{kassinggiesl2023iAST,FLOPS2024} and full
rewriting \cite{JPK60}.
Moreover, in the non-probabilistic setting, DPs were extended to analyze
complexity instead of just termination, see, e.g., \cite{noschinski2013analyzing,avanzini_combination_2016}.
But up to now there did not exist any DP framework to prove $\SAST$ or $\PAST$, or to infer
bounds on the expected runtime of PTRSs. 
In this paper we show that the DP framework for $\AST$ from \cite{FLOPS2024} which uses
\emph{annotated dependency pairs} can be lifted to a novel DP framework for expected
complexity of PTRSs.

Moreover, in  \cite{FoSSaCS2024} we presented criteria for classes of PTRSs where
$\maybePAST$ for innermost rewriting implies $\maybePAST$ for full rewriting, and in
\cite{FoSSaCS-Journal} these criteria were extended to $\SAST$ and expected runtime
complexity.
  Thus,
  they can also be used in order to
 infer
$\SAST$ and expected runtime complexity for full instead of innermost rewriting
via our novel DP framework,
see \Cref{sec:Evaluation}.

\medskip
\textbf{Main Results of the Paper:}
\begin{itemize}
    \item We develop the first DP framework for $\SAST$ and expected innermost 
    runtime complexity of probabilistic TRSs.
    \item We introduce several processors for our novel DP framework.
    \item To evaluate the power of our novel framework, we implemented it in the tool
      \aprove.
\end{itemize}

\medskip

\textbf{Structure:}
We recapitulate (probabilistic) term rewriting in \Cref{sec:Preliminaries}.
In \Cref{sec:ADPs}, we introduce our novel DP framework for $\SAST$ and expected
runtime complexity.
Afterwards, we present several
processors that can be used in our framework in \Cref{sec:Framework}.
In \Cref{sec:Evaluation}, we give experimental results, also
in comparison to the technique of \cite{avanzini2020probabilistic}.\report{ The proofs of
  all our results can be found in \Cref{Additional Theory}.}\paper{ We refer to \cite{PPDPreport}
  for the proofs of all our results.}

\section{Preliminaries}\label{sec:Preliminaries}

We recapitulate ordinary
and probabilistic
TRSs in \Cref{sec:TRS,sec:PTRS}.

\subsection{Term Rewriting}\label{sec:TRS}

We assume some familiarity with term rewriting~\cite{baader_nipkow_1999},
but recapitulate all needed notions.
For any relation $\to \; \subseteq A \times A$ on a set $A$ and\linebreak $n \in \IN$,
we define $\to^{n}$ as 
$\to^{0} \;=\; \{(a,a) \mid a \in A\}$ and $\to^{n+1} \;=\; \to^{n} \circ\linebreak \to$, 
where ``$\circ$'' denotes composition of relations. Moreover,  $\to^* = \bigcup_{n\in \IN}\to^{n}$, 
i.e., $\to^*$ is the reflexive and transitive closure of $\to$.

The set $\TSet{\Sigma}{\VSet}$ of all \emph{terms} over a finite 
set of \emph{function symbols} $\Sigma = \biguplus_{k \in \IN} \Sigma_k$ 
and a (possibly infinite) set of \emph{variables} $\VSet$
is the smallest set with $\VSet \subseteq \TSet{\Sigma}{\VSet}$, 
and if $f \in \Sigma_k$ and $t_1, \dots, t_k \in \TSet{\Sigma}{\VSet}$ then $f(t_1,\dots,t_k) \in \TSet{\Sigma}{\VSet}$.
If $\Sigma$ and $\VSet$ are clear from the context, then we just write $\TT$ instead of $\TSet{\Sigma}{\VSet}$.
The \emph{arity} of a function symbol $f \in \Sigma_k$ is $k$.
For example, consider the signature $\Sigma_{\tq} = \{\tq, \tstart, \ts, \tz\}$, 
where $\tz$ has arity $0$, $\ts$ has arity $1$, $\tstart$ has arity $2$, and $\tq$ has arity $3$.
Then, for $x \in \VSet$,
$\tstart(\ts(\tz),\ts(\tz))$ and $\tq(\tz, x, x)$ are terms in
$\TSet{\Sigma_{\tq}}{\VSet}$.
For a non-variable term $t = f(t_1, \ldots, t_k)$, we define $\rootsym(t) = f$.
A term without variables is called a \emph{ground} term.
The \emph{size} $|t|$ of a term $t$ is the number of occurrences of
function symbols and variables in $t$, i.e.,
$|t| = 1$ if $t \in \VSet$, and $|t| = 1 + \sum_{j=1}^k |t_j|$ if $t = f(t_1, \ldots, t_k)$.
Thus, $|\tstart(\ts(\tz),\ts(\tz))| = 5$ and $|\tq(\tz, x, x)| = 4$.
A \emph{substitution} is a function $\sigma:\VSet \to \TSet{\Sigma}{\VSet}$ with $\sigma(x) = x$ for all but finitely many $x \in \VSet$, 
and we often write $x\sigma$ instead of $\sigma(x)$.
Substitutions homomorphically extend to terms: if $t=f(t_1,\dots,t_k)\in \TT$ then $t \sigma
= f(t_1\sigma,\dots,t_k \sigma)$.
Thus, for a substitution $\sigma$ with $\sigma(x) = \ts(x)$ we obtain $\tq(\tz, x, x) \sigma
= \tq(\tz, \ts(x), \ts(x))$. 
For any term $t \in \TT$, the set of \emph{positions} $\pos(t)$ 
is the smallest subset of $\IN^*$ satisfying $\varepsilon \in \pos(t)$, 
and if $t=f(t_1,\dots,t_k)$ then for all $1 \leq j \leq k$  and all $\pi \in \pos(t_j)$ we
have $j.\pi \in \pos(t)$.  A position $\pi_1$ is \emph{above}
$\pi_2$ if $\pi_1$ is a prefix of $\pi_2$.
If $\pi \in \pos(t)$ then $t|_{\pi}$ denotes the subterm starting at position $\pi$
and $t[r]_{\pi}$ denotes the term that results from replacing the subterm $t|_{\pi}$ at position $\pi$ with the term $r \in \TT$.
We write $s \trianglelefteq t$ if $s$ is a subterm of $t$ and $s \vartriangleleft t$ 
if $s$ is a \emph{proper} subterm of $t$ (i.e., if $s \trianglelefteq t$ and $s \neq t$).
For example, we have $\pos(\tq(\tz, x, x)) = \{\varepsilon, 1, 2, 3\}$, $\tq(\tz, x, x)|_{2} = x$, 
$\tq(\tz, x, x)[\ts(x)]_{2} = \tq(\tz, \ts(x), x)$, and $\ts(x) \vartriangleleft \tq(\tz, \ts(x), x)$.

A \emph{rewrite rule} $\ell \to r$ is a pair of terms $(\ell, r) \in \TT \times \TT$ 
such that $\VSet(r) \subseteq \VSet(\ell)$ and $\ell \notin \VSet$, 
where $\VSet(t)$ denotes the set of all variables occurring in $t \in \TT$.
A \emph{term rewrite system} (TRS) is a finite set of rewrite rules.
As an example, consider the
following TRS $\R_{\tq}$
that is used to compute the rounded
\textbf{q}uotient of two natural numbers
(represented by the successor function $\ts$ and $\tz$) \cite{arts2000termination}.
\[ \begin{array}{@{}r@{\;}c@{\;}l@{\quad}r@{\;}c@{\;}l}
  \tstart(x,y) &\to& \tq(x,y,y) & \tq(x,\tz,\ts(z)) &\to& \ts(\tq(x,\ts(z),\ts(z))) \\
\tq(\ts(x),\ts(y),z) &\to& \tq(x,y,z) &\tq(\tz,\ts(y),\ts(z)) &\to& \tz
\end{array}\]
A TRS $\R$ induces a \emph{rewrite relation} ${\to_{\R}} \subseteq \TT
\times \TT$ on terms where $s \to_{\R} t$ holds if there is a position $\pi \in \pos(s)$, 
a rule $\ell \to r \in \R$, and a substitution $\sigma$ 
such that $s|_{\pi}=\ell\sigma$ and $t = s[r\sigma]_{\pi}$.
Let $\NF_{\R}$ denote the set of all terms that are in \emph{normal form}
w.r.t.\ $\to_{\R}$, i.e., $s \in \NF_{\R}$ if there is no term $t$ with $s \to_{\R}  t$. 

A rewrite step $s \to_{\R} t$ is an \emph{innermost} step (denoted $s \itor t$) if
 $\ell\sigma \in \ANF_{\R}$, where $\ANF_{\R}$ is the set of all terms in  \emph{argument
normal form} w.r.t.\ $\to_{\R}$, i.e., $t \in \ANF_{\R}$ iff $t' \in \NF_{\R}$ for all
proper subterms $t'\vartriangleleft t$.
For example, $\tstart(\ts(\tz),\ts(\tz)) \in \ANF_{\R_{\tq}}$ and
$\tstart(\tstart(\tz,\tz),\linebreak
\ts(\tz)) \notin \ANF_{\R_{\tq}}$ since it contains the proper
subterm $\tstart(\tz,\tz)$, which is not a normal form.
Thus, the rewrite step $\tstart(\tstart(\tz,\tz),\linebreak \ts(\tz)) \to_{\R_{\tq}}
\tq(\tstart(\tz,\tz),\ts(\tz),\ts(\tz))$ is not an innermost rewrite step, as it takes
place at the position of a subterm that is not in argument normal form.
The TRS $\R_{\tq}$ computes $\lfloor \frac{n}{m} \rfloor$ when
starting with the term $\tstart(\ts^n(\tz),\ts^m(\tz))$, where $\ts^n(\ldots)$ denotes $n \in \IN$
successive $\ts$-function symbols.
For instance,
we have $\lfloor \frac{1}{1} \rfloor = 1$ and
\[ \begin{array}{@{}l@{\;\;}c@{\;\;}l@{\;\;}c@{\;\;}l}
  \tstart(\ts(\tz),\ts(\tz)) &\ito_{\R_{\tq}}& \tq(\ts(\tz),\ts(\tz),\ts(\tz)) &\ito_{\R_{\tq}}& \tq(\tz,\tz,\ts(\tz))
  \\  
  &\ito_{\R_{\tq}} &  \ts(\tq(\tz,\ts(\tz),\ts(\tz))) &\ito_{\R_{\tq}} &\ts(\tz).
  \end{array}\]
Essentially, innermost rewriting corresponds to a call-by-value strategy. Note that
innermost rewriting allows us to  reduce arbitrary subterms in argument normal form,
i.e., there is no fixed preference among
several parallel subterms in argument normal form, as in a leftmost-innermost strategy.

Already for non-probabilistic TRSs, the techniques for termination and complexity
analysis of \emph{innermost} rewriting are significantly stronger than the ones for ``full''
rewriting where arbitrary rewrite sequences are allowed (the same holds for the
probabilistic DP framework for $\AST$ in \cite{kassinggiesl2023iAST,FLOPS2024,JPK60}).  Moreover, 
innermost evaluation is the standard strategy for most programming languages.
Hence, in the remainder, we restrict ourselves to innermost rewriting.

The
  \emph{derivation height} \cite{DBLP:conf/rta/HofbauerL89} of a term $t$
is the length of the longest $\itor$-sequence starting with $t$, i.e.,
\[\idh_\R(t) = \sup \{ n \in \mathbb{N} \mid 
\exists t' \in \Trans \text{ such that } t \itorn t' \} \; \in \IN \cup \{\omega\}.\] 
For example, 
$\idh_{\R_{\tq}}(\tstart(\ts(\tz),\ts(\tz))) = 4$.
We have  $\idh_\R(t) = 0$ iff
$t \in \NF_\R$ and
$\idh_\R(t) = \omega$ iff $t$ starts an infinite sequence of $\itor$-steps, as
we restricted ourselves to finite TRSs.

We decompose the signature $\Sigma = \SignatureD_\R \uplus \SignatureC_\R$
into \emph{defined symbols} $\SignatureD_\R = \{ \rootsym(\ell) \mid \ell\!\to\! r \in \R \}$
and  \emph{constructors} $\SignatureC_\R$.
If $\R$ is clear from\linebreak the
context, we just write $\SignatureC$ and $\SignatureD$.
A term $f(t_1, \ldots , t_k)$ is  \emph{basic} 
if $f \in \SignatureD_\R$ and $t_1, \ldots, t_k \in \Trans(\SignatureC_\R, \Var)$, i.e.,
$t_1, \ldots , t_k$
do not contain\linebreak defined symbols.
Thus, basic terms represent an algorithm $f$ that is applied to data $t_1, ..., t_k$.
So for $\R_{\tq}$, $\tq(\tz,x,x)$ is basic, but
$\tq(\tq(\tz,x,x),\linebreak x,x)$ is not.
Let $\TB_\R$ denote the set of basic terms for the TRS $\R$.

The \emph{runtime complexity} $\irc_\R$ is a function that maps any $n \in \IN$ to
the maximum derivation height for basic terms of size $\leq n$.

\begin{definition}[Runtime Complexity, $\irc_\R$ \cite{DBLP:conf/cade/HirokawaM08}]\label{def:innermost-runtime-complexity}
    For a TRS $\R$, its \textit{runtime complexity function} $\irc_{\R}:\IN
    \to \IN \cup \{ \omega \}$
    is:
    \[
    \irc_\R(n) = \sup \{ \idh_\R(t) \mid t \in \TB_\R, |t| \leq n \}
    \]
\end{definition}

Given  a TRS $\R$, our goal is to determine an upper bound on the \textit{asymptotic
  complexity} of the function $\irc_\R$.

\begin{definition}[Asymptotic Complexities]
    We consider a set of complexities $\F{C} = \{\Pol_0, \Pol_1, \Pol_2, \ldots, \Exp,
    \DExp, \Fin, \omega\} $ 
    with the order $ \Pol_0 \sqsubset \Pol_1 \sqsubset \Pol_2 \sqsubset \ldots \sqsubset
    \Exp \sqsubset   \DExp \sqsubset \Fin \sqsubset \omega$, where
 $\sqsubseteq$ is the reflexive closure of $\sqsubset$.
    For any function $f : \mathbb{N} \to \mathbb{N} \cup \{\omega\}$, 
    we define its \emph{complexity} $\iota(f) \in \mathfrak{C}$ as follows:
    \[
    \iota(f) =
    \begin{cases} 
        \, \Pol_a & \text{if $a\!\in\!\mathbb{N}$ is the smallest number with $f(n)\!\in\!\O(n^a)$} \\
        \, \Exp & \text{if no such $a$ exists, but there is a} \\[-.1cm]
        & \text{polynomial $\pl(n)$ such that $f(n) \in
          \O(2^{\pl(n)})$}\\
    \, \DExp \!\!\!\!& \text{if no such polynomial exists, but there is a} \\[-.1cm]
        & \text{polynomial $\pl(n)$ such that $f(n) \in
      \O(2^{2^{\pl(n)}})$}\\
    \, \Fin & \text{if no such polynomial exists,}\\[-.1cm]
        &\text{but there is no $n \in \IN$ with
          $f(n) =  \omega$} \\
        \, \omega & \text{if there is an $n \in \IN$ with $f(n) = \omega$}
    \end{cases}
    \]
    For any TRS $\R$, its (asymptotic) \emph{runtime complexity} $\iota_\R$ is $\iota(\irc_\R)$.
     \end{definition}

The TRS $\R_{\tq}$ has linear runtime complexity,
i.e., $\iota_{\R_{\tq}} = \iota(\irc_{\R_{\tq}}) = \Pol_1$. For example,
any rewrite sequence starting with the basic term $\tstart(\ts^n(0), \ts^m(0))$ 
has at most $2n + 2$ rewrite steps.

Finally, we 
recapitulate a first approach to prove termination 
and to infer upper bounds on the runtime complexity via 
polynomial\footnote{In this paper, we focus on polynomial interpretations with natural coefficients for simplicity, 
but our results can be extended to other interpretations 
where one can define an addition and an expected value operation,
e.g., matrix interpretations \cite{DBLP:journals/jar/EndrullisWZ08}.} interpretations.
A \emph{polynomial interpretation}  is a $\Sigma$-algebra $\I: \Sigma \to \IN[\Var]$ 
that maps every function symbol $f \in \Sigma_k$ to a polynomial $\I_f$ over $k$ variables
with natural coefficients. As usual, $\I$
is homomorphically extended to terms.
$\I$ is \emph{monotonic} if $x > y$ implies $\I_f(\ldots,x,\ldots) > \I_f(\ldots,y,\ldots)$ 
for all $f \in \Sigma$ and $x,y \in \IN$.
We call $\I$ a \textit{complexity polynomial interpretation} (CPI)
if 
for all constructors $f \in \SignatureC$ we have $\I_{f}(x_1, \ldots, x_k) = a_1 x_1 + \ldots + a_k x_k + b$,
where $b \in \IN$ and $a_j \in \{0,1\}$
for all $1 \leq j \leq k$.\footnote{For monotonic CPIs one must have $a_j =
1$, but in \Cref{sec:red-pair} we will consider weakly monotonic CPIs, where $a_j \in \{0,1\}$ is
possible.}
While arbitrary monotonic polynomial interpretations can be used to 
prove termination of TRSs \cite{lankford1979proving},
monotonic CPIs are needed to infer a polynomial runtime bound from 
such a termination proof \cite{bonfante2001algorithms,DBLP:conf/rta/HofbauerL89}.
So the reason for
  the name CPI is that only such polynomial interpretations give rise to polynomial upper bounds when analyzing runtime complexity.

More precisely, if
there is a monotonic  polynomial interpretation $\I$ such that
$\I(\ell) > \I(r)$  holds
for every rule $\ell \to r \in \R$, then the TRS
$\R$ is terminating and $\iota_\R \sqsubseteq \DExp$,
and if all constructors are interpreted by linear polynomials, then we have
$\iota_\R \sqsubseteq \Exp$\report{, see
  \Cref{subsec:poly-interpret}}\paper{, see  \cite{PPDPreport}}.

But if $\I$ is a monotonic CPI, 
we even have
$\iota_\R \sqsubseteq \Pol_a$
if for all $f \in \SignatureD$, the polynomial $\I_f$
has at most degree $a$.
The reason is that this implies $\I(t) \in \O(|t|^a)$ for all basic ground terms
$t \in \TB_\R$.
(More\linebreak  precisely, the function  that maps any $n \in \IN$ to $\sup \{ \I(t) \mid t \in
\TB_\R,\linebreak \VSet(t) = \emptyset, 
|t| \leq n  \}$ is in $\O(n^a)$.)
Since every
rewrite step decreases the interpretation of the term by at least $1$,
the length of each rewrite sequence starting with a basic ground term of size $\leq n$ 
is  in  $\O(n^a)$.

This direct approach is only feasible for simple examples like the
TRS $\R_{\tplus} = \{\tplus ( \tz , y ) \to y, \;
\tplus ( \ts(x) , y ) \to \ts( \tplus ( x , y ) )\}$,
computing the addition of two natural numbers.
Let $\I$ be a monotonic
CPI with $\I_{\tz} = 0$, $\I_{\ts}(x) = x + 1$, and $\I_{\tplus}(x,y) = 2x + y + 1$.
We have $\I(\tplus(\ts(x), y)) = 2x+y+3 > 2x+y+2 = \I(\ts(\tplus(x,y)))$,
and $\I(\tplus ( \tz , y )) = y + 1 > y = \I(y)$.
Thus,  $\R_{\tplus}$ is terminating
and has at most linear runtime complexity, i.e., $\iota_{\R_{\tplus}} \sqsubseteq \Pol_1$
(in fact, we have $\iota_{\R_{\tplus}} = \Pol_1$).
To automate this approach, one can use SMT solvers  to search for a suitable CPI $\I$.
However, such a direct application of polynomials fails
to prove termination or to infer a
polynomial upper bound on $\iota_{\R_{\tq}}$.
In contrast, 
 $\iota_{\R_{\tq}} \sqsubseteq \Pol_1$
can be proved
by more elaborate techniques like dependency pairs, see, e.g., \cite{noschinski2013analyzing}.
Indeed, we will show how to analyze
a probabilistic version of $\R_{\tq}$ with our novel DP framework.

\subsection{Probabilistic Rewriting}\label{sec:PTRS}

A probabilistic TRS has finite multi-distributions on the right-hand sides of its rewrite rules.
A finite \emph{multi-distribution} $\mu$ on a set $A \neq \emptyset$ is a 
finite multiset of pairs $(p:a)$, where $0 < p \leq 1$ is a probability 
and $a \in A$, such that $\sum _{(p:a) \in \mu}p = 1$.
Let $\FDist(A)$ denote the set of all finite multi-distributions on $A$.
For $\mu\in\FDist(A)$, its \emph{support} is the multiset $\Supp(\mu)\!=\!\{a \mid (p\!:\!a)\!\in\!\mu$ for some $p\}$.
A \emph{probabilistic rewrite rule} $\ell \to \mu$ is a pair $(\ell, \mu) \in \TT \times \FDist(\TT)$ 
such that $\ell \not\in \VSet$ and $\VSet(r) \subseteq \VSet(\ell)$ for every $r \in \Supp(\mu)$.
A \emph{probabilistic TRS} (PTRS)\linebreak is a finite set of probabilistic rewrite rules.
Similar to TRSs, a PTRS $\R$ induces a \emph{(probabilistic) rewrite relation} 
${\to_{\R}} \subseteq \TT \times \FDist(\TT)$ where
$s \to_{\R} \{p_1:t_1, \ldots, p_k:t_k\}$  
if there is a position $\pi \in \pos(s)$, a rule $\ell \to \{p_1:r_1, \ldots, p_k:r_k\} \in \R$, 
and a substitution $\sigma$ such that $s|_{\pi}=\ell\sigma$ 
and $t_j = s[r_j\sigma]_{\pi}$ for all $1 \leq j \leq k$.
We call $s \to_{\R} \mu$ an \emph{innermost} rewrite step (denoted $s \ito_{\R} \mu$)
if $\ell\sigma \in \ANF_{\R}$.
Consider the PTRS $\R_{\tgeo}$ with the only rule $\tgeo(x) \to
\{\nicefrac{1}{2}:\tgeo(\ts(x)), \; \nicefrac{1}{2}:x \}$. When starting with the term
$\tgeo(\tz)$, it
computes the representation $\ts^k(\tz)$ of the number $k \in \IN$ with a probability of $(\nicefrac{1}{2})^{k+1}$, 
i.e., a geometric distribution.

To track innermost rewrite sequences with their probabilities,
we consider (potentially infinite) \emph{rewrite sequence trees (RSTs)}~\cite{FLOPS2024}.
The \pagebreak[3] nodes $v$ of an $\R$-RST are labeled by pairs $(p_v:t_v)$ of a
probability $p_v \in (0,1]$ and a term $t_v$,  where
the root always has the probability $1$. 
For each node $v$ with successors $w_1, \ldots, w_k$, 
the edge relation represents an innermost rewrite step,
i.e., $t_v \ito_{\R} \{\tfrac{p_{w_1}}{p_v}:t_{w_1}, \ldots, \tfrac{p_{w_k}}{p_v}:t_{w_k}\}$.
For an $\R$-RST $\F{T}$,
$V^{\F{T}}$ denotes its set of nodes, $\rootterm(\F{T})$ is the term at its root, 
and $\ctleaf^{\F{T}}$ denotes its set of leaves.
A finite RST for ${\R_{\tgeo}}$ is shown 
in \Cref{fig:RST_example}, but of course this RST could be extended further to an infinite one.

\begin{figure}
	\centering
    \begin{tikzpicture}
        \tikzstyle{adam}=[thick,draw=black!100,fill=white!100,minimum size=4mm,
          shape=rectangle split, rectangle split parts=2,rectangle split horizontal,font={\footnotesize}]
        \tikzstyle{empty}=[rectangle,thick,minimum size=4mm]
        
        \node[adam] at (0, 0)  (a) {$1$\nodepart{two}$\tgeo(\tz)$};

        \node[adam] at (-2, -0.5)  (b) {$\nicefrac{1}{2}$\nodepart{two}$\tgeo(\ts(\tz))$};
        \node[adam,label=below:{\footnotesize $\quad \NF_{\R_{\tgeo}}$}] at (2, -0.5)  (c) {$\nicefrac{1}{2}$\nodepart{two}$\tz$};

        \node[adam] at (-4, -1.3)  (d) {$\nicefrac{1}{4}$\nodepart{two}$\tgeo(\ts(\ts(\tz)))$};
        \node[adam,label=below:{\footnotesize $\quad \NF_{\R_{\tgeo}}$}] at (0, -1.1)  (e) {$\nicefrac{1}{4}$\nodepart{two}$\ts(\tz)$};
        
        \draw (a) edge[->] (b);
        \draw (a) edge[->] (c);
        \draw (b) edge[->] (d);
        \draw (b) edge[->] (e);
    \end{tikzpicture}\vspace*{-.3cm}
    \caption{\vspace*{-.3cm}${\R_{\tgeo}}$-Rewrite sequence tree starting with $\tgeo(\tz)$}\label{fig:RST_example}
\end{figure}

A PTRS $\R$ is \emph{almost-surely terminating} ($\AST$) if $\sum_{v \in \ctleaf^{\F{T}}}
p_v = 1$ holds for all $\R$-RSTs $\F{T}$, i.e., if the probability of termination is always
1. This notion of $\AST$ for PTRSs is equivalent to the ones in
\cite{avanzini2020probabilistic,bournez2005proving,kassinggiesl2023iAST}
where AST is defined via a lifting of $\ito_{\R}$
to multisets or via stochastic processes.
However,  $\AST$ is not sufficient to guarantee that 
the expected runtime complexity of a PTRS is finite. To define this concept
formally, we first introduce
the \emph{expected derivation length} of
an $\R$-RST $\F{T}$
as
\[\textstyle \edl(\F{T}) = \sum_{v \in V^{\F{T}} \setminus \ctleaf^{\F{T}}} p_v.\]
So $\edl(\F{T})$ adds up the probabilities of all
rewrite steps in $\F{T}$.
Thus, for the RST $\F{T}$ in \cref{fig:RST_example} we obtain $\edl(\F{T}) = 1 +
\nicefrac{1}{2} = \nicefrac{3}{2}$, i.e., in expectation we perform $\nicefrac{3}{2}$ rewrite steps in $\F{T}$.
Then a  PTRS $\R$ is \emph{positively almost-surely terminating} ($\PAST$)
if
$\edl(\F{T})$ is finite for all $\R$-RSTs $\F{T}$.
Again, this notion of $\PAST$ for PTRSs is equivalent to the
ones in 
\cite{avanzini2020probabilistic,bournez2005proving}.
Clearly, $\PAST$ implies $\AST$, but not vice versa
(e.g., a PTRS with the rule $\tg \to
\{ \nicefrac{1}{2}: \tc(\tg,\tg), \; \nicefrac{1}{2}:\tz \}$ 
which represents a symmetric random walk is $\AST$, but not $\PAST$). 

To adapt the notions for complexity from TRSs to PTRSs,
recall that in the non-probabilistic setting, the \emph{derivation height} of a term $t$
does not consider a fixed rewrite sequence, but
all possible rewrite sequences starting with $t$ and takes the supremum of
their lengths. Similarly, 
while $\edl$ considers a fixed RST $\F{T}$,
for the \emph{expected derivation height} of a term $t$, we consider all possible
RSTs with root $t$ and take the supremum of their expected derivation lengths. 
So the  \emph{expected derivation height} of a  term $t$ is 
\[\edh_\R(t) = \sup\{\edl(\F{T}) \mid \F{T} \text{ is an $\R$-RST with $\rootterm(\F{T}) =
t$}\}.\]
Now we can
adapt the notion of 
 runtime complexity 
 to  PTRSs.

\begin{definition}[Expected  Runtime Complexity, $\eirc_{\R}$]\label{def-signature}
    For a PTRS $\R$, its \textit{expected runtime complexity 
    function} $\eirc_{\R}:\IN \to \IN \cup \{ \omega \}$ is:
    \[
    \eirc_{\R}(n) = \sup \{ \edh_\R(t) \mid t \in \TB_\R, |t| \leq n \}
    \]
 Moreover, we  define $\R$'s \emph{expected runtime complexity} $\iota_\R$ as $\iota(\eirc_\R)$.
\end{definition}

A PTRS $\R$ is \emph{strongly} or \emph{bounded almost-surely terminating} ($\SAST$)
if $\edh_\R(t)$ is finite for every term $t$. So in contrast to $\PAST$, here one requires
a finite bound on the expected derivation
lengths of all RSTs with the same term at the root. Such notions of $\SAST$ were defined in, e.g., 
\cite{majumdarPositiveAlmostSureTermination2024,avanzini2020probabilistic,dblp:conf/vmcai/fuc19}.
$\SAST$ implies $\PAST$,
but not vice versa (a PTRS with finitely many rules that is $\PAST$ but not
$\SAST$ is given in \cite{FoSSaCS-Journal}). However, as also shown in \cite{FoSSaCS-Journal}, $\SAST$ and $\PAST$ are
``almost always'' equivalent for finite PTRSs (e.g.,  whenever \pagebreak[3] the
signature contains a function symbol of arity $\geq 2$). 

\begin{table}
  \centering
  \small
    \begin{tabular}{@{}cc@{}}
    \toprule
    \textbf{Notation} & \textbf{Description} \\
    \midrule
    $\TT$ & \makecell[c]{Set of all terms} \\
    \midrule
    $\TB_{\R}$ & \makecell[c]{Set of all basic terms for a PTRS $\R$} \\
    \midrule
    $\edl(\F{T})$ & \makecell[c]{Expected derivation length of an RST $\F{T}$ \\
    (Sum of all probabilities of inner nodes)}\\
    \midrule
    $\edh_\R(t)$ & \makecell[c]{Expected derivation height of a term $t$ w.r.t.\ a PTRS
      $\R$\\ (Supremum over all $\edl(\F{T})$ for $\R$-RSTs $\F{T}$ starting with $t$)} \\
    \midrule
    $\eirc_{\R}(n)$ & \makecell[c]{Expected runtime complexity function\\
(Supremum over all $\edh_\R(t)$ for basic terms $t$ 
   of size $\leq n$)} \\
    \midrule
    $\iota_{\R}$ & \makecell[c]{Expected runtime complexity $\iota(\eirc_{\R})$ of a PTRS $\R$} \\
    \midrule
    $\AST$ & \makecell[c]{Sum of all probabilities of leaves is 1 for every $\R$-RST} \\
    \midrule
    $\PAST$ & \makecell[c]{$\edl(\F{T})$ is finite for every $\R$-RST $\F{T}$} \\
    \midrule
    $\SAST$ & \makecell[c]{$\edh_\R(t)$ is finite for every basic term $t$} \\
    \bottomrule
  \end{tabular}
\caption{Key notions}\label{fig:key_notions}
\vspace{-25pt}
\end{table}

As mentioned, in this paper we restrict ourselves to innermost reductions that
start with \emph{basic} terms. So we regard a PTRS $\R$ to be\linebreak
$\SAST$ if 
$\edh_\R(t)$ is finite for all \emph{basic} terms $t$, or equivalently, 
$\eirc_{\R}(n)\linebreak \neq \omega$ for all $n \in \IN$. Thus,
we use the following
definition of $\SAST$.

\begin{definition}[Strong Almost-Sure Termination, $\SAST$]\label{def:SAST}
    A PTRS $\R$ is called \emph{strongly almost-surely terminating} ($\SAST$)
      if $\iota_{\R} \sqsubseteq \Fin$.
\end{definition}

  \vspace*{-.1cm}
We summarized all introduced key notions in \Cref{fig:key_notions}.
  \vspace*{-.1cm}

\begin{example}[Leading Examples]\label{ex:LeadingExamples}
Consider the following PTRS  $\R_1$, which is based
  on the  previously defined systems $\R_{\tq}$ and $\R_{\tgeo}$.

  \vspace*{-.3cm}
  
{\small
  \begin{align*} 
 \R_1: \quad\qquad\qquad   \tstart(x,y) &\to \{1:\tq(\tgeo(x),y,y)\}\\[-.08cm]
    \tgeo(x) &\to \{\nicefrac{1}{2}:\tgeo(\ts(x)), \; \nicefrac{1}{2}:x\}\\[-.08cm]
     \tq(\ts(x),\ts(y),z) &\to \{1:\tq(x,y,z)\}\\[-.08cm]
    \tq(x,\tz,\ts(z)) &\to \{1:\ts(\tq(x,\ts(z),\ts(z)))\}\\[-.08cm]
    \tq(\tz,\ts(y),\ts(z)) &\to \{1:\tz\}
\end{align*}}

When starting with $\tstart(\ts^n(\tz),\ts^m(\tz))$,
$\R_1$ computes $\lfloor \frac{n + \geo(\tz)}{m} \rfloor$,
i.e., it first increases $n$ according to a geometric distribution,
and then computes the quotient like $\R_{\tq}$.
Thus, $\iota_{\R_1} = \Pol_1$, since $\R_{\tq}$ has linear runtime complexity
and the geometric distribution only increases $n$ by $2$ in expectation.
So in particular, $\R_1$ is $\SAST$.

Moreover, consider the PTRS $\R_2$ with the rules:

\vspace*{-.3cm}

{\small \begin{align*}
 \R_2: \quad\qquad\qquad  \tstart &\to \{1:\tf(\tgeo(\tz))\}\\[-.08cm]
    \tgeo(x) &\to \{\nicefrac{1}{2}:\tgeo(\ts(x)), \; \nicefrac{1}{2}:x\}\\[-.08cm]
    \tf(\ts(x)) &\to \{1:\tf(\tc(x,x))\}\\[-.08cm]
    \tf(\tc(x,y)) &\to \{1:\tc(\tf(x),\tf(y))\}\!
  \end{align*}
}

The two $\tf$-rules have exponential runtime complexity,
as a reduction starting in $\tf(\ts^n(\tz))$ creates a full binary tree
of height $n$ and visits every inner node once.
When beginning with the term $\tstart$, $\R_2$ 
first generates the term $\tf(\ts^k(\tz))$
with probability
  $(\nicefrac{1}{2})^{k+1}$ and then takes at least $2^k$ steps to terminate.
The
expected derivation length of the corresponding RST is at least $\sum_{k=0}^{\infty}
(\nicefrac{1}{2})^{k+1} \cdot 2^k = \sum_{k=0}^{\infty} \nicefrac{1}{2} = \omega$. 
Hence,  $\iota_{\R_2} = \omega$, i.e., $\R_2$ is not $\SAST$.
\end{example}

\section{Annotated Dependency Pairs}\label{sec:ADPs}

In \Cref{sec:ADP-problems}
we define \emph{annotated dependency pairs}. While such dependency pairs
were used to prove $\AST$ in
\cite{FLOPS2024} and relative termination of TRSs in \cite{kassing2024DependencyPairFramework}, 
we now develop a new criterion in order to use them for complexity analysis of PTRSs.
Afterwards, in \Cref{sec:ADP-Framework} we introduce the general idea of our novel
framework in order to derive upper bounds on the expected runtime complexity of PTRSs.

\subsection{ADP Problems}\label{sec:ADP-problems}

The core idea of the  \emph{dependency pair framework} for termination of TRSs
\cite{arts2000termination,giesl2006mechanizing} is the following: \emph{a function is terminating iff 
the arguments of each recursive function call are decreasing w.r.t.\ some well-founded ordering}.
Hence, for every defined symbol $f \in \SignatureD$ one introduces a fresh
\emph{tuple} or \emph{annotated}
symbol $f^{\sharp}$ that is used to compare the arguments of two successive calls of $f$.
Let $\Sigma^\sharp = \Sigma \cup \SignatureA$ with $\SignatureA = \{f^\sharp \mid f \in
\D\}$, and for any
$\Sigma' \subseteq \Sigma \cup \VSet$,
let $\pos_{\Sigma'}(t)$ be all positions of $t$ with symbols or variables from   $\Sigma'$.
For any $t = f(t_1, \ldots,t_k) \in \TT$ with $f \in \SignatureD$, let $t^{\sharp} = f^{\sharp}(t_1,\ldots,t_k)$.
For termination analysis, one considers each function call 
in a right-hand side of a rewrite rule on its own, i.e.,  for each rule $\ell \to r$ 
with $\pos_{\D} = \{\pi_1, \ldots, \pi_n\}$, one obtains
$n$  \emph{dependency pairs}
$\ell^\sharp \to r|_{\pi_i}^\sharp$ for all $1 \leq i \leq n$.
However, for complexity analysis, one has to consider all function calls in a right-hand
side simultaneously. Thus,
when adapting DPs for complexity analysis
in
\cite{noschinski2013analyzing},
a single \emph{dependency tuple} (DT) $\ell^\sharp \to [r|_{\pi_1}^\sharp, \ldots, r|_{\pi_n}^\sharp]$ is
constructed instead of the $n$ dependency pairs.
By analyzing the dependency tuples (together with the original rewrite rules),
\cite{noschinski2013analyzing} presented 
a modular \emph{DT framework} that can be used to infer an upper bound on the runtime
complexity.
However, in contrast to the \emph{chain criterion}
of dependency pairs (which states that termination of a TRS is equivalent to the absence of
infinite chains of DPs), the 
\emph{chain criterion}
of this dependency tuple framework yields an over-approximation. More precisely,
the upper bounds on
the runtime complexity obtained
via dependency tuples are only tight 
for \emph{confluent} TRSs.

Recently, we introduced \emph{annotated dependency pairs} (ADPs)
to analyze \emph{almost-sure termination} of PTRSs \cite{FLOPS2024}.
We now show that by
using ADPs instead of dependency tuples, the corresponding chain criterion for (expected)
complexity becomes an equivalence again, i.e., it can be used to 
compute \emph{tight} complexity bounds (irrespective of confluence).
Instead of extracting the function calls of right-hand sides
and coupling them together in a fresh dependency tuple, in ADPs
we annotate these function calls in the original rewrite rule directly, i.e., we
keep the original structure of the rule.

\begin{definition}[Annotations]\label{def:annotations}
     For $t \in \TT^\sharp = \Trans(\Sigma^\sharp, \Var)$ and a set of positions $\Phi
     \subseteq \pos_{\SignatureD \cup \SignatureA}(t)$, let $\sharp_{\Phi}(t)$  
    be the variant of $t$ where the symbols at positions from $\Phi$ in $t$ 
    are annotated and all other annotations are removed.
So $\pos_{\SignatureA}(\sharp_{\Phi}(t)) = \Phi$ and $\sharp_{\emptyset}(t)$ removes all
annotations from $t$.
 We often write $\annoD(t)$ instead of $\sharp_{\pos_\D(t)}(t)$ to annotate all defined symbols in $t$,
    and $\flat(t)$ instead of $\sharp_{\emptyset}(t)$, where
we extend $\flat$ to multi-distributions, rules, and sets of rules 
    by removing the annotations of all occurring terms. 
    Moreover, $\flat^{\uparrow}_{\pi}(t)$ results from removing 
    all annotations from $t$ that are strictly above the position $\pi$.
     We write $t \trianglelefteq_{\sharp}^\pi s$ if  $\pi \in \pos_{\SignatureA}(s)$ 
    and $t = \flat(s|_{\pi})$, i.e., $t$ results from a subterm of $s$ 
    with annotated root symbol by removing its annotation. If
    $\pi$  is not of interest, we just write
    $t \trianglelefteq_{\sharp} s$.
    We often write $\tF$ instead of $\tf^\sharp$ for $\tf \in \SignatureD$
(e.g., $\tGeo$ instead of $\tgeo^\sharp$).
\end{definition}

\begin{example}[Annotations]\label{ex:annotations}
If $\tf\!\in\!\SignatureD$, then we have $\sharp_{\{1\}}(\tf(\tf(x)))\!=\linebreak \sharp_{\{1\}}(\tF(\tF(x)))
= \tf(\tF(x))$, \pagebreak[3] $\annoD(\tf(\tf(x))) = \sharp_{\{\varepsilon,1\}}(\tf(\tf(x))) =\tF(\tF(x))$,
$\flat(\tF(\tF(x))) = \tf(\tf(x))$, $\flat^{\uparrow}_{1}(\tF(\tF(x))) = \tf(\tF(x))$, and
$\tf(x) \, \trianglelefteq_{\sharp} \, \tf(\tF(x))$. 
\end{example}

The annotations indicate which function calls 
need to be regarded for complexity analysis. To transform a PTRS into ADPs, initially we
annotate all defined symbols in the right-hand sides of rules,
since all function calls need to be considered at the start of our analysis.
The left-hand side of an ADP is just
the left-hand side of the original rule (i.e., in contrast to the DPs of \cite{arts2000termination,giesl2006mechanizing},
we do not annotate symbols in left-hand sides).
The DP and the DT framework work on pairs $\langle \P, \R \rangle$, 
where $\R$ contains the original rewrite rules and $\P$ is
the set of dependency pairs or tuples.
In contrast, ADPs already represent the original rewrite rules themselves. 
We simply add a Boolean flag $m \in \{\ttrue, \tfalse\}$ to
indicate whether we still need to consider the corresponding original rewrite rule for our analysis.
Initially, the flag is $\ttrue$ for all ADPs.

\begin{definition}[Annotated Dependency Pairs]\label{def:canonical-annotated-dps}
    An \emph{annotated dependency pair} (ADP) has the form $\ell \to \{p_1 : r_1, \ldots, p_k :
    r_k\}^m$, where $\ell \in \TT$ with $\ell \not\in \Var$, $m \in
    \{\ttrue, \tfalse\}$, and for all $1 \leq j \leq k$ we have $r_j \in \TT^\sharp$ with
    $\Var(r_j) \subseteq \Var(\ell)$.

The \textit{canonical ADP} of
 a probabilistic rule $\ell \to \mu = \{p_1\!: r_1, \dots, p_k\!: r_k\}$
   is 
    $\ADPair{\ell \to \mu} = \ell \to \{p_1 : \annoD(r_1), \dots, p_k :
    \annoD(r_k)\}^{\ttrue}$.
    The ca\-no\-nical ADPs of a PTRS $\R$ are $\ADPair{\R} = \{ \ADPair{\ell \to \mu} \mid \ell \to \mu \in \R\}$.
\end{definition}

For a set of ADPs,
the defined symbols, constructors, and basic terms are defined as for a TRS,
because the left-hand sides of the ADPs are the left-hand sides of the original rewrite rules.

\begin{example}[Canonical Annotated Dependency Pairs]\label{ex:canonical-adps}
  The canonical ADPs $\ADPair{\R_1}$ and $\ADPair{\R_2}$ of $\R_1$ and $\R_2$
from \Cref{ex:LeadingExamples}  are: 

\vspace*{-.3cm}

{\small 
{\allowdisplaybreaks\begin{align}
\ADPair{\R_1}: \quad\qquad\qquad
      \tstart(x,y) &\to \{1:\tQ(\tGeo(x),y,y)\}^{\ttrue} \label{eq:ADP-1-1}\\[-.08cm]
        \tgeo(x) &\to \{\nicefrac{1}{2}:\tGeo(\ts(x)),\nicefrac{1}{2}:x\}^{\ttrue} \label{eq:ADP-1-2}\\[-.08cm]
        \tq(\ts(x),\ts(y),z) &\to \{1:\tQ(x,y,z)\}^{\ttrue} \label{eq:ADP-1-4}\\[-.08cm]
        \tq(x,\tz,\ts(z)) &\to \{1:\ts(\tQ(x,\ts(z),\ts(z)))\}^{\ttrue} \label{eq:ADP-1-5}\\[-.08cm]
        \tq(\tz,\ts(y),\ts(z)) &\to \{1:\tz\}^{\ttrue} \label{eq:ADP-1-3}
        \\
      \ADPair{\R_2}: \quad\qquad\qquad\qquad   \tstart &\to \{1:\tF(\tGeo(\tz))\}^{\ttrue}
      \nonumber\\[-.08cm]  
         \tgeo(x) &\to \{\nicefrac{1}{2}:\tGeo(\ts(x)),\nicefrac{1}{2}:x\}^{\ttrue}
         \nonumber\\[-.08cm]  
         \tf(\ts(x)) &\to \{1:\tF(\tc(x,x))\}^{\ttrue} \label{eq:ADP-2-3}\\[-.08cm]
         \tf(\tc(x,y)) &\to \{1:\tc(\tF(x),\tF(y))\}^{\ttrue} \nonumber 
     \end{align}}}
\end{example}

Since the original rule and all corresponding dependency pairs are encoded in a single
ADP,
when rewriting with ADPs we have to distinguish 
whether we intend
to rewrite with the original rule or with a dependency pair.
This is important as our analysis should only focus on the complexity of rewriting at annotated positions,
i.e., of those function calls that we still need to analyze.

\begin{definition}[$\tored{}{}{\P}$]\label{def:annotated-dps}
    Let $\P$ be a finite set of ADPs.
    A term $s \in \TT^\sharp$ rewrites with $\P$ to $\mu = \{p_1 : t_1, \dots, p_k
    : t_k\}$ (denoted $s \tored{i}{}{\P}\mu$) if there is 
    a position $\pi \in \pos_{\SignatureD \cup \SignatureA}(s)$, a
    rule $\ell \to \{p_1 : r_1, \dots, p_k : r_k\}^m \in \P$, 
    and
    a substitution $\sigma$ such that $\flat(s|_{\pi}) = \ell\sigma \in \ANF_{\P}$ (i.e.,
    all proper
        subterms are in normal form
    w.r.t.\ $\tored{i}{}{\P}$), 
    and for all $1 \leq j \leq k$, $t_j$ is defined as follows,
depending on the flag $m$ and on whether $\pi \in \posT(s)$ holds:
    \begin{equation*}
        \begin{tabular}{c | l@{\,}l | l@{\,}l |}
            & $\pi \in \posT(s)$ & & $\pi \not\in \posT(s)$ &\\
            \hline
            $m = \ttrue$ & $t_j = \quad\; s[r_j\sigma]_{\pi}$ & $(\mathbf{at})$ & $t_j = \quad\; s[\flat(r_j)\sigma]_{\pi}$ & $(\mathbf{nt})$ \\
            \hline
            $m = \tfalse$ & $t_j = \disannoPos{\pi}( s[r_j\sigma]_{\pi})$ & $(\mathbf{af})$ & $t_j = \disannoPos{\pi}( s[\flat(r_j)\sigma]_{\pi})$ & $(\mathbf{nf})$ \\
            \hline 
        \end{tabular}
    \end{equation*}
\end{definition}

Rewriting with $\P$ is like ordinary probabilistic term rewriting while considering and
modifying annotations.
We distinguish between  \textbf{a}-steps (\underline{\textbf{a}}nnotation) and
\textbf{n}-steps (\underline{\textbf{n}}o annotation).
Similar to the  DP and the DT framework for non-probabilistic TRSs,
for complexity we only ``count'' 
\textbf{a}-steps (on positions with \textbf{a}nnotated symbols) that apply dependency pairs, 
and between two \textbf{a}-steps there can be several \textbf{n}-steps 
where rules are applied below the position of the next \textbf{a}-step 
(in order to evaluate the arguments of the function call to normal forms).
The flag $m \in \{\ttrue, \tfalse\}$ indicates whether
the ADP may be used for such \textbf{n}-steps on the arguments before an \textbf{a}-step
on an annotated symbol above.

During an $\bat$-step (for \underline{\textbf{a}}nnotation
and \textsf{\underline{t}rue}), all annotations are kept except 
those in subterms that correspond to variables in the applied rule.
Those subterms are normal forms as we consider innermost rewriting.
An $\bat$-step at a position $\pi$ represents an 
\textbf{a}-step as it rewrites at the position of an annotation, 
but in addition, it can also represent an 
\textbf{n}-step if 
an annotated symbol is later rewritten at a position above $\pi$.
To ease readability, we illustrate \Cref{def:annotated-dps}
using
a rule with non-probabilistic structure, i.e., a probabilistic rewrite rule of the form $\ell \to \{1:r\}$.
An example for an $\bat$-step  is:
\[
    \tF(\ts(\tF(\ts(\tz)))) \tored{i}{}{\ADPair{\R_{2}}} \{1 : \tF(\ts(\tF(\tc(\tz,\tz))))\}
\]
using ADP \eqref{eq:ADP-2-3}. 
Here, we have $\pi = 1.1$, $\flat(s|_{1.1}) = \tf(\ts(\tz)) = \ell \sigma$, 
where $\sigma$ instantiates $x$ with the normal form $\tz$, and $r_1 = \tF(\tc(\tz,\tz))$.

A step of the form $\bnt$ (for \underline{\textbf{n}}o annotation and 
\textsf{\underline{t}rue})
performs a rewrite step at the position of a non-annotated defined symbol.
This represents only an \textbf{n}-step, and thus
all annotations on the right-hand side $r_j$ are removed.
An example for such a step is:
\[
    \tF(\ts(\tf(\ts(\tz)))) \tored{i}{}{\ADPair{\R_{2}}} \{1 : \tF(\ts(\tf(\tc(\tz,\tz))))\}
\]
using ADP \eqref{eq:ADP-2-3}. 
Here, we have the same $\pi$, $\ell \sigma$, and $\sigma$ as above, but use the
right-hand side $\flat(r_1) = \tf(\tc(\tz,\tz))$ without annotations.

An
$\baf$-step (for \underline{\textbf{a}}nnotation
and 
\textsf{\underline{f}alse}) at a position $\pi$
only represents an \textbf{a}-step, but not 
an \textbf{n}-step, i.e., it does not
rewrite the arguments of a function
that is evaluated later on an annotated posi\-tion above.
Therefore, we remove
all annotations above  $\pi$, as no \textbf{a}-step is allowed to occur above $\pi$ afterwards.
If $\ADPair{\R_2}'$  contains $\tf(\ts(x)) \to\linebreak \{1:\tF(\tc(x,x))\}^{\tfalse}$,
then
a step of the form $\baf$ would be:
\[
    \tF(\ts(\tF(\ts(\tz)))) \tored{i}{}{\ADPair{\R_{2}}'} \{1 : \tf(\ts(\tF(\tc(\tz,\tz))))\}
    \]
    
Finally, a step of the form $\bnf$ (for \underline{\textbf{n}}o annotation and 
\textsf{\underline{f}alse}) is irrelevant for proving an upper bound on the expected runtime complexity 
since there can never be another \textbf{a}-step at a position above.
These steps are only included to ensure that the innermost evaluation strategy is not
affected if one modifies the annotations or the flag of ADPs (such modifications will be
done by our ADP processors in \Cref{sec:Framework}).
An example would be 
\[
    \tF(\ts(\tf(\ts(\tz)))) \tored{i}{}{\ADPair{\R_{2}}'} \{1 : \tf(\ts(\tf(\tc(\tz,\tz))))\}
\]
 with $\tf(\ts(x)) \to \{1:\tF(\tc(x,x))\}^{\tfalse}$ at Position 1.1
again.

Next, we lift RSTs to \emph{chain trees} that consider rewriting 
with $\tored{i}{}{\P}$ instead of $\ito_{\R}$.

\begin{definition}[Chain Tree]\label{def:chain-tree}
    Let $\F{T} = (V,E,L)$ be a (possibly infinite) labeled and directed tree with nodes $V
    \neq \emptyset$ and edges $E \subseteq V \times V$,
    where $vE = \{ w \mid (v,w) \in E \}$ is finite for every $v \in V$.
    We say that $\F{T}$ is a $\P$\emph{-chain tree} ($\P$-CT) if
    \begin{itemize}
        \item $L:V\rightarrow(0,1]\times\TT^{\sharp}$ labels every node $v$ 
        by a probability $p_v$ and a term $t_v$.
        For the root $v \in V$ of the tree, we have $p_v = 1$.
      \item If $vE\!=\!\{w_1, \ldots, w_k\}$, then
        $t_v \tored{}{}{\PP}
        \{\tfrac{p_{w_1}}{p_v}\!:t_{w_1}, \ldots, \tfrac{p_{w_k}}{p_v}\!:t_{w_k}\}$.
    \end{itemize}
    For every inner node $v$, let $\P(v) \in \P \times \{\bat, \baf, \bnt, \bnf \}$ 
    be the ADP and the kind of step used for rewriting $t_v$.  
\end{definition}
\noindent
The ${\ADPair{\R_{\tgeo}}}$-CT in
\Cref{fig:CT_example}  corresponds to the 
${\R_{\tgeo}}$-RST in \Cref{fig:RST_example}.

\begin{figure}
	\centering
    \begin{tikzpicture}
        \tikzstyle{adam}=[thick,draw=black!100,fill=white!100,minimum size=4mm, shape=rectangle split, rectangle split parts=2,rectangle split horizontal,font={\footnotesize}]
        \tikzstyle{empty}=[rectangle,thick,minimum size=4mm]
        
        \node[adam] at (0, 0)  (a) {$1$\nodepart{two}$\tGeo(\tz)$};

        \node[adam] at (-1.8, -0.6)  (b) {$\nicefrac{1}{2}$\nodepart{two}$\tGeo(\ts(\tz))$};
        \node[adam,label=below:{\footnotesize $\quad \mathtt{NF}_{\ADPair{\R_{\tgeo}}}$}] at (1.6, -.6)  (c) {$\nicefrac{1}{2}$\nodepart{two}$\tz$};

        \node[adam] at (-4, -1.2)  (d) {$\nicefrac{1}{4}$\nodepart{two}$\tGeo(\ts(\ts(\tz)))$};
        \node[adam,label=below:{\footnotesize $\quad \NF_{\ADPair{\R_{\tgeo}}}$}] at (0, -1.2)  (e) {$\nicefrac{1}{4}$\nodepart{two}$\ts(\tz)$};
        
        \draw (a) edge[->] (b);
        \draw (a) edge[->] (c);
        \draw (b) edge[->] (d);
        \draw (b) edge[->] (e);
    \end{tikzpicture}\vspace*{-.3cm}
	  \caption{\vspace*{-.2cm}${\ADPair{\R_{\tgeo}}}$-Chain tree starting with $\tGeo(\tz)$}\label{fig:CT_example}
\end{figure}

In contrast to the \emph{expected derivation length} of RSTs, for the expected derivation
length of CTs, we only consider  $\bat$- or  $\baf$-steps, i.e., steps at the position of an
annotated symbol.\footnote{Since $\bnt$- and $\bnf$-steps
are disregarded for the expected derivation length of CTs,
in contrast to the chain
trees used for proving $\AST$ in \cite{FLOPS2024}, we do not have to
require that every infinite path
    contains infinitely many  $\bat$- or  $\baf$-steps.}
    Moreover, sometimes we do not want to count the application of
\emph{all} ADPs, but 
only the ADPs from some subset. Thus, similar to the adaption of
DPs for\linebreak
complexity analysis in \cite{noschinski2013analyzing}, 
in our ADP framework we do not consider a single set of ADPs $\P$, but we use a
second set $\SSS \subseteq \P$
of those\linebreak ADPs that (still) have to be taken into account.
So
one only has to add the probabilities of \textbf{a}-steps with ADPs from
$\SSS$ in the chain tree to determine its \emph{expected derivation length}.
Thus, our ADP frame\-work uses \emph{ADP problems} $\langle \P, \SSS \rangle$ where $\SSS \subseteq \P$,
and our analysis\linebreak ends once we have $\SSS = \emptyset$.
  Such
ADP problems are called \emph{solved}.
In the remainder, we fix an arbitrary ADP problem $\langle \P, \SSS \rangle$.

\begin{definition}[Exp.\ Derivation Length for Chain Trees, $\edl_{\langle \P, \SSS \rangle}$]\label{def:expected-derivation-length-extension}
    Let $\F{T} = (V,E,L)$ be a $\P$-chain tree. 
    The \emph{expected derivation length} of $\F{T}$, where we only count
    steps with $\SSS$ at annotated symbols, is
    \[ \textstyle
     \edl_{\langle \P, \SSS \rangle}(\F{T}) = \sum_{v \in V \setminus \ctleaf^{\F{T}}, \;
       \P(v) \in \SSS \times  \{\bat, \baf\}} \; p_v
    \]
\end{definition}

\begin{example}[Expected Derivation Length for Chain Trees]\label{ex:expected-derivation-length-extension}
  Reconsider the PTRS $\R_1$ and the following $\ADPair{\R_1}$-chain tree $\F{T}$.

  \vspace*{.2cm}

  \begin{center}
        \begin{small}
             \begin{tikzpicture}
                \tikzstyle{adam}=[thick,draw=black!100,fill=white!100,minimum size=4mm, shape=rectangle split, rectangle split parts=2,rectangle split horizontal]
                \tikzstyle{empty}=[rectangle,thick,minimum size=4mm]

                \node[adam] at (0, 0.8)  (z) {$1$\nodepart{two}$\tStart(\ts^n(\tz),\ts^m(\tz))$};

                \node[adam] at (0, 0)  (a) {$1$\nodepart{two}$\tQ(\tGeo(\ts^n(\tz)),\ts^m(\tz),\ts^m(\tz))$};
        
                \node[empty] at (-1.5, -0.8)  (b1) {$\ldots$};
                \node[adam] at (1.5, -0.8)  (b2) {$\nicefrac{1}{2}$\nodepart{two}$\tQ(\ts^n(\tz),\ts^m(\tz),\ts^m(\tz))$};
        
                \node[empty] at (-3, -1.6)  (c1) {$\ldots$};
                \node[adam] at (0.5, -1.6)  (c2) {$(\nicefrac{1}{2})^{k+1}$\nodepart{two}$\tQ(\ts^{n+k}(\tz),\ts^m(\tz),\ts^m(\tz))$};

                \node[empty] at (4, -1.4)  (d1) {$\ldots$};
                \node[empty] at (3.5, -2.2)  (d2) {$\ldots$};
                \node[empty] at (-3, -2.2)  (space) {\phantom{0}};
                
                \draw (z) edge[->] (a);
                \draw (a) edge[->] (b1);
                \draw (a) edge[->] (b2);
                \draw (b1) edge[->] (c1);
                \draw[->] ([rotate=10]b1) -- (c2);
                \draw (b2) edge[->] (d1);
                \draw (c2) edge[->] (d2);
            \end{tikzpicture}
        \end{small}
    \end{center}
    Let $\SSS_{\tgeo} = \{\eqref{eq:ADP-1-2}\}$ contain only the $\tgeo$-ADP 
and let $\SSS_{\tq} = \{  \eqref{eq:ADP-1-4},\eqref{eq:ADP-1-5}, \eqref{eq:ADP-1-3}\}$
contain the $\tq$-ADPs.
    Computing the expected derivation length of $\F{T}$ w.r.t.\ $\SSS_{\tgeo}$ or
    w.r.t.\ all ADPs
    results in

    \[
    \begin{tabular}{r@{\;}c@{\;}l}
        $\edl_{\langle \ADPair{\R_1}, \SSS_{\tgeo} \rangle}(\F{T})$&$=$&$1 + \tfrac{1}{2} +
        \tfrac{1}{4} + \tfrac{1}{8} + \ldots =
          1 + \sum_{k=1}^{\infty} \tfrac{1}{2^k} =
          2$\\[.2cm]
            $\edl_{\langle \ADPair{\R_1}, \ADPair{\R_1} \rangle}(\F{T})$ &$=$&  $\mbox{\small
            $1 + \edl_{\langle
            \ADPair{\R_1}, \SSS_{\tgeo} \rangle}(\F{T}) + \edl_{\langle \ADPair{\R_1},
            \SSS_{\tq} \rangle}(\F{T})$}$\\ 
            &$\leq$& $1 + 2 + \sum_{k=1}^{\infty} (\nicefrac{1}{2})^{k+1} \cdot (2 (n + k) + 2)$\\
            &$=$& $3 +  \sum_{k=1}^{\infty} \nicefrac{k}{2^{k}} +  \sum_{k=1}^{\infty}
            (\nicefrac{1}{2})^{k}  \cdot (n + 1)$\\
            &$=$& $3 + 2 + 1 \cdot (n + 1) = n + 6$
    \end{tabular}\]
\end{example}

Next, we define \emph{expected derivation height} via chain trees
by considering all possible CTs with the root $t^\sharp$ for a basic term $t$, and taking the
supremum of their expected derivation lengths. 

\begin{definition}[Exp.\ Derivation Height via Chain Trees, $\edh_{\langle \P, \SSS \rangle}$]\label{def:expected-complexity-of-terms}
    For $t \in \TB_\P$,
    the \textit{expected derivation height} 
    $\edh_{\langle \P, \SSS \rangle}(t) \in \IN \cup \{\omega\}$
    is the supremum obtained when adding all
    probabilities for \textbf{a}-steps with $\SSS$ 
    in any chain tree $\F{T}$  with root $t^{\sharp}$:
    {\small\[
        \edh_{\langle \P, \SSS \rangle}(t) = \sup \{ \edl_{\langle \P, \SSS \rangle}(\F{T})
        \mid \F{T} \text{ is a } \P \text{-chain tree with } \rootterm(\F{T}) = t^{\sharp} \}
    \]}
\end{definition}

\begin{example}[Expected Derivation Height w.r.t.\ Chain Trees]\label{ex:expected-complexity-of-terms}
    Consider the term $t = \tstart(\ts^n(\tz),\ts^m(\tz))$ 
    and its corresponding $\ADPair{\R_1}$-CT from \cref{ex:expected-derivation-length-extension}.
    As this is the only
    $\ADPair{\R_1}$-CT with root $t^\sharp$, we obtain
    $\edh_{\langle \ADPair{\R_1}, \ADPair{\R_1} \rangle}(t) \leq n + 6$ and $\edh_{\langle \ADPair{\R_1},\SSS_{\tgeo} \rangle}(t) = 2$.
\end{example}

\noindent
Now we can define expected runtime complexity for ADP problems.

\begin{definition}[Expected Runtime Complexity for ADP Problems, $\eirc_{\langle \P,\SSS \rangle}$]\label{def-cplx-ADP-problems}
    The \textit{expected runtime complexity function} of an ADP problem $\langle \P,\SSS \rangle$ is defined as 
    \[
        \eirc_{\langle \P,\SSS \rangle}(n) = \sup \{\edh_{\langle \P, \SSS \rangle}(t) \mid t \in \TB_\P, |t| \leq n\}
    \]
    and we define the \emph{runtime
    complexity} $\iota_{\langle \P,\SSS \rangle}$ of $\langle \P,\SSS \rangle$ as $\iota(\eirc_{\langle \P,\SSS \rangle})$.
\end{definition}

\begin{example}[Expected Runtime Complexity for ADP Problems]\label{ex:expected-innermost-runtime}
  For a basic term  
$\tstart(t_1, t_2)$, $\ADPair{\R_1}$ first computes a geometric distribution starting in $t_1$.
This needs $2$ steps in expectation, and  increases  $t_1$ by only $2$ in expectation.
    In the resulting term $\tq(t_{\tgeo}, t_2, t_2)$, 
    where $t_{\tgeo}$ is the normal form resulting from  $\tgeo(t_1)$,
    we decrease $t_{\tgeo}$ until we reach $\tz$.
    Therefore, the expected derivation height is linear in the size of the start term, 
    i.e., $\iota_{\langle \ADPair{\R_1},\ADPair{\R_1} \rangle} = \Pol_1$.
    If we only consider $\SSS_{\tgeo}$ for the complexity, 
    then $\iota_{\langle \ADPair{\R_1},\SSS_{\tgeo} \rangle} = \Pol_0$.
\end{example}

With our new concepts, we obtain the following novel \emph{chain criterion} for complexity
analysis of PTRSs.
It shows that to analyze the expected runtime complexity of a PTRS $\R$, 
it suffices to analyze the expected runtime complexity of its \emph{canonical ADP problem}
$\langle \ADPair{\R}, \ADPair{\R} \rangle$,
i.e., in the beginning  all ADPs are considered for complexity.
In the canonical ADP problem, all defined symbols in right-hand sides are annotated. Thus, 
one can only perform $\bat$-steps, because due to the innermost 
strategy, annotations are only removed from subterms in normal form. 
Hence, the rewrite steps with $\R$ and the ones with
$\ADPair{\R}$ directly correspond to each other. 

\begin{restatable}[Chain Criterion]{theorem}{ChainCrit}\label{theo:chain-criterion-extended-adps}
    Let $\R$ be a PTRS. Then for all basic terms $t \in \TB_\R$ we have
    \[
        \edh_\R(t) = \edh_{\langle \ADPair{\R}, \ADPair{\R} \rangle}(t)
    \]
    and therefore $\iota_\R = \iota_{\langle \ADPair{\R}, \ADPair{\R} \rangle}$.
\end{restatable}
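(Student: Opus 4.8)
The second claim $\iota_\R = \iota_{\langle \ADPair{\R}, \ADPair{\R} \rangle}$ will be immediate from the first: $\eirc_\R$ and $\eirc_{\langle \ADPair{\R},\ADPair{\R}\rangle}$ arise from $\edh_\R$ resp.\ $\edh_{\langle \ADPair{\R},\ADPair{\R}\rangle}$ by the same supremum over $\{\,t \mid t \in \TB_\R,\ |t| \leq n\,\}$, and $\TB_\R = \TB_{\ADPair{\R}}$, so once the first equation holds the two complexity functions are literally equal and hence have the same $\iota(\cdot)$. To prove $\edh_\R(t) = \edh_{\langle \ADPair{\R},\ADPair{\R}\rangle}(t)$ for basic $t$, the plan is to exhibit, for each basic $t$, a bijection between the $\R$-RSTs $\F{T}$ with $\rootterm(\F{T}) = t$ and the $\ADPair{\R}$-CTs $\F{T}'$ with $\rootterm(\F{T}') = t^\sharp$ that preserves the tree shape and all node probabilities and satisfies $\edl(\F{T}) = \edl_{\langle \ADPair{\R},\ADPair{\R}\rangle}(\F{T}')$; taking suprema over all such trees then yields the claim.

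The bijection is assembled from a one-step correspondence, driven by an invariant that I would carry along every $\ADPair{\R}$-CT with root $t^\sharp$ for basic $t$: for each occurring term $s$, the term $\flat(s)$ is the term at the corresponding RST node, and \emph{every occurrence of a defined symbol in $s$ that is not contained in a subterm in normal form is annotated} (here ``normal form'' may be read w.r.t.\ $\tored{i}{}{\ADPair{\R}}$ or w.r.t.\ $\ito_{\R}$, since a routine induction shows that these notions, and the corresponding argument normal forms $\ANF_{\ADPair{\R}}$ and $\ANF$, coincide on unannotated terms). The invariant holds at the root $t^\sharp$ of a basic term, whose only defined symbol is the annotated root, the arguments being constructor terms. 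For the inductive step, consider a step $s \tored{i}{}{\ADPair{\R}} \mu$ at a position $\pi$ using a canonical ADP $\ell \to \{p_1 : \annoD(r_1), \dots, p_k : \annoD(r_k)\}^{\ttrue}$ of a rule $\ell \to \{p_1 : r_1, \dots, p_k : r_k\} \in \R$ with substitution $\sigma$, so $\flat(s|_\pi) = \ell\sigma \in \ANF_{\ADPair{\R}}$. Since the step takes place at $\pi$, the subterm $s|_\pi$ is not in normal form, so by the invariant its root symbol is annotated, i.e.\ $\pi \in \posT(s)$; as every canonical ADP has flag $\ttrue$, the step is an $\bat$-step with $t_j = s[\annoD(r_j)\sigma]_\pi$. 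Because $\sigma$ instantiates variables by unannotated terms, $\flat(\annoD(r_j)\sigma) = r_j\sigma$, so $\flat(t_j) = \flat(s)[r_j\sigma]_\pi$, which is exactly the result of rewriting $\flat(s)$ with $\ell \to \{p_1:r_1,\dots\}$ at $\pi$ with $\sigma$ --- an innermost $\R$-step since $\ell\sigma \in \ANF$. Conversely, any innermost $\R$-redex of $\flat(s)$ lifts, with the same position, rule, and substitution, to an $\bat$-step of $\tored{i}{}{\ADPair{\R}}$ on $s$ (using again that this position is annotated by the invariant, and that $\ANF$ and $\ANF_{\ADPair{\R}}$ agree on $\ell\sigma$). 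To see the invariant is restored in $t_j$, do a case distinction on the position $\rho$ of a defined-symbol occurrence in $t_j$: if $\rho$ is parallel to or strictly above $\pi$, the occurrence is untouched (an $\bat$-step removes no annotation outside the replaced subterm) and could not have become ``reducible'' without already being so in $s$; if $\rho = \pi$ or $\rho$ lies within the $\annoD(r_j)$-skeleton, the symbol is annotated by construction of the canonical ADP; and if $\rho$ lies inside a subterm $x\sigma \trianglelefteq \ell\sigma$, then $x\sigma$ is a normal form (as $\ell\sigma \in \ANF_{\ADPair{\R}}$), so the occurrence sits in a normal-form subterm and the invariant imposes nothing.

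Given this one-step correspondence, the tree bijection follows by mirroring each edge: starting from $t$ resp.\ $t^\sharp$, every CT edge is the $\flat$-image of a unique RST edge and conversely, with the probabilities $p_v$ copied verbatim; hence $\F{T}$ and $\F{T}'$ have the same node set, the same $p_v$, and every inner node of $\F{T}'$ is labelled by an ADP in $\SSS = \ADPair{\R}$ whose step kind is $\bat$. Therefore $\edl(\F{T}) = \sum_{v \notin \ctleaf^{\F{T}}} p_v = \sum_{v \notin \ctleaf^{\F{T}'},\ \P(v) \in \SSS \times \{\bat,\baf\}} p_v = \edl_{\langle \ADPair{\R},\ADPair{\R}\rangle}(\F{T}')$, and taking the supremum over all RSTs / CTs with the prescribed root gives $\edh_\R(t) = \edh_{\langle \ADPair{\R},\ADPair{\R}\rangle}(t)$. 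The main obstacle is the invariant used in the second paragraph: the naive guess ``$s = \annoD(\flat(s))$'' is \emph{false}, because an $\bat$-step genuinely strips the annotations off the variable-instantiated parts of a right-hand side, so one must phrase annotatedness only for the ``reducible'' defined-symbol occurrences and verify --- via the position case analysis above --- that this weaker property is preserved; the auxiliary lemma identifying $\NF$ and $\ANF$ of $\ito_{\R}$ and $\tored{i}{}{\ADPair{\R}}$ on ordinary terms, and the routine bookkeeping that probabilities and tree structure transfer and that all $\bat$-steps are counted because $\SSS = \ADPair{\R}$, then complete the argument.
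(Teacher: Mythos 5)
Your proposal is correct and follows essentially the same route as the paper's proof: lift each $\R$-RST to an $\ADPair{\R}$-CT while maintaining the invariant that every defined-symbol occurrence whose subterm is not in normal form is annotated (the paper's $\PosDPoss(t_x) \subseteq \posT(t_x')$), conclude that all steps are $\bat$-steps and hence counted, and map CTs back to RSTs by erasing annotations. The only cosmetic difference is that you package the two directions as a single bijection and also justify the CT-to-RST direction via the invariant, whereas the paper handles that direction by simply flattening.
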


So in contrast to the chain criterion of \cite{noschinski2013analyzing} 
for complexity analysis in the non-probabilistic setting, \Cref{theo:chain-criterion-extended-adps} 
yields
a \emph{tight} bound ($\iota_\R = \iota_{\langle \ADPair{\R}, \ADPair{\R} \rangle}$)
for arbitrary PTRSs 
due to the usage of ADPs instead of dependency tuples (with 
dependency tuples one would only obtain 
an upper bound, i.e.,
$\iota_\R \sqsubseteq \iota_{\langle \ADPair{\R}, \ADPair{\R} \rangle}$).

\subsection{ADP Framework}\label{sec:ADP-Framework}

Like
the original DP framework of \cite{giesl2006mechanizing},
our ADP framework is a \emph{divide-and-conquer} approach 
which applies \emph{processors} to simplify ADP problems until all
subproblems are solved. As in \cite{noschinski2013analyzing}, a processor also returns a complexity $c \in
\F{C}$. 

\begin{definition}[Processor]\label{def:ADP-processors}
    An \textit{(ADP) processor} $\Proc$ is a function
    $\Proc(\langle \P, \SSS \rangle) 
    = (c, \{\langle \P_1, \SSS_1 \rangle, \ldots, \langle \P_n, \SSS_n \rangle\})$ 
    mapping an ADP problem $\langle \P, \SSS \rangle$ 
    to a complexity $c \in \F{C}$ and a set of ADP problems. 
\end{definition}
The intuition for $\Proc$ is that 
    in addition to the
complexities of $\langle \P_i, \SSS_i \rangle$, the complexity $c$ is also used to obtain
an upper bound on the complexity of $\langle \P, \SSS \rangle$.
During the analysis with our ADP framework, we construct a \emph{proof tree} 
that contains all subproblems and complexities resulting from the application of
processors.

\begin{figure}
	\centering
    \begin{tikzpicture}
        \tikzstyle{adam}=[thick,draw=black!100,fill=white!100,minimum size=4mm, rectangle split parts=2,rectangle split horizontal]
        \tikzstyle{empty}=[rectangle,thick,minimum size=4mm]
        
        \node[empty,label=left:{\footnotesize \textcolor{red}{$\Pol_1$}}] at (0, 0)  (a) {$\langle \P, \SSS \rangle$};

        \node[empty,label=left:{\footnotesize \textcolor{red}{$\Pol_2$}}] at (-2, -0.4)  (b) {$\langle \P_1, \SSS_1 \rangle$};
        \node[empty,label=right:{\footnotesize \textcolor{red}{$\Pol_0$}}] at (2, -0.4)  (c) {$\langle \P_2, \emptyset \rangle$};

        \node[empty,label=left:{\footnotesize \textcolor{red}{$\Pol_0$}}] at (-2, -1.2)  (d) {$\langle \P_3, \emptyset \rangle$};
        
        \draw (a) edge[->] (b);
        \draw (a) edge[->] (c);
        \draw (b) edge[->] (d);
    \end{tikzpicture}\vspace*{-.3cm}
    \caption{\vspace*{-.1cm}Example of a solved proof tree}\label{fig:proof_tree}
\end{figure}

\begin{definition}[Proof Tree]\label{def:proof_tree}
    A \emph{proof tree} is a labeled, finite tree $(V,E,L_{\C{A}},L_{\C{C}})$ 
    with a labeling $L_{\C{A}}$ that maps each node to an ADP problem
    and a second labeling $L_{\C{C}}$ that maps each node to a complexity from $\F{C}$.
    Each edge represents an application of a processor, i.e., 
    if $vE = \{w_1, \ldots, w_n\}$, 
    then $\Proc(L_{\C{A}}(v)) = (L_{\C{C}}(v), \{ L_{\C{A}}(w_1),\linebreak  \ldots, L_{\C{A}}(w_n)\})$
    for some processor $\Proc$,
    and we require that the complexity of all leaves $v$ is $L_{\C{C}}(v) = \Pol_0$ if the corresponding 
    ADP problem $L_{\C{A}}(v)$ is solved, and $L_{\C{C}}(v) = \omega$ otherwise.
    We call a proof tree \textit{solved} if all ADP problems in its leaves are solved.
\end{definition}

\Cref{fig:proof_tree} shows an example of a solved proof tree where
the complexities given by the labeling $L_{\C{C}}$ are depicted
next to the nodes. So here, a processor with $\Proc(\langle \P, \SSS \rangle) = (\Pol_1, \{
\langle \P_1, \SSS_1 \rangle, \langle \P_2, \emptyset \rangle \})$
was used for the step from the root to its two children.

To ensure that  the maximum of all complexities in
a proof tree is an 
upper bound on the complexity of the ADP problem at the root, we require that proof trees are \emph{well
formed}, i.e., that for every node $v$
with $L_{\C{A}}(v) = \langle \P, \SSS \rangle$,
the expected complexity of 
$\langle \P, \SSS \rangle$ is bounded by the $L_{\C{C}}$-labels
of the subtree starting
at $v$ and the $L_{\C{C}}$-labels on the path from the root to $v$.
However, since no processor has been applied on the
leaves of the proof tree
(yet), for leaves
we use the actual expected complexity instead of the $L_{\C{C}}$-label.
Moreover, for well-formed proof trees
we require that 
the
ADPs from $\PP \setminus \SSS$ have already been taken into account in the path from the
root to $v$. This will\linebreak be exploited, e.g.,  in the
\emph{knowledge propagation processor} of
\Cref{sec:knowledge-propagation}.

\begin{definition}[$\oplus$, Well-Formed Proof Tree]\label{def:well-formed_proof_tree}
    Let $\oplus$ be the maximum operator on complexities, i.e.,
    for $c,d \in \F{C}$, let
   $c \oplus d = d$ if $c \sqsubseteq d$ and $c \oplus
    d = c$ otherwise (so, e.g.,  $\Pol_2 \oplus \Pol_1 =
    \Pol_2$).
  
   A proof tree $(V,E,L_{\C{A}},L_{\C{C}})$ is \emph{well formed}
    if for every node $v$ with $L_{\C{A}}(v)\!=\!\langle \P\!, \SSS \rangle$
    and path $v_1,\ldots,v_k\!=\!v$ from the root $v_1$ to $v$,  we have
    \[\mbox{\small \begin{tabular}{r@{\;}c@{\;}l}
        $\iota_{\langle \P,  \SSS \rangle}$ &$\sqsubseteq$& $L_{\C{C}}(v_1) \oplus
        \dots \oplus L_{\C{C}}(v_{k-1})
        \oplus \max \{ L_{\C{C}}'(w) \mid (v,w) \in E^* \}$\\
        $\iota_{\langle \P, \P \setminus \SSS \rangle}$ &$\sqsubseteq$& $L_{\C{C}}(v_1) \oplus
        \dots \oplus L_{\C{C}}(v_{k-1})$
        \end{tabular}}\]
        Here, $E^*$ is the reflexive-transitive closure of the edge relation,
        i.e., $(v,w) \in E^*$ if $v$ reaches  $w$ in the proof tree.
        Moreover, let $L_{\C{C}}'(v) =  L_{\C{C}}(v)$ for inner nodes $v$
        and $L_{\C{C}}'(v) =  \iota_{L_{\C{A}}(v)}$  for leaves $v$.
\end{definition}

The following corollary shows that for well-formed proof trees,
the complexity of the ADP problem at the root  is indeed
bounded by the maximum of all complexities at  the nodes.

\begin{restatable}[Complexity Bound from Well-Formed Proof Tree]{corollary}{ComplexityBoundByTree}\label{theo:approximating_complexities}
    Let $\F{P} = (V,E,L_{\C{A}},L_{\C{C}})$ be a 
    well-formed proof tree
    with $L_{\C{A}}(v_1) = \langle \P, \SSS \rangle$ for the root $v_1$ of $\F{P}$.
    Then, 
    \[\iota_{\langle \P, \SSS \rangle} \sqsubseteq \max\{L_{\C{C}}(v) \mid v \in V\}.\]
\end{restatable}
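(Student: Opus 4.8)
The plan is to derive the statement directly from the first inequality in the definition of a well-formed proof tree, specialized to the root. First I would instantiate \Cref{def:well-formed_proof_tree} at the root node $v_1$ itself: here the path from the root to $v_1$ is just $v_1$, so it is empty of ``earlier'' nodes, i.e.\ $k=1$ and the prefix $L_{\C{C}}(v_1)\oplus\dots\oplus L_{\C{C}}(v_{k-1})$ is the empty $\oplus$-sum, which is the least element $\Pol_0$ of $\F{C}$ (this is the natural reading; one should note that $c\oplus\Pol_0=c$ for all $c$, so including or excluding this summand makes no difference). The first inequality then reads
\[
  \iota_{\langle \P,\SSS\rangle}\ \sqsubseteq\ \max\{\,L_{\C{C}}'(w)\mid (v_1,w)\in E^*\,\},
\]
where $L_{\C{C}}'$ agrees with $L_{\C{C}}$ on inner nodes and equals $\iota_{L_{\C{A}}(w)}$ on leaves $w$. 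Since $E^*$ is the reflexive-transitive closure of $E$ and every node of a tree is reachable from the root, the set $\{w\mid (v_1,w)\in E^*\}$ is exactly $V$.

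The only remaining gap is that the right-hand side uses $L_{\C{C}}'$ (which on leaves records the \emph{actual} complexity $\iota_{L_{\C{A}}(w)}$) whereas the statement of the corollary uses $L_{\C{C}}$. So the second step is to argue $L_{\C{C}}'(w)\sqsubseteq L_{\C{C}}(w)$ for every $w\in V$, which together with monotonicity of $\max$ w.r.t.\ $\sqsubseteq$ closes the argument. For inner nodes this is an equality by definition. For a leaf $w$: by \Cref{def:proof_tree}, $L_{\C{C}}(w)=\Pol_0$ if $L_{\C{A}}(w)$ is solved and $L_{\C{C}}(w)=\omega$ otherwise. In the solved case, $L_{\C{A}}(w)=\langle \P_w,\emptyset\rangle$, and by \Cref{def:expected-derivation-length-extension} the expected derivation length counts only steps with $\SSS_w=\emptyset$, hence $\edl_{\langle\P_w,\emptyset\rangle}(\F{T})=0$ for every chain tree, so $\eirc_{\langle\P_w,\emptyset\rangle}\equiv 0$ and $\iota_{L_{\C{A}}(w)}=\Pol_0=L_{\C{C}}(w)$. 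In the unsolved case, $L_{\C{C}}(w)=\omega$ is the top element of $\F{C}$, so $L_{\C{C}}'(w)=\iota_{L_{\C{A}}(w)}\sqsubseteq\omega=L_{\C{C}}(w)$ trivially. Thus in all cases $L_{\C{C}}'(w)\sqsubseteq L_{\C{C}}(w)$.

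Putting the pieces together: $\iota_{\langle\P,\SSS\rangle}\sqsubseteq\max\{L_{\C{C}}'(w)\mid w\in V\}\sqsubseteq\max\{L_{\C{C}}(w)\mid w\in V\}$, which is the claim. I do not expect any serious obstacle here; the corollary is essentially a direct unpacking of the well-formedness definition, and the one subtlety worth being careful about is the treatment of leaves (the $L_{\C{C}}$ vs.\ $L_{\C{C}}'$ discrepancy and the empty-prefix $\oplus$-sum at the root). If one wanted to be maximally pedantic, one could alternatively phrase it as: apply well-formedness at \emph{each} leaf $v$ to see that its own prefix-$\oplus$ plus $\iota_{L_{\C{A}}(v)}$ is an upper bound, but since we only need the root statement, the one-shot instantiation at $v_1$ is the cleanest route.
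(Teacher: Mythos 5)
Your proposal is correct and is exactly the intended argument: the paper omits an explicit proof of this corollary (it does not appear in the appendix), treating it as an immediate consequence of instantiating the well-formedness condition at the root, which is precisely what you do. Your careful handling of the two minor subtleties — the empty $\oplus$-prefix at the root and the $L_{\C{C}}'$ vs.\ $L_{\C{C}}$ discrepancy at leaves (solved leaves have $\iota_{\langle \P_w,\emptyset\rangle}=\Pol_0$ since no steps are counted, unsolved leaves carry $\omega$) — is accurate and fills in exactly what the paper leaves implicit.
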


Now we can define when a processor is \emph{sound}.

\begin{definition}[Soundness of Proc.\ $\!$]\label{def:sound_processors}
     A processor $\Proc(\langle \P, \SSS \rangle) = (c, \{\langle \P_1, \SSS_1 \rangle,
     \ldots, \langle \P_n, \SSS_n \rangle\})$  
    is \textit{sound} 
    if for all well-formed proof trees $(V,E,L_{\C{A}},L_{\C{C}})$ and
    all nodes $v \in V$, we have: if
    $L_{\C{A}}(v) = \langle \P,\linebreak \SSS \rangle$ and $v_1,\ldots,v_k = v$ is the path
    from the root node $v_1$ to $v$, then
   {\small \begin{eqnarray}
      \label{eq:soundness-1}
        \iota_{\langle \P, \SSS \rangle} 
        &\hspace*{-.3cm}\sqsubseteq\hspace*{-.3cm}& L_{\C{C}}(v_1) \oplus ... \oplus L_{\C{C}}(v_{k-1}) \oplus c \oplus
        \iota_{\langle \P_1, \SSS_1 \rangle} \oplus ... \oplus \iota_{\langle \P_n,
          \SSS_n \rangle}\\
    \label{eq:soundness-2} \iota_{\langle \P_i, \P_i \setminus \SSS_i \rangle} 
        &\hspace*{-.3cm}\sqsubseteq\hspace*{-.3cm}& L_{\C{C}}(v_1) \oplus ... \oplus
    L_{\C{C}}(v_{k-1}) \oplus c\qquad \quad
        \text{for all $1 \leq i \leq n$}
        \end{eqnarray}}
\end{definition}
So \eqref{eq:soundness-1} requires that the complexity of
the considered ADPs $\SSS$ must be bounded 
by the maximum of all ``previous'' complexities
$L_{\C{C}}(v_1), \ldots, L_{\C{C}}(v_{k-1})$, the newly derived complexity $c$, 
    and the complexity of the  remaining ADP problems $\langle \P_1, \S_1 \rangle, \ldots,
    \langle \P_n, \S_n \rangle$.
Moreover,  \eqref{eq:soundness-2} ensures that in the remaining ADP problems
$\langle \P_i, \SSS_i \rangle$, the complexity of the ``non-considered'' ADPs $\P_i \setminus \SSS_i$ 
is bounded 
by the maximum of all previous complexities and the newly derived complexity $c$.
This ensures that well-formedness of proof trees is preserved when extending them by
applying sound
processors.

\begin{restatable}[Sound Processors Preserve Well-Formedness]{lemma}{WellFormednessPreservation}\label{lem:continuing-the-proof-chain}
    Let $\F{P} = (V,E,L_{\C{A}},L_{\C{C}})$ be a proof tree with a leaf $v$ where
    $L_{\C{A}}(v)$ is not solved,  
    and let $\Proc$ be a sound processor such that
  $\Proc(L_{\C{A}}(v)) = (c, \{\langle \P_1, \SSS_1 \rangle, \ldots, \langle \P_n, \SSS_n \rangle\})$.
Let $\F{P}'$  result from $\F{P}$ by adding fresh nodes $w_1, \ldots, w_n$ and edges
$(v,w_1), \ldots, (v,w_n)$, where the labeling is extended such that
$L_{\C{A}}(w_i) = \langle \P_i, \SSS_i \rangle$ for all $1 \leq i \leq n$
and $L_{\C{C}}(v) = c$. Then $\F{P}'$ is also well formed.
\end{restatable}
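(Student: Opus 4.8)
The plan is to check, for every node of $\F{P}'$, the two inequalities that \Cref{def:well-formed_proof_tree} demands of a well-formed proof tree, splitting the nodes into four groups according to how the modification affects the path from the root to the node and the subtree below it: (i) nodes that are neither $v$, nor one of the new leaves $w_1,\dots,w_n$, nor a strict ancestor of $v$; (ii) the new leaves $w_i$; (iii) the node $v$ itself; and (iv) the strict ancestors $v_1,\dots,v_{k-1}$ of $v$, using the notation of the lemma where $v_1,\dots,v_k=v$ is the path from the root to $v$. The only non-trivial inputs are that $\F{P}$ is well formed and that, since $\F{P}$ is well formed with $L_{\C{A}}(v)=\langle\P,\SSS\rangle$, applying soundness of $\Proc$ (\Cref{def:sound_processors}) at $v$ in $\F{P}$ yields inequalities \eqref{eq:soundness-1} and \eqref{eq:soundness-2} along the path $v_1,\dots,v_k$. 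For group (i) there is nothing to do: such a node is not a descendant of $v$ in $\F{P}'$ (the only new descendants of $v$ are the $w_i$), so its path from the root is unchanged, and it is not a strict ancestor of $v$, so its subtree is unchanged as well; hence both inequalities hold in $\F{P}'$ exactly as in $\F{P}$.

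For the new leaves in group (ii), $w_i$ is a leaf with $L_{\C{C}}'(w_i)=\iota_{\langle\P_i,\SSS_i\rangle}$ reached by the path $v_1,\dots,v_{k-1},v,w_i$ with $L_{\C{C}}(v)=c$, so the first inequality of \Cref{def:well-formed_proof_tree} reduces to $\iota_{\langle\P_i,\SSS_i\rangle}\sqsubseteq L_{\C{C}}(v_1)\oplus\cdots\oplus L_{\C{C}}(v_{k-1})\oplus c\oplus\iota_{\langle\P_i,\SSS_i\rangle}$, which is trivial, and the second inequality is literally $\iota_{\langle\P_i,\P_i\setminus\SSS_i\rangle}\sqsubseteq L_{\C{C}}(v_1)\oplus\cdots\oplus L_{\C{C}}(v_{k-1})\oplus c$, i.e.\ \eqref{eq:soundness-2}. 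For group (iii), $v$ is now an inner node with $L_{\C{C}}(v)=c$ and children $w_1,\dots,w_n$; its second inequality does not mention the subtree below it and is therefore the same statement that held for the unsolved leaf $v$ in $\F{P}$, while the subtree-maximum of $v$ in $\F{P}'$ equals $c\oplus\iota_{\langle\P_1,\SSS_1\rangle}\oplus\cdots\oplus\iota_{\langle\P_n,\SSS_n\rangle}$, so the first inequality becomes exactly \eqref{eq:soundness-1}.

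The remaining and most delicate case is group (iv). For a strict ancestor $u=v_j$ of $v$ the second inequality is again untouched, but the first one changes because the subtree below $u$ has grown: it now contains the $w_i$, and the contribution of $v$ has switched from its leaf value $\iota_{\langle\P,\SSS\rangle}$ in $\F{P}$ to the inner-node label $c$ in $\F{P}'$. Writing $M(u)$ and $M'(u)$ for the subtree-maxima $\max\{L_{\C{C}}'(w)\mid(u,w)\in E^*\}$ in $\F{P}$ and $\F{P}'$ respectively, I would show $M(u)\sqsubseteq L_{\C{C}}(v_1)\oplus\cdots\oplus L_{\C{C}}(v_{j-1})\oplus M'(u)$: every summand of $M(u)$ coming from a node $w\neq v$ is unchanged and still occurs in $M'(u)$, and the one remaining summand, the old leaf value $\iota_{\langle\P,\SSS\rangle}$ of $v$, is bounded, via \eqref{eq:soundness-1}, by $L_{\C{C}}(v_1)\oplus\cdots\oplus L_{\C{C}}(v_{k-1})\oplus c\oplus\iota_{\langle\P_1,\SSS_1\rangle}\oplus\cdots\oplus\iota_{\langle\P_n,\SSS_n\rangle}$, where the labels $L_{\C{C}}(v_j),\dots,L_{\C{C}}(v_{k-1})$ (inner nodes on the path from $u$ to $v$), $c=L_{\C{C}}(v)$, and $\iota_{\langle\P_i,\SSS_i\rangle}=L_{\C{C}}'(w_i)$ all belong to nodes in the subtree of $u$ in $\F{P}'$ and hence are dominated by $M'(u)$. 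Combining this with the first inequality for $u$ in $\F{P}$, namely $\iota_{L_{\C{A}}(u)}\sqsubseteq L_{\C{C}}(v_1)\oplus\cdots\oplus L_{\C{C}}(v_{j-1})\oplus M(u)$, and using idempotency of $\oplus$ gives the required inequality for $u$ in $\F{P}'$. I expect this bookkeeping — tracking how an ancestor's subtree-maximum mutates when the leaf $v$ is expanded, and re-deriving its bound from \eqref{eq:soundness-1} — to be the main, and essentially only, obstacle; everything else is immediate.
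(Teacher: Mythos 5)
Your proof is correct and follows essentially the same route as the paper: condition \eqref{eq:soundness-1} yields the first well-formedness inequality at $v$ (and, propagated upward, at its ancestors), and \eqref{eq:soundness-2} yields the second inequality at the new leaves $w_i$, with all other nodes unaffected. Your group~(iv) bookkeeping for strict ancestors is in fact more explicit than the paper's proof, which only spells out the argument at $v$ itself and leaves the ancestor case implicit in the displayed inequality.
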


To determine an upper bound on the expected runtime complexity $\iota_\R$ of a PTRS $\R$,
our ADP framework
starts with the canonical ADP problem $\langle \ADPair{\R}, \ADPair{\R} \rangle$
and applies sound processors repeatedly until all problems are solved.
Then by
\Cref{theo:chain-criterion-extended-adps,theo:approximating_complexities,lem:continuing-the-proof-chain}, 
the expected runtime complexity $\iota_\R$ is bounded by the maximum of all complexities
occurring in the corresponding proof tree.

\begin{corollary}[Soundness of the ADP Framework for Expected Runtime Complexity]\label{cor:correctness}
    Let $\R$ be a PTRS and $\F{P} = (V,E,\linebreak L_{\C{A}}, L_{\C{C}})$ be a 
    well-formed solved proof tree
    where $L_{\C{A}}(v_1) = \langle \ADPair{\R},\linebreak \ADPair{\R} \rangle$ for the root $v_1$ of
    $\F{P}$. Then we have
    \[
        \iota_{\R} \sqsubseteq \max\{L_{\C{C}}(v) \mid v \in V\}.
    \]
\end{corollary}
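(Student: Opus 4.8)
The plan is to obtain this corollary by chaining two results that are already in place: the chain criterion (\Cref{theo:chain-criterion-extended-adps}) and the complexity bound for well-formed proof trees (\Cref{theo:approximating_complexities}). The first step is to invoke \Cref{theo:chain-criterion-extended-adps}, which gives $\edh_\R(t) = \edh_{\langle \ADPair{\R}, \ADPair{\R} \rangle}(t)$ for every basic term $t$ and hence, after taking the supremum over all basic terms of size $\le n$ and applying $\iota$, the equality $\iota_\R = \iota_{\langle \ADPair{\R}, \ADPair{\R} \rangle}$. Thus the expected runtime complexity of $\R$ coincides with that of its canonical ADP problem, which by assumption labels the root $v_1$ of the proof tree $\F{P}$.

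The second step is to apply \Cref{theo:approximating_complexities} to $\F{P}$: since $\F{P}$ is well formed and $L_{\C{A}}(v_1) = \langle \ADPair{\R}, \ADPair{\R} \rangle$, we obtain directly $\iota_{\langle \ADPair{\R}, \ADPair{\R} \rangle} \sqsubseteq \max\{L_{\C{C}}(v) \mid v \in V\}$. Since $\sqsubseteq$ is a (reflexive) total order on the finite chain $\Pol_0 \sqsubset \Pol_1 \sqsubset \dots \sqsubset \omega$, it is transitive, so combining the two steps yields $\iota_\R = \iota_{\langle \ADPair{\R}, \ADPair{\R} \rangle} \sqsubseteq \max\{L_{\C{C}}(v) \mid v \in V\}$, which is the claim.

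It is worth noting where the two hypotheses enter. Well-formedness of $\F{P}$ is exactly what \Cref{theo:approximating_complexities} needs; in the actual framework one does not assume it but builds it up, starting from the one-node proof tree with $L_{\C{C}}(v_1) = \Pol_0$ — which is well formed because $\ADPair{\R} \setminus \ADPair{\R} = \emptyset$ and $\iota_{\langle \P, \emptyset \rangle} = \Pol_0$ for every $\P$ (with $\SSS = \emptyset$ no step is counted, so $\edl_{\langle \P, \emptyset \rangle} \equiv 0$) — and then extending it by sound processors, each of which preserves well-formedness by \Cref{lem:continuing-the-proof-chain}. The hypothesis that $\F{P}$ is \emph{solved} is what makes the bound non-vacuous: by \Cref{def:proof_tree} every leaf of a solved tree is labeled $\Pol_0$, whereas an unsolved tree has a leaf labeled $\omega$ and the right-hand side collapses to $\omega$.

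There is essentially no obstacle at the level of this corollary — it is a one-line consequence of \Cref{theo:chain-criterion-extended-adps,theo:approximating_complexities}. The only subtlety is making sure that passing from the termwise identity of expected derivation heights to the equality of the complexities $\iota_\R$ and $\iota_{\langle \ADPair{\R}, \ADPair{\R} \rangle}$ is sound, i.e., that $\sup$ over basic terms of size $\le n$ followed by $\iota$ is well defined on both sides; but this is immediate from the definitions of $\eirc$ and $\iota$ and is already part of the statement of \Cref{theo:chain-criterion-extended-adps}.
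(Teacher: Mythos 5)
Your proof is correct and follows exactly the route the paper intends: the corollary is obtained by combining the chain criterion (\Cref{theo:chain-criterion-extended-adps}), which gives $\iota_\R = \iota_{\langle \ADPair{\R}, \ADPair{\R} \rangle}$, with the bound for well-formed proof trees (\Cref{theo:approximating_complexities}) applied at the root, and chaining via transitivity of $\sqsubseteq$. Your additional remarks on where well-formedness and solvedness enter are accurate and consistent with the paper's definitions.
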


\begin{remark}\label{K-Component}
    While our framework is inspired by the DT
    framework of \cite{noschinski2013analyzing} for
    complexity analysis of non-probabilistic TRSs,
    our adaption to PTRSs differs from \cite{noschinski2013analyzing} in several
    aspects. Apart from using ADPs instead of dependency tuples (which results in a ``tight''
    chain criterion instead of an over-approximation), we use proof trees (instead of
    just proof chains, which allows us to use processors that return several subproblems),
     and we
    introduced the novel concept of \emph{well-formed} proof trees and 
    require that sound processors preserve
    well-formedness. This will allow us to define a 
    \emph{knowledge propagation processor} in \Cref{sec:knowledge-propagation} which takes the
    knowledge provided by well-formed proof trees into account.
    In contrast to \cite{noschinski2013analyzing}, we obtain  such a processor without
   extending our ADP problems by an additional component 
    $\K$ that contains those dependency
    tuples which were already taken into account in the proof up to now, since
    in our setting we would always have $\K = \P \setminus \SSS$.
\end{remark}

\section{ADP Processors}\label{sec:Framework}

In this section, we
adapt the main processors of the DP and the DT framework \cite{giesl2006mechanizing,
  noschinski2013analyzing} in order to analyze expected
runtime complexity of probabilistic TRSs. 
Throughout the section, we illustrate our processors with the PTRS
$\R_1$ from \Cref{ex:LeadingExamples}.
The resulting solved proof tree for the initial
ADP problem $\langle \ADPair{\R_1}, \ADPair{\R_1} \rangle$ 
that we construct during this section is depicted
in \Cref{fig:proof_tree_R1}.

\subsection{Usable Rules Processor}\label{sec:usable-rules}

We start with a processor that considers the \emph{usable rules}.
Usable rules over-approximate
the set of those rules that can be used 
to evaluate the arguments of annotated function symbols if their
variables are instantiated by normal forms (as required in innermost
evaluations). Essentially, the usable rules of a term $t$ consist of all rules for the
defined symbols $f$ occurring in $t$ and all rules that are usable for the
terms in the right-hand sides of $f$-rules. 

\begin{definition}[Usable Rules]\label{def:usable-rules}
  For every $f\!\in\!\SignatureADC$ and set of ADPs $\PP$, 
  let $\rules_{\PP}(f) = \{\ell\!\to\!\mu^{\ttrue}\!\in\!\PP \mid \rootsym(\ell)\!=\!f\}$.
  Moreover, for every  $t \in \TT^{\sharp}$, the \emph{usable rules} 
  $\urules_{\PP}(t)$ of $t$ w.r.t.\ $\PP$ are defined as:
  {\small
   \[\begin{tabular}{r@{\;}c@{\,}l}
    $\urules_\PP(t)$ &$=$& $\emptyset, \hspace*{3.27cm} \text{if $t \in \VSet$ or $\PP = \emptyset$}$\\
    $\urules_\PP(f(t_1, \ldots, t_k))$&$=$&$\rules_{\PP}(f) 
    \cup \bigcup_{1 \leq j \leq k}\!\urules_{\PP'}(t_j)$\\
    &&$\cup \bigcup_{\ell \to \mu^{\ttrue} \in \rules_{\PP}(f),
      \; r \in \Supp(\mu)}  \urules_{\PP'}(\flat(r))$
  \end{tabular}\]
  }
   where $\PP' = \PP \setminus \rules_\PP(f)$. The usable rules of $\P$ are \[\textstyle \mathcal{U}(\P) = \bigcup_{\ell \to \mu^{m} \in \P, \;
  r \in \Supp(\mu), \; t \trianglelefteq_{\sharp} r} \; \mathcal{U}_\P(t^{\sharp}).\]
\end{definition}

Similar to the usable rules processor for $\AST$ in \cite{FLOPS2024}, our
usable rules processor sets the flag of all non-usable rules in $\P$ to $\tfalse$ to indicate
that they cannot be used to evaluate arguments of annotated functions that are
rewritten afterwards. 
The rules in $\P$'s subset $\SSS$ are changed analogously (since the purpose of 
$\SSS$ is only to indicate which ADPs must still be counted for complexity).

\begin{restatable}[Usable Rules Pr.]{theorem}{UsableRules}\label{thm:exUR-proc}
 \hspace*{-.2cm}  For an ADP problem $\langle \P, \SSS \rangle$, let
\[
    \begin{tabular}{r@{\;}c@{\;}c@{\;}l@{\;}l}
        $\P'$ &$=$& $\mathcal{U}(\P)$ &$\cup$& $\{\ell \to \mu^{\tfalse} \, \mid \, \ell \to \mu^m \in \P \setminus \mathcal{U}(\P)\},$ \\
        $\SSS'$ &$=$& $(\SSS \, \cap \, \mathcal{U}(\P))$ &$\cup$& $\{\ell \to \mu^{\tfalse} \, \mid \, \ell \to \mu^m \in \SSS \setminus \mathcal{U}(\P)\}$.
    \end{tabular}
\]
    Then, $\Proc_{\mathtt{UR}}(\langle \P, \SSS \rangle) = (\Pol_0, \{\langle \P', \SSS' \rangle\})$ is sound.
\end{restatable}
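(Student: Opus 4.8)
Since $c = \Pol_0$ is the $\sqsubseteq$-least element of $\F{C}$, the soundness conditions of \Cref{def:sound_processors} simplify: \eqref{eq:soundness-1} asks for $\iota_{\langle \P, \SSS \rangle} \sqsubseteq L_{\C{C}}(v_1) \oplus \dots \oplus L_{\C{C}}(v_{k-1}) \oplus \iota_{\langle \P', \SSS' \rangle}$, and \eqref{eq:soundness-2} for $\iota_{\langle \P', \P' \setminus \SSS' \rangle} \sqsubseteq L_{\C{C}}(v_1) \oplus \dots \oplus L_{\C{C}}(v_{k-1})$. Since the proof tree is well formed, at $v$ we already know $\iota_{\langle \P, \P \setminus \SSS \rangle} \sqsubseteq L_{\C{C}}(v_1) \oplus \dots \oplus L_{\C{C}}(v_{k-1})$. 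Hence it suffices to establish
\[
  \iota_{\langle \P, \SSS \rangle} \;\sqsubseteq\; \iota_{\langle \P', \SSS' \rangle}
  \qquad\text{and}\qquad
  \iota_{\langle \P', \P' \setminus \SSS' \rangle} \;\sqsubseteq\; \iota_{\langle \P, \P \setminus \SSS \rangle}.
\]
Both follow from one statement, as $\P$ and $\P'$ have the same left- and right-hand sides (only flags change, so in particular $\TB_\P = \TB_{\P'}$), and $\langle \P', \P' \setminus \SSS' \rangle$ arises from $\langle \P, \P \setminus \SSS \rangle$ by exactly the same operation that produces $\langle \P', \SSS' \rangle$ from $\langle \P, \SSS \rangle$ (note that $\urules(\P)$ depends only on the ADPs, not on the second component). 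So I would show: for every basic term $t$, $\edh_{\langle \P, \SSS \rangle}(t) = \edh_{\langle \P', \SSS' \rangle}(t)$; this gives $\eirc_{\langle \P, \SSS \rangle} = \eirc_{\langle \P', \SSS' \rangle}$, hence equal $\iota$'s, and likewise for the second pair.

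\textbf{A count-preserving correspondence of chain trees.}
I would give, for each basic $t$, a two-way step-for-step simulation between $\P$-chain trees and $\P'$-chain trees with root $t^\sharp$ that preserves the tree shape, all probability labels, and the set of nodes whose step is \textit{counted} (i.e.\ an $\bat$- or $\baf$-step whose ADP is in the second component). The simulation replaces every step that uses a rule whose flag $\Proc_{\mathtt{UR}}$ flips — a non-usable rule of $\P$ that still carries flag $\ttrue$ — by the step using its flag-$\tfalse$ variant in $\P'$, turning $\bat$ into $\baf$ and $\bnt$ into $\bnf$, and keeps every other step verbatim. The only effects of such a flag switch are that $\baf$/$\bnf$ additionally delete all annotations strictly above the rewrite position, and that $\bat$/$\baf$-steps are counted whereas $\bnt$/$\bnf$-steps are not. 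The latter is harmless, because $\Proc_{\mathtt{UR}}$ keeps these rules in the second component (merely with flag $\tfalse$), so a counted step stays counted. The former is controlled by the \textit{usable-rules invariant}: in any chain tree over a basic start term, whenever such a flipped rule is applied at a position $\pi$, no annotated symbol at a proper prefix of $\pi$ is ever the site of a rewrite step, on any path through that node. Hence the annotations deleted by the $\baf$/$\bnf$-variants are never rewritten, so removing them changes neither the kind of any step nor which steps are counted; thus $\edl$ is preserved in both directions and the claimed $\edh$-equality follows.

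\textbf{Proof of the invariant.}
I would argue by induction along the chain tree. A basic term $t = f(t_1,\dots,t_k)$ has a single annotation, at its root, with constructor arguments, so the first step is at the root with nothing above it — there the $\bat/\baf$ and $\bnt/\bnf$ distinctions are vacuous — and from then on every annotated symbol of a reachable term stems from an annotated subterm $t' \trianglelefteq_\sharp r$ of an applied right-hand side $r$, whence all flag-$\ttrue$ rules for $\rootsym(t')$ lie in $\urules_\P$, and $\urules_\P$ is closed under the flag-$\ttrue$ usable rules of the right-hand sides of its members. Now suppose a flipped rule is applied at $\pi$ and, toward a contradiction, an annotated symbol at a proper prefix $\pi'$ of $\pi$ is later rewritten while still annotated. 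Under the innermost strategy, every proper subterm of the subterm at $\pi'$ — including the redex at $\pi$, which lies strictly below $\pi'$ — must first be reduced to a normal form; none of these reductions can use a flag-$\tfalse$ rule, as a $\baf$/$\bnf$-step strictly below $\pi'$ would strip the annotation at $\pi'$; hence they all use flag-$\ttrue$ rules, which by the closure property are all usable — contradicting that the rule at $\pi$ is non-usable. This is the counterpart — now also tracking the second component $\SSS$ — of the usable-rules soundness argument for $\AST$ in \cite{FLOPS2024}.

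\textbf{Main obstacle.}
The delicate point is making precise the claim that normalizing the arguments of an annotated symbol uses only flag-$\ttrue$, usable rules: this is the standard but technical stability argument for $\urules_\P$ under the innermost rewriting that occurs below annotated positions, adapted from the non-probabilistic usable-rules results and from \cite{FLOPS2024}. The new ingredient over the $\AST$ setting is that the simulation must preserve not merely termination but the exact value of $\edl$ — equivalently, the precise set of counted steps — which is what upgrades a termination-style statement to the $\edh$-equality needed here. The remaining parts — the reduction above, the handling of $\Pol_0$ and of the well-formedness hypothesis, and the verification that the transformed tree is again a valid chain tree — are routine.
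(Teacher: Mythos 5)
Your proof is correct and follows essentially the same route as the paper's: the paper likewise reduces both soundness conditions to $\iota_{\langle\P,\SSS\rangle}=\iota_{\langle\P',\SSS'\rangle}$ and $\iota_{\langle\P',\P'\setminus\SSS'\rangle}=\iota_{\langle\P,\P\setminus\SSS\rangle}$ (invoking well-formedness for condition \eqref{eq:soundness-2}) and justifies both by observing that, for basic start terms under innermost rewriting, only usable rules are ever applied below the root of an annotated subterm, so every $\P$-CT can be read as a $\P'$-CT and vice versa. Your write-up is in fact more detailed than the paper's (which asserts this invariant without proof); the only slip is that your base case should say the usable rules cover the defined symbols occurring in the \emph{arguments} of an annotated subterm $t'$ of a right-hand side rather than ``all flag-$\ttrue$ rules for $\rootsym(t')$'' (the latter need not be usable, cf.\ the $\tq$-rules in the running example), which is also exactly what your contradiction argument actually uses.
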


\begin{example}[Usable Rules Processor]\label{ex:usable-rules-proc}
  Consider the ADP problem $\langle \ADPair{\R_1}, \ADPair{\R_1}
  \rangle$ from \Cref{ex:canonical-adps}.
  There is only one term $r$ in the right-hand sides with an annotated subterm $t \trianglelefteq_{\sharp} r$ where $t^\sharp$ has
  a defined symbol below the
  annotated root, viz.\ $t^\sharp = \tQ(\tgeo(x),y,y)$.
  Thus,  only the $\tgeo$-ADP
  $\eqref{eq:ADP-1-2}$ is usable and we can set the flag of all other ADPs
  to $\tfalse$.
      Hence, we get $\Proc_{\mathtt{UR}}(\langle \ADPair{\R_1}, \ADPair{\R_1} \rangle) 
      = (\Pol_0, \{ \langle \P_1, \P_1 \rangle \})$
      where $\P_1 = 
      \{\eqref{eq:ADP-1-1-usablerule} -
    \eqref{eq:ADP-1-3-usablerule}\}$
    with

    \vspace*{-.2cm}

    {\small \begin{align}
        \tstart(x,y) &\to \{1:\tQ(\tGeo(x),y,y)\}^{\tfalse} \label{eq:ADP-1-1-usablerule}\\[-.08cm]
        \tgeo(x) &\to \{\nicefrac{1}{2}:\tGeo(\ts(x)), \, \nicefrac{1}{2}:x\}^{\ttrue} \label{eq:ADP-1-2-usablerule}\\[-.08cm]
            \tq(\ts(x),\ts(y),z) &\to \{1:\tQ(x,y,z)\}^{\tfalse} \label{eq:ADP-1-4-usablerule}\\[-.08cm]
        \tq(x,\tz,\ts(z)) &\to
        \{1:\ts(\tQ(x,\ts(z),\ts(z)))\}^{\tfalse} \label{eq:ADP-1-5-usablerule}\\[-.08cm]
            \tq(\tz,\ts(y),\ts(z)) &\to \{1:\tz\}^{\tfalse} \label{eq:ADP-1-3-usablerule}
    \end{align}	} 
\end{example}
\noindent
Having fewer rules with the flag $\ttrue$
has advantages, e.g.,
for the \emph{dependency graph} and the
\emph{reduction pair processor} in \Cref{sec:dependency-graph,sec:red-pair}.

\begin{example}[Basic Start Terms]\label{ex:why-basic}
The restriction to basic 
start terms is not only required to infer polynomial upper bounds
from CPIs (see \Cref{sec:TRS}),
but it is also essential for the soundness of the usable rules processor.
To see this, let $\R_3$ contain all rules of
  $\R_2$ from \Cref{ex:LeadingExamples} except
$\tstart \to \{1:\tf(\tgeo(\tz))\}$.
If we do not require basic start terms, then we can start an
 evaluation with
  the term $\tf(\tgeo(\tz))$,
  i.e., then $\R_3$ is not $\mathtt{SAST}$.
The canonical ADPs $\ADPair{\R_3}$
are the same as for $\R_2$ in \Cref{ex:canonical-adps}
just without the $\tstart$-ADP.
Thus, $\ADPair{\R_3}$ has no usable rules and
$\Proc_{\mathtt{UR}}$ sets the flag of
all ADPs to $\tfalse$.
When starting with a term like $\tF(\tGeo(\tz))$,
then one application of the $\tgeo$-ADP now
removes the annotations of the $\tF$-symbols above it,
as the  $\tgeo$-ADP now has the flag $\tfalse$. So for the resulting ADP problem,
only $\tGeo$ is annotated in chain trees and thus, they all have finite
expected derivation length. 
Hence, we would now falsely infer 
that $\R_3$ is
$\mathtt{SAST}$  w.r.t.\ arbitrary start terms.
\end{example}

\subsection{Dependency Graph Processor}\label{sec:dependency-graph}

The \emph{dependency graph} is a control flow graph that indicates 
which function calls can occur after each other.
This does not depend on the probabilities,
and we can consider each function call on its own.
Hence, we can use the ordinary dependency graph of the corresponding 
(non-probabilistic) dependency pairs. 
To also detect the predecessors of ADPs $\ell \to \mu$ without annotations, we add
a dependency pair $\ell^\sharp \to \bot$ for a fresh symbol $\bot$ in that case.

\begin{definition}[Non-Probabilistic Variant, Dependency
    Pairs]\label{def:non-prob-variants}
  For a set of ADPs $\P$,
let $\nonprob(\PP) = \{\ell \to \flat(r_j) \mid \ell \to \{p_1:r_1, \ldots, p_k:r_k\}^{\ttrue} \in \PP, 1
\leq j \leq k\}$ denote its \emph{non-probabilistic rule variant}. So
$\nonprob(\PP)$ is 
  an ordinary TRS 
    which only considers the ADPs with the flag $\ttrue$.
  For any ADP $\alpha =   \ell \to \{p_1:r_1, \ldots,
p_k:r_k\}^{m}$, let $\nonprobDP(\alpha) = \{ \ell^\sharp \to t^\sharp \mid
1 \leq j \leq k, t \trianglelefteq_{\sharp} r_j\}$.
Moreover, let $\nonprobDP^\bot(\alpha) = \{ \ell^\sharp \to \bot \}$ if $
\nonprobDP(\alpha) = \emptyset$, and
$\nonprobDP^\bot(\alpha) = \nonprobDP(\alpha)$, otherwise.
For $\PP$, let
$\nonprobDP(\PP) = \bigcup_{\alpha \in \PP} \nonprobDP^\bot(\alpha)$
 denote its \emph{non-probabilistic DP variant}, 
    which is a set containing dependency pairs as in
    \cite{arts2000termination,giesl2006mechanizing} and rules of the form $\ell^\sharp \to \bot$.
\end{definition}

So each dependency pair from $\nonprobDP(\alpha)$
corresponds to a single annotation on the right-hand side of the
ADP $\alpha$. In the dependency graph, the edges indicate whether one DP can follow another when
the instantiated arguments are evaluated with $\nonprob(\PP)$.

\begin{definition}[Dependency Graph]\label{def:dependency-graph}
    The $\P$-\emph{dependency graph} has the nodes $\nonprobDP(\PP)$ 
    and there is an edge from $\ell_1^\sharp \to t_1^\sharp$ to $\ell_2^\sharp  \to ...$  
    if there are substitutions $\sigma_1, \sigma_2$ such that 
    $t_1^\sharp  \sigma_1 \ito^*_{\nonprob(\P)} \ell_2^\sharp  \sigma_2$ 
    and both $\ell_1^\sharp  \sigma_1$ and $\ell_2^\sharp  \sigma_2$ are in argument normal form,  
    i.e., $\ell_1^\sharp  \sigma_1, \ell_2^\sharp  \sigma_2 \in \ANF_{\P}$.
\end{definition}
While the dependency graph is not computable in general, several techniques have been
developed to compute over-approxima\-tions of the graph automatically, e.g., \cite{arts2000termination,giesl2006mechanizing,hirokawa2005automating}.

\begin{example}[Dependency Graph]\label{ex:dependency-graph}
  We continue with $\langle \P_1,\P_1 \rangle$    from \cref{ex:usable-rules-proc}, where
  $\P_1 =    \{\eqref{eq:ADP-1-1-usablerule} -
    \eqref{eq:ADP-1-3-usablerule}\}$.
    We  have

\vspace*{-.3cm}
    
    {\small \begin{align}
      \nonprobDP(\PP_1) = \{ \;
\tStart(x,y) &\to \tQ(\tgeo(x),y,y), \label{dpR1-1}\\[-.08cm]
\tStart(x,y) &\to \tGeo(x), \label{dpR1-2}\\[-.08cm]
\tGeo(x) &\to \tGeo(\ts(x)), \label{dpR1-3}\\[-.08cm]
\tQ(\ts(x),\ts(y),z) &\to \tQ(x,y,z), \label{dpR1-4}\\[-.08cm]
\tQ(x,\tz,\ts(z)) &\to \tQ(x,\ts(z),\ts(z)), \label{dpR1-5}\\[-.08cm]
\tQ(\tz,\ts(y),\ts(z)) &\to \bot \label{dpR1-6}\; \}. 
\end{align}}
    The $\P_1$-dependency graph 
    is depicted in \Cref{fig:dependency-graph}.
\end{example}

\begin{figure}
	\centering
    \scriptsize
    \begin{tikzpicture}
        \node[shape=rectangle,draw=black!100] (A) at (1,2) {\eqref{dpR1-2}};
        \node[shape=rectangle,draw=black!100] (B) at (2,2) {\eqref{dpR1-3}};
        
        \node[shape=rectangle,draw=black!100] (C) at (-3,1.3) {\eqref{dpR1-1}};
        \node[shape=rectangle,draw=black!100] (D) at (-4,2) {\eqref{dpR1-4}};
        \node[shape=rectangle,draw=black!100] (E) at (-2,2) {\eqref{dpR1-5}};
        \node[shape=rectangle,draw=black!100] (F) at (-1,2) {\eqref{dpR1-6}};
        
        \path [->] (A) edge (B);
        \path [->,in=20, out=340, looseness=10] (B) edge (B);
        \path [->,in=270, out=165] (C) edge (D);
        \path [->,in=270, out=15] (C) edge (E);
        \path [->,in=190, out=350] (D) edge (E);
        \path [->,in=10, out=170] (E) edge (D);
\path [->, bend right=25] (C) edge (F);
\path [->, bend left=25] (D) edge (F);
\path [->] (E) edge (F);     
     \end{tikzpicture}
    \caption{$\P_1$-dependency graph}\label{fig:dependency-graph}
\end{figure}

The idea of the \emph{dependency graph processor} for termination analysis is to analyze
each strongly connected component (SCC)\footnote{A
set $\mathcal{G}$ of DPs is  an \emph{SCC} if it is a maximal cycle,
i.e., a maximal set where for any $\alpha, \alpha'$
in $\mathcal{G}$ there is
a non-empty path from $\alpha$ to $\alpha'$ only traversing nodes from
$\mathcal{G}$.} of the dependency graph separately. However (already in the
non-probabilistic setting,  e.g., \cite{noschinski2013analyzing}), this is not possible when
analyzing complexity. There are examples where all SCCs have linear
complexity but the full system has quadratic complexity, or where all individual SCCs 
are $\SAST$ but the full system is not (see \Cref{ex:two-sccs-together}).

The problem is that when considering an SCC individually, then
we lose the information how often and with which instantiations of the variables this SCC
is ``called''. For that reason, we now present a novel dependency graph
processor
  which regards each SCC
\emph{together with its ``prefix''}, i.e., together with all nodes of the dependency graph
that can reach the SCC. As usual, we say that a node \emph{reaches} an SCC if there is a
path from the node to the SCC in the dependency graph (where the path has length $\geq 0$,
i.e., each node also reaches itself). However, prefixes which are independent from each
other can be regarded separately, i.e., we only regard SCC-prefixes $\JJ$ where for all
nodes $\alpha, \beta \in \JJ$, $\alpha$ reaches $\beta$ or $\beta$ reaches $\alpha$.

\begin{definition}[SCC-Prefix]
  Let $\P$ be a set of ADPs.
  Then $\JJ$ is an \emph{SCC-prefix} of the
  $\PP$-dependency graph if there exists an SCC $\mathcal{G} \subseteq \JJ$ 
  where $\JJ \subseteq \nonprobDP(\PP)$
is a maximal set such that 
  all DPs of $\JJ$ reach $\mathcal{G}$
  and for all  $\alpha,\beta \in \JJ$,
    $\alpha$ reaches $\beta$ or
    $\beta$ reaches $\alpha$.
\end{definition}

For example, the $\P_1$-dependency graph of \Cref{fig:dependency-graph} has 
two SCCs
 $\{ \eqref{dpR1-3} \}$ and $\{ \eqref{dpR1-4}, \eqref{dpR1-5} \}$, and two SCC-prefixes
$\JJ_1 = \{ \eqref{dpR1-2}, \eqref{dpR1-3} \}$ and $\JJ_2 = \{ \eqref{dpR1-1}, \eqref{dpR1-4},
\eqref{dpR1-5} \}$. In this example,
$\JJ_1$ and $\JJ_2$
represent two completely independent parts
of the dependency graph.

The dependency graph processor now handles
each SCC-prefix $\JJ$ separately. To consider only the effects of the DPs $\JJ$ in the
ADPs of $\PP$, we replace every ADP $\alpha$ by the variant $\alpha|_\JJ$ where only those
symbols are annotated that correspond to the DPs from $\JJ$. For example, for $\JJ_1 = 
\{ \eqref{dpR1-2}, \eqref{dpR1-3} \}$ and the ADP
\begin{align}
  \alpha = \qquad
  \tstart(x,y) \to& \{1:\tQ(\tGeo(x),y,y)\}^{\tfalse}, \tag{\ref{eq:ADP-1-1-usablerule}}
  \end{align}
in $\alpha|_{\JJ_1}$ we only annotate $\tGeo$ but not $\tQ$ in the
right-hand side, due to the DP  $\eqref{dpR1-2}$ which results from its subterm $\tGeo(x)$,
i.e., $\alpha|_{\JJ_1} = \tstart(x,y) \to \{1:\tq(\tGeo(x),y,y)\}^{\tfalse}$.

\begin{restatable}[Dependency Graph Proc.]{theorem}{DependencyGraph}\label{thm:DPG} 
  Let $\langle \P, \SSS \rangle$ be an ADP problem
and let
  $\JJ$ be an SCC-prefix of the
$\PP$-dependency graph.
For any
ADP $\alpha = \ell \to \{p_1:r_1, \ldots, p_k:r_k\}^m \in \PP$
let
$\alpha|_\JJ = \ell \to \{p_1:\sharp_{\Phi_1}(r_1), \ldots, p_k:\sharp_{\Phi_k}(r_k)\}^m$ 
where for 
$1 \leq j \leq k$, we have  $\pi \in \Phi_j$ iff
there exists an $\ell^\sharp \to t^\sharp
\in \JJ$ such that
$t  \trianglelefteq_{\sharp}^\pi r_j$.
Similarly, let $\PP|_\JJ = \{ \alpha|_\JJ \mid \alpha \in
\PP\}$ and  $\SSS|_\JJ = \{ \alpha|_\JJ \mid \alpha \in
\SSS\}$.\myfootnote{Note that if a DP from $\JJ$ could
originate from several subterms $r_j|_\pi$ in several ADPs $\alpha$, then
we annotate all their positions in $\PP|_\JJ$.  
A slightly more powerful variant of the
processor could be obtained by storing for every DP of $\JJ$ from which subterm of
which ADP it originates, and only annotating this 
position. We did not do this in \Cref{thm:DPG} to ease the presentation.}

Then 
$\Proc_{\mathtt{DG}}(\langle \P, \SSS \rangle) =
(\Pol_0, \{ \langle \P|_\JJ, \SSS|_\JJ \rangle \mid \JJ$ is an  SCC-prefix of the
$\PP$-dependency graph $\})$ \pagebreak[3]
is sound.
\end{restatable}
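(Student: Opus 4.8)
The plan is to verify the two soundness conditions \eqref{eq:soundness-1} and \eqref{eq:soundness-2} of \Cref{def:sound_processors} for $c = \Pol_0$ and the subproblems $\langle \P|_\JJ, \SSS|_\JJ \rangle$. Condition \eqref{eq:soundness-2} is the easier one. Since the restriction $\alpha \mapsto \alpha|_\JJ$ leaves the underlying rewrite rule unchanged and only erases some annotations, one checks $\P|_\JJ \setminus \SSS|_\JJ \subseteq (\P\setminus\SSS)|_\JJ$, so $\iota_{\langle\P|_\JJ, \P|_\JJ\setminus\SSS|_\JJ\rangle} \sqsubseteq \iota_{\langle\P|_\JJ,(\P\setminus\SSS)|_\JJ\rangle}$. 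Combined with a monotonicity statement — removing annotations from $\P$ (and correspondingly from $\SSS$) can only decrease the asymptotic complexity, proved by lifting any $\P|_\JJ$-chain tree to a $\P$-chain tree of the same shape but with a superset of annotations at every node, so that every $\bat$- or $\baf$-step counted below stays counted above — this gives $\iota_{\langle\P|_\JJ, \P|_\JJ\setminus\SSS|_\JJ\rangle} \sqsubseteq \iota_{\langle\P, \P\setminus\SSS\rangle}$, and well-formedness of the given proof tree at $v$ (\Cref{def:well-formed_proof_tree}) yields $\iota_{\langle\P,\P\setminus\SSS\rangle} \sqsubseteq L_{\C{C}}(v_1) \oplus \dots \oplus L_{\C{C}}(v_{k-1})$, which is \eqref{eq:soundness-2}.

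For \eqref{eq:soundness-1}, using well-formedness again for the $\P\setminus\SSS$-part, it suffices to show $\iota_{\langle\P,\SSS\rangle} \sqsubseteq \big(\bigoplus_\JJ \iota_{\langle\P|_\JJ,\SSS|_\JJ\rangle}\big) \oplus \iota_{\langle\P,\P\setminus\SSS\rangle}$. The heart of the proof is an annotation-tracing argument on chain trees. First I would show that in any $\P$-chain tree $\F{T}$ starting with a basic term $t^{\sharp}$, every annotation occurring at a node has a well-defined \emph{origin} DP in $\nonprobDP(\P)$ (the DP in the right-hand side of the ADP that last created this annotation in an $\bat$- or $\baf$-step at an ancestor), and that consecutive origins along the ``creation history'' of an annotation are linked by an edge of the $\P$-dependency graph: the intervening $\bnt$-steps strictly below the annotation realize exactly a rewrite sequence $t_1^{\sharp}\sigma_1 \ito^*_{\nonprob(\P)} \ell_2^{\sharp}\sigma_2$, and innermost rewriting supplies the argument-normal-form side conditions. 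The crucial structural point is that, because of the innermost strategy, an annotation that is still a redex can never be duplicated, erased, or relocated by an $\bnt$- or $\bnf$-step (such a subterm would have to sit inside an argument that is already in normal form). Hence such ``live'' annotations can only multiply in $\bat$- or $\baf$-steps, and there every freshly created annotation has an origin that is a dependency-graph successor of the rewritten annotation's origin, so its distance to the acyclic part of the graph strictly decreases.

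Next I would split $\nonprobDP(\P)$ into the DPs reaching some SCC (their union is $\bigcup_\JJ \JJ$) and the remaining ones. For each SCC-prefix $\JJ$, project $\F{T}$ to a $\langle\P|_\JJ,\SSS|_\JJ\rangle$-chain tree $\F{T}_\JJ$ of the same shape that keeps only annotations whose entire creation history lies inside $\JJ$; this is well defined since such a history is a dependency-graph path of DPs all reaching a common SCC, hence pairwise comparable, hence contained in some maximal such set, i.e.\ in some SCC-prefix. Every counted step of $\F{T}$ rewriting an SCC-reaching annotation then reappears, at the same node and with the same probability, in at least one $\F{T}_\JJ$, so the contribution of these steps to $\edl_{\langle\P,\SSS\rangle}(\F{T})$ is at most $\sum_\JJ \edl_{\langle\P|_\JJ,\SSS|_\JJ\rangle}(\F{T}_\JJ)$. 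For the non-SCC-reaching annotations, note that the set of such DPs is closed under dependency-graph successors, so once an annotation has a non-SCC origin its whole descendant cascade stays non-SCC; since that part of the graph is acyclic, the cascade has bounded depth $L \le |\nonprobDP(\P)|$ and bounded branching $B$ (a rhs size), and using the sub-distribution property $\sum_{w \text{ at depth } d \text{ below } u} p_w \le p_u$ one bounds the total counted weight of the cascade rooted at a node $u$ by $B^{L+2}\, p_u$. The roots of these cascades are exactly the $\bat$- or $\baf$-steps rewriting an SCC-reaching annotation together with the initial step, so summing their probabilities over $\F{T}$ gives at most $\sum_\JJ\edl_{\langle\P|_\JJ,\SSS|_\JJ\rangle}(\F{T}_\JJ) + \edl_{\langle\P,\P\setminus\SSS\rangle}(\F{T}) + 1$. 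Altogether $\edl_{\langle\P,\SSS\rangle}(\F{T}) \le (1+B^{L+2})\,\big(\sum_\JJ\edl_{\langle\P|_\JJ,\SSS|_\JJ\rangle}(\F{T}_\JJ) + \edl_{\langle\P,\P\setminus\SSS\rangle}(\F{T}) + 1\big)$; taking suprema over $\F{T}$ and over basic terms of size $\le n$, noting that the number of SCC-prefixes is a constant and that $\iota$ is insensitive to constant factors, additive constants, and to summation, yields $\iota_{\langle\P,\SSS\rangle} \sqsubseteq \big(\bigoplus_\JJ \iota_{\langle\P|_\JJ,\SSS|_\JJ\rangle}\big)\oplus\iota_{\langle\P,\P\setminus\SSS\rangle}$, which with well-formedness establishes \eqref{eq:soundness-1}.

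The main obstacle is exactly this annotation bookkeeping: formally defining the origin of an annotation and the creation-history preorder, and rigorously proving that $\bnt$- and $\bnf$-steps neither create nor duplicate nor move live annotations — this is where the innermost (argument-normal-form) restriction is indispensable — and then making the non-SCC cascade accounting precise in the presence of probabilistic branching and of duplicating, non-right-linear rules. The monotonicity sub-lemma for removing annotations, and the degenerate case where the start symbol has no DP in any SCC-prefix (so the corresponding trees are trivial and the contribution is a constant), are comparatively routine but still need to be carried out carefully.
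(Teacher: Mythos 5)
Your proof is correct and follows essentially the same route as the paper's: condition \eqref{eq:soundness-2} by lifting restricted chain trees back to $\P$-chain trees, and condition \eqref{eq:soundness-1} by projecting a $\P$-chain tree to one restricted tree per SCC-prefix and bounding the lost steps (those belonging to DPs not reaching any SCC) by a constant, exponential in $|\nonprobDP(\P)|$, times the retained steps plus $\edl_{\langle\P,\P\setminus\SSS\rangle}$ plus one, then invoking well-formedness. The only differences are presentational \dash{} the paper tracks ``Junk'' positions (annotations never rewritten with $\JJ_i$-DPs) where you track origins and creation histories \dash{} and your side remark that $\bnf$-steps never erase live annotations is not quite literal (they strip annotations strictly above the rewrite position via $\flat^{\uparrow}_{\pi}$), though this does not affect the accounting since such erasures are mirrored identically in the projected trees.
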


\begin{example}[Dependency Graph Proc.]\label{ex:dependency-graph-processor}
Due to the two SCC-prefixes 
$\JJ_1 = \{ \eqref{dpR1-2}, \eqref{dpR1-3} \}$ and $\JJ_2 = \{ \eqref{dpR1-1},
\eqref{dpR1-4},
\eqref{dpR1-5} \}$, the dependency graph processor transforms the
ADP problem $\langle \P_1,\P_1 \rangle$ from \Cref{ex:usable-rules-proc} into
$\langle {\P_1}|_{\JJ_1},{\P_1}|_{\JJ_1} \rangle$ (corresponding to the $\tGeo$-SCC-Prefix) and
$\langle {\P_1}|_{\JJ_2},{\P_1}|_{\JJ_2} \rangle$ (corresponding to the $\tQ$-SCC-Prefix)
with

\vspace*{-.3cm}

{\small
 \begin{align}
    {\P_1}|_{\JJ_1}: \qquad\quad    \tstart(x,y) &\to \{1:\tq(\tGeo(x),y,y)\}^{\tfalse} \label{eq:ADP-1-1-scc}\\[-.08cm]
        \tgeo(x) &\to \{\nicefrac{1}{2}:\tGeo(\ts(x)), \; \nicefrac{1}{2}:x\}^{\ttrue} \label{eq:ADP-1-2-scc}\\[-.08cm]
            \tq(\ts(x),\ts(y),z) &\to \{1:\tq(x,y,z)\}^{\tfalse} \label{eq:ADP-1-4-scc}\\[-.08cm]
        \tq(x,\tz,\ts(z)) &\to
        \{1:\ts(\tq(x,\ts(z),\ts(z)))\}^{\tfalse} \label{eq:ADP-1-5-scc}\\[-.08cm]
        \tq(\tz,\ts(y),\ts(z)) &\to \{1:\tz\}^{\tfalse}  \label{eq:ADP-1-3-scc}\\
           {\P_1}|_{\JJ_2}: \qquad\quad  \;
   \tstart(x,y) &\to \{1:\tQ(\tgeo(x),y,y)\}^{\tfalse} \label{eq:ADP-1-1-2-scc}\\[-.08cm]
   \tgeo(x) &\to \{\nicefrac{1}{2}:\tgeo(\ts(x)), \;
   \nicefrac{1}{2}:x\}^{\ttrue} \label{eq:ADP-1-2-2-scc}\\[-.08cm]
            \tq(\ts(x),\ts(y),z) &\to \{1:\tQ(x,y,z)\}^{\tfalse} \label{eq:ADP-1-4-2-scc}\\[-.08cm]
        \tq(x,\tz,\ts(z)) &\to
        \{1:\ts(\tQ(x,\ts(z),\ts(z)))\}^{\tfalse} \label{eq:ADP-1-5-2-scc}\\[-.08cm]
            \tq(\tz,\ts(y),\ts(z)) &\to \{1:\tz\}^{\tfalse} \label{eq:ADP-1-3-2-scc}
 \end{align}}	
\end{example}

Our novel dependency graph processor subsumes several previous processors
from the literature, like
the ``leaf removal processor'' of
\cite{noschinski2013analyzing}.
Leaves of the dependency graph like \eqref{dpR1-6}
are not part of any SCC. Hence, they  are never contained in
SCC-prefixes and thus, the
annotations that only correspond to such leaves are always removed. 
For a similar reason,
$\Proc_{\mathtt{DG}}$ from \Cref{thm:DPG}
subsumes the ``rhs simplification processor'' of \cite{noschinski2013analyzing} and the related
``usable terms processor'' of \cite{FLOPS2024}, both of which share the same underlying idea.

\Cref{ex:two-sccs-together} shows that
only considering SCCs without prefixes would be unsound for analyzing
complexity and proving
$\SAST$. 

\begin{example}[Combining Non-Connected SCCs may Increase Complexity]\label{ex:two-sccs-together}
Recall $\R_2$ from \Cref{ex:LeadingExamples} which  is not $\SAST$
and $\ADPair{\R_2}$ from \Cref{ex:canonical-adps}.
The $\ADPair{\R_2}$-dependency graph is depicted
in \Cref{fig:dependency-graph-R2}.
When only considering the SCCs 
  $\mathcal{G}_{\tGeo}$ (containing the $\tGeo$-DP) and 
$\mathcal{G}_{\tF}$ (containing the three  $\tF$-DPs),
then we could falsely ``prove'' $\SAST$ since $\iota_{\langle
  \ADPair{\R_2}|_{\mathcal{G}_{\tGeo}},  \; \ADPair{\R_2}|_{\mathcal{G}_{\tGeo}} \rangle} = \Pol_0$ and
 $\iota_{\langle
  \ADPair{\R_2}|_{\mathcal{G}_{\tF}},  \; \ADPair{\R_2}|_{\mathcal{G}_{\tF}} \rangle} = \Exp$.
The problem is that for the $\tF$-SCC $\mathcal{G}_{\tF}$,
one also has to consider the DP
$\tStart \to \tF(\tgeo(\tz))$
which determines with which instantiations of the variables the $\tF$-SCC is called (i.e.,
it ensures that 
 the $\tgeo$-rules become usable). Indeed, we have
 $\iota(\langle \ADPair{\R_2}, \ADPair{\R_2} \rangle) = \omega$.
\end{example}

\begin{figure}
	\centering
    \small
    \begin{tikzpicture}
        \node[shape=rectangle,draw=black!100] (A) at (-2,1.2) {\footnotesize $\tStart \to \tF(\tgeo(\tz))$};
        \node[shape=rectangle,draw=black!100] (B) at (2,1.2) {\footnotesize $\tF(\ts(x)) \to \tF(\tc(x,x))$};
        \node[shape=rectangle,draw=black!100] (C1) at (-2,2) {\footnotesize $\tF(\tc(x,y)) \to \tF(x)$};
        \node[shape=rectangle,draw=black!100] (C2) at (2,2) {\footnotesize $\tF(\tc(x,y)) \to \tF(y)$};
        
        \node[shape=rectangle,draw=black!100] (D) at (-2,0.4) {\footnotesize $\tStart \to \tGeo(\tz)$};
        \node[shape=rectangle,draw=black!100] (E) at (2,0.4) {\footnotesize $\tGeo(x) \to \tGeo(\ts(x))$ };
        
        \path [->] (A) edge (B);
        \path [->] (A) edge (C1);
        \path [->] (A) edge (C2);
        \path [->] (B) edge (C1);
        \path [->] (B) edge (C2);
        \path [->,in=175, out=185, looseness=5] (C1) edge (C1);
        \path [->,in=5, out=355, looseness=5] (C2) edge (C2);
        \path [->] (C1) edge (C2);
        \path [->] (C1) edge (B);
        \path [->] (C2) edge (C1);
        \path [->] (C2) edge (B);

        \path [->] (D) edge (E);
        \path [->,in=5, out=355, looseness=5] (E) edge (E);
        
    \end{tikzpicture} 
    \caption{$\ADPair{\R_2}$-dependency graph}\label{fig:dependency-graph-R2}
\end{figure}

\subsection{Reduction Pair Processor}\label{sec:red-pair}

Now we lift 
the direct application of polynomial interpretations explained in \Cref{sec:TRS} 
to an ADP processor for complexity analysis of\linebreak PTRSs, which allows us
to apply polynomial interpretations $\I$
in a modular way.
As in the
classical DP approach \cite{arts2000termination,giesl2006mechanizing}, here it suffices\linebreak if 
$\I$ is \emph{weakly} monotonic, i.e., if $x\geq y$ implies $\I_f(...,x,...)\geq
\I_f(...,\linebreak y,...)$ 
for all $f \in \Sigma$ and $x,y \in \IN$.
Moreover, as in \cite{avanzini2020probabilistic,FLOPS2024}, to ensure\linebreak
``weak monotonicity w.r.t.\ expected values'' we restrict ourselves to interpretations
with multilinear polynomials, where all monomials have the form \pagebreak[3]
$c \cdot x_1^{e_1} \cdot \ldots \cdot
x_k^{e_k}$ with $c \in \IN$ and $e_1, \ldots, e_k \in \{0,1\}$.

The  \emph{reduction pair processor} imposes 
three requirements on $\I$:\\
(1) $\,$ All rules with the flag $\ttrue$
must be weakly decreasing in expectation when removing all annotations. Due to weak
monotonicity, this
ensures that 
evaluating the arguments of a function call (i.e., applying the rule in a context)
also decreases weakly in expectation.\\
(2) $\,$ All ADPs must be weakly  decreasing when comparing the
annotated left-hand side $\I(\ell^\sharp)$ with the expected value of the annotated subterms of the
right-hand side $\{ p_1:r_1, \ldots, p_k:r_k \}$. To
measure the value of a term $r_j$, here
we consider all its subterms $t \trianglelefteq_{\sharp} r_j$ at annotated positions
and  sum up the polynomial interpretations of all such $t^\sharp$, i.e., we
consider $\ISum^\sharp(r_j) = \sum_{t \trianglelefteq_{\sharp} r_j} \I(t^{\sharp})$.
Regarding this sum instead of the interpretation $\I(r_j)$ of the whole term $r_j$ is the
reason why only need \emph{weak} monotonicity.\\
(3) $\,$ 
Finally, the processor removes all strictly decreasing
ADPs from the component $\SSS$ of the ADP problem. However, the ADPs are still kept in
$\P$, because they may still be used in reductions.
Moreover, if $\I$ is a CPI, then the processor infers a polynomial bound
corresponding to the degrees of the polynomials used for annotated symbols. Otherwise, it
only infers an exponential bound (which is still useful when analyzing $\SAST$).

\begin{restatable}[Reduction Pair Proc.]{theorem}{ReductionPair}\label{RP-proc}
    Let $\I : \Sigma^{\sharp} \to \IN(\Var)$ be\linebreak a weakly monotonic, multilinear polynomial interpretation.
    Let $\langle \P, \SSS \rangle$ be an ADP problem where 
    $\P = \P_\geq \uplus \P_>$ and $\P_> \cap \SSS \neq \emptyset$ such that:

    \smallskip
    
  {\small  \noindent
    (1) 
     For every $\ell\!\to\! \{p_1\! :\! r_1, \ldots, p_k\! :\! r_k \}^{\ttrue}\! \in \!\P$:
     $\, \I(\ell)\! \geq\! \sum_{1\leq j \leq k} \, p_j \cdot \I(\flat(r_j))$

     \noindent
     (2)
     For every $\ell\! \to\! \{p_1\! :\! r_1, \ldots, p_k\! :\! r_k \}^m \!\in \!\P_\geq$:\,
     $\I(\ell^{\sharp})\! \geq\! \sum_{1\leq j \leq k} \, p_j \cdot \ISum^\sharp(r_j)$

     \noindent
     (3)
     For every $\ell\! \to\! \{p_1\! : r_1, \ldots, p_k\! : r_k \}^m \!\in\! \P_>$:\,
$\I(\ell^{\sharp})\! > \!\sum_{1\leq j \leq k} \,p_j \cdot \ISum^\sharp(r_j)$,

     \noindent
     where
    $\ISum^\sharp(r_j) = \sum_{t \trianglelefteq_{\sharp} r_j} \I(t^{\sharp})$.}

  \smallskip
  
    Then $\Proc_{\mathtt{RP}}(\langle \P, \SSS \rangle)
    = (c, \langle \P, \SSS \setminus \P_>  \rangle)$ is sound, where the complexity $c \in \F{C}$ is
    determined as follows:
    If $\I$ is a CPI and for all annotated symbols $f^\sharp  \in \SignatureA$, the polynomial
    $I_{f^\sharp}$ has at most degree $a$,
    then $c = \Pol_a$. 
    If $\I$
    is not a CPI, then $c = \Exp$ if all constructors are interpreted by linear
    polynomials, and otherwise $c = \DExp$. 
\end{restatable}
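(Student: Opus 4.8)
\textbf{(Plan.)} The plan is to reduce the two soundness conditions \eqref{eq:soundness-1} and \eqref{eq:soundness-2} of \Cref{def:sound_processors} to the single quantitative claim
\[
  \iota_{\langle \P, \P_> \rangle} \;\sqsubseteq\; c ,
\]
and then combine it with well-formedness of the given proof tree. The reduction rests on two elementary observations. First, $\edl_{\langle \P, \SSS \rangle}(\F{T})$ is sub-additive and monotone in its second component: $\edl_{\langle \P, \SSS_1 \cup \SSS_2 \rangle}(\F{T}) \le \edl_{\langle \P, \SSS_1 \rangle}(\F{T}) + \edl_{\langle \P, \SSS_2 \rangle}(\F{T})$, and $\SSS' \subseteq \SSS$ implies $\edl_{\langle \P, \SSS' \rangle}(\F{T}) \le \edl_{\langle \P, \SSS \rangle}(\F{T})$, since $\edl$ merely sums the probabilities of $\bat$- and $\baf$-steps whose ADP lies in the second component; taking suprema this carries over to $\eirc$, and since $\iota(f+g) \sqsubseteq \iota(f) \oplus \iota(g)$ and $\iota$ is monotone w.r.t.\ pointwise $\le$, also to $\iota$. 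Second, $\P \setminus (\SSS \setminus \P_>) = (\P \setminus \SSS) \cup \P_>$. Hence $\iota_{\langle \P, \SSS \rangle} \sqsubseteq \iota_{\langle \P, \P_> \rangle} \oplus \iota_{\langle \P, \SSS \setminus \P_> \rangle} \sqsubseteq c \oplus \iota_{\langle \P, \SSS \setminus \P_> \rangle}$ gives \eqref{eq:soundness-1}, and $\iota_{\langle \P, (\P \setminus \SSS) \cup \P_> \rangle} \sqsubseteq \iota_{\langle \P, \P \setminus \SSS \rangle} \oplus \iota_{\langle \P, \P_> \rangle}$, which by the well-formedness inequality of \Cref{def:well-formed_proof_tree} and the claim is bounded by $L_{\C{C}}(v_1) \oplus \dots \oplus L_{\C{C}}(v_{k-1}) \oplus c$, gives \eqref{eq:soundness-2}. (The hypothesis $\P_> \cap \SSS \ne \emptyset$ is not needed for soundness; it only guarantees that the processor makes progress.)

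For the claim, I introduce the potential $\Phi(s) = \ISum^\sharp(s) = \sum_{t \trianglelefteq_\sharp s} \I(t^\sharp)$, the sum of the interpretations of all annotated subterms of $s$ with their roots re-annotated, and prove the \emph{one-step estimate}: whenever $s \tored{}{}{\P} \{p_1 : t_1, \dots, p_k : t_k\}$, then $\Phi(s) \ge b + \sum_{j=1}^k p_j \cdot \Phi(t_j)$ at every $\IN$-valued instantiation of $\Var(s)$, where $b = 1$ if this is a $\bat$- or $\baf$-step using an ADP from $\P_>$ and $b = 0$ otherwise. This is shown by decomposing $\Phi(s)$ along the redex position $\pi$ into the contributions of annotated positions that are parallel to $\pi$, strictly below $\pi$ inside the redex, at $\pi$, and strictly above $\pi$, followed by a case analysis over the four step kinds of \Cref{def:annotated-dps}. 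Parallel contributions are literally unchanged; below-$\pi$ contributions inside the redex vanish on the right-hand side and hence only help. For $\baf$- and $\bnf$-steps the annotations strictly above $\pi$ are deleted by $\disannoPos{\pi}$, so those contributions only help as well; for $\bat$- and $\bnt$-steps they are retained, and here requirement~(1) enters: such steps only use ADPs with flag $\ttrue$, for which $\I(\ell) \ge \sum_j p_j \I(\flat(r_j))$ holds, and by weak monotonicity the interpretation of an annotated ancestor of $\pi$ is a non-decreasing function of the interpretation of its subterm rooted at $\pi$ --- crucially an \emph{affine} function of that single argument, because $\I$ is multilinear, so expectation commutes with it and the ancestor contributions do not increase in expectation. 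The contribution at $\pi$ itself is exactly $\I(\ell^\sharp)\sigma$, handled by requirement~(2) for ADPs in $\P_\ge$ and by requirement~(3) for ADPs in $\P_>$ (using that a strict inequality between polynomials with natural coefficients forces a decrease by at least $1$ at every $\IN$-instantiation), after noting $\ISum^\sharp(r_j \sigma) = \ISum^\sharp(r_j)\sigma$ since the matcher $\sigma$ of an innermost step is annotation-free.

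From the one-step estimate, an induction over the finite prefixes of a chain tree (and the fact that $\edl_{\langle \P, \P_> \rangle}(\F{T})$ is the supremum of the expected derivation lengths of those prefixes) yields $\edl_{\langle \P, \P_> \rangle}(\F{T}) \le \Phi(\rootterm(\F{T}))$ at every instantiation, hence $\edh_{\langle \P, \P_> \rangle}(t) \le \I(t^\sharp)$ evaluated at the all-ones instantiation for every basic term $t$ (for basic $t$, only the root of $t^\sharp$ is annotated, so $\ISum^\sharp(t^\sharp) = \I(t^\sharp)$). It then remains the routine estimation of interpretations of data terms: for a CPI the interpretation of a constructor term is linear in its size, so $\I(t^\sharp) = \I_{f^\sharp}(\dots)$ is in $\O(n^a)$ when $\I_{f^\sharp}$ has degree $\le a$, giving $c = \Pol_a$; for a non-CPI with linear constructor polynomials the interpretation of a data term is in $\O(2^{\mathrm{poly}(n)})$ and a multilinear $\I_{f^\sharp}$ keeps it there, giving $c = \Exp$; and an arbitrary polynomial interpretation yields a doubly-exponential bound, $c = \DExp$ --- exactly as in the direct use of interpretations recalled in \Cref{sec:TRS}. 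I expect the \textbf{main obstacle} to be the bookkeeping in the one-step estimate, in particular matching the contributions of annotated ancestors of $\pi$ between the $\bat$- and $\baf$-cases and isolating the precise point where multilinearity (not merely weak monotonicity) is indispensable; by contrast, the reduction via sub-additivity and the complexity-class estimations are comparatively routine.
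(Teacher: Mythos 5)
Your proposal follows essentially the same route as the paper's proof: the reduction of conditions \eqref{eq:soundness-1} and \eqref{eq:soundness-2} to $\iota_{\langle \P, \P_> \rangle} \sqsubseteq c$ via sub-additivity of $\edl$ in its second component, the identity $\P \setminus (\SSS \setminus \P_>) = (\P\setminus\SSS)\cup\P_>$, and well-formedness is exactly the paper's argument, and your potential $\Phi = \ISum^\sharp$ with the telescoping bound over chain trees and the final size estimation of interpretations of basic terms matches the paper's $\I_0^\sharp$/$\mathrm{poloC}$ argument (the paper delegates the one-step case analysis to the lifting from the $\AST$ setting, which you spell out). One small correction: the strict decrease in condition (3) is \emph{not} at least $1$ per step, since $\sum_j p_j \cdot \ISum^\sharp(r_j)$ has rational (not natural) coefficients; the paper instead fixes $\varepsilon = \min_\alpha \varepsilon_\alpha > 0$ (a common-denominator argument suffices) and concludes $\edl_{\langle \P,\P_>\rangle}(\F{T}) \leq \nicefrac{1}{\varepsilon}\cdot \I_0^\sharp(t^\sharp)$, which changes nothing asymptotically.
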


In contrast to the reduction pair processor for proving $\AST$ from \cite{FLOPS2024}, 
our processor cannot remove any annotation from $\P$.
The reason is that the ADPs from $\P_>$ may still be needed to reach all annotated
terms that are relevant for the complexity of $\P_{\geq}$.
(This problem already occurs
when analyzing complexity in
the non-probabilistic setting, see, e.g.,  \cite[Ex.\ 25]{noschinski2013analyzing}.)
But we can remove strictly decreasing ADPs from $\SSS$ and,
therefore, do not have to count them anymore for the complexity.
The complexity of the removed ADPs is accounted for by $c \in \{\Pol_a, \, \Exp, \,
\DExp\}$.

\begin{example}[Reduction Pair Processor]\label{ex:ch6-rloop-5}
  Consider 
  $\langle {\P_1}|_{\JJ_1}, {\P_1}|_{\JJ_1} \rangle$ from 
  \Cref{ex:dependency-graph-processor},
  and a polynomial interpretation $\I$ with
  $\I_{\tStart}(x,y) = 2$, $\I_{\tGeo}(x) = \I_\tQ = 1$, 
  and $\I_{\ts}(x) = \I_{\tgeo}(x) = x+1$.
  Then $\tgeo(x) \to \{\nicefrac{1}{2}:\tGeo(\ts(x)), \,
  \nicefrac{1}{2}:x\}^{\ttrue}$ \eqref{eq:ADP-1-2-scc}
  is weakly
  decreasing in expectation when disregarding annotations, since 
  $\I(\tgeo(x)) = x + 1 =
  \nicefrac{1}{2} \cdot
  \I(\tgeo(\ts(x))) +
  \nicefrac{1}{2} \cdot \I(x)$. 
  Moreover, when regarding annotations, then
  all ADPs are strictly decreasing: for \eqref{eq:ADP-1-2-scc} we have
  $\I(\tGeo(x)) = 1 >  \nicefrac{1}{2} = \nicefrac{1}{2} \cdot \I(\tGeo(\ts(x)))$, 
  for $\tstart(x,y) \to \{1:\tq(\tGeo(x), y,\linebreak y)\}^{\tfalse}$ \eqref{eq:ADP-1-1-scc} we have
  $\I(\tStart(x,y)) = 2 > 1 = \I(\tGeo(x))$, and for
  \eqref{eq:ADP-1-4-scc}-\eqref{eq:ADP-1-3-scc} we have $\I(\tQ(\ldots)) = 1 > 0$ (as their right-hand sides do not
  contain annotations). Since $\I$ is a CPI which interprets
  all annotated symbols as constants, we
  obtain 
  $\Proc_{\mathtt{RP}} (\langle {\P_1}|_{\JJ_1}, {\P_1}|_{\JJ_1} \rangle)  
  = (\Pol_0, \langle {\P_1}|_{\JJ_1}, \emptyset \rangle)$, containing
  a solved \pagebreak[3] ADP problem.

  For the other ADP problem $\langle {\P_1}|_{\JJ_2}, {\P_1}|_{\JJ_2} \rangle$ from 
  \Cref{ex:dependency-graph-processor},
  we use a polynomial interpretation with $\I_{\tStart}(x,y) = x+3$,
  $\I_{\tGeo}(x) = 1$, and $\I_\tQ(x,y,z) = \I_\ts(x) =\I_{\tgeo}(x) = x+1$.
  Then \eqref{eq:ADP-1-2-2-scc} is again weakly decreasing  when disregarding
    annotations. When regarding the annotations, then  
  the ADP \eqref{eq:ADP-1-5-2-scc} is weakly decreasing
  (since $\I(\tQ(x,\tz,\ts(z))) = x + 1 = \I(\tQ(x,\ts(z),\ts(z)))$), and all other 
  ADPs are strictly decreasing. 
  Since $\I$ is a CPI where $\tStart$ and $\tQ$ are interpreted as linear polynomials, 
  we get $\Proc_{\mathtt{RP}} (\langle {\P_1}|_{\JJ_2}, {\P_1}|_{\JJ_2} \rangle)  
  = (\Pol_1, \langle {\P_1}|_{\JJ_2}, \{\eqref{eq:ADP-1-5-2-scc}\}  \rangle)$.
  However,  there is no
  polynomial interpretation which orients \eqref{eq:ADP-1-5-2-scc} strictly 
  and the other ADPs weakly.
  Thus, we need another processor to solve the remaining problem.
\end{example}

\subsection{Knowledge Propagation Processor}\label{sec:knowledge-propagation}

The dependency graph can not only be used to decompose an ADP problem $\langle \P, \SSS
\rangle$ according to the
SCC-prefixes via the dependency graph processor, but it can also be used 
to remove an ADP $\alpha$ from $\SSS$ if all ``\emph{predecessors}'' of $\alpha$ have
already been taken into account. 
More precisely, let $\Pre(\alpha) \subseteq \P$
contain all ADPs
that can
``generate'' a redex for a step with $\alpha$ at an annotated position, i.e.,
 $\Pre(\alpha)$
consists of all ADPs $\beta \in \P$ such that there is an edge from some DP in 
$\nonprobDP^\bot(\beta)$ to some DP in $\nonprobDP^\bot(\alpha)$ in the $\P$-dependency
graph.
Note that $\nonprobDP^\bot(\alpha) \neq \emptyset$ for all ADPs $\alpha$.
If $d$ is the maximal number of annotated symbols in any term on a right-hand side of an
ADP from $\PP$, then in any $\P$-chain tree $\F{T}$, the probabilities of $\alpha$-steps
can be over-approximated as follows. Except for 
the very first step,
every $\bat$- or $\baf$-step
with $\alpha$ is preceded by a step with some ADP $\beta$ from $\Pre(\alpha)$.
Every term in  $\beta$'s right-hand side can 
trigger at most
$d$ $\alpha$-steps.
If the $\beta$-step had probability $p$, then adding all probabilities for these $\alpha$-steps
 yields at most $d \cdot p$.
Since the very first step of the tree might also be an $\alpha$-step, one obtains
\[\begin{tabular}{r@{\;\;}c@{\;\;}l}
$\edl_{\langle \P, \{ \alpha \} \rangle}(\F{T})$ &$\leq$& $1 +
\sum_{v \in V^{\F{T}} \setminus \ctleaf^{\F{T}}, \;
  \P(v) \in \Pre(\alpha) \times  \{\bat, \baf\}} \, d \cdot p_v$\\
&$=$& $1 + d
\cdot   \edl_{\langle \P, \Pre(\alpha) \rangle}(\F{T}).$
\end{tabular}
\]
This in turn implies
$\eirc_{\langle \P,\{ \alpha \} \rangle}(n) \leq 1 + d
\cdot \eirc_{\langle
  \P,\Pre(\alpha) \rangle}(n)$
for all $n \in \IN$, and thus,
$\iota_{\langle \P,\{ \alpha \} \rangle}  \sqsubseteq  \iota_{\langle \P,\Pre(\alpha) \rangle}$.

Hence, if $\alpha \in \SSS$ and
$\Pre(\alpha) \cap \SSS = \emptyset$ (i.e., $\Pre(\alpha) \subseteq \P \setminus \SSS$),
then
in any well-formed proof tree with a
node $v$ where $L_{\C{A}}(v) = \langle \P, \SSS \rangle$, the
ADPs from $\Pre(\alpha)$ have already been taken into account 
in the path $v_1,\ldots,v_k = v$ from the root node $v_1$ to $v$, i.e.,
\[ \iota_{\langle \P,\{ \alpha \} \rangle}  \sqsubseteq  \iota_{\langle \P,\Pre(\alpha)
  \rangle} \sqsubseteq 
 \iota_{\langle \P, \P \setminus \SSS \rangle} \sqsubseteq L_{\C{C}}(v_1) \oplus \dots
 \oplus L_{\C{C}}(v_{k-1}).\]
 As the proof tree already contains  knowledge about $\Pre(\alpha)$'s
 com\-plexity, the \emph{knowledge propagation processor} removes $\alpha$
from $\SSS$.

\begin{restatable}[Knowledge Propagation Proc.\ $\!$]{theorem}{KnowledgeProp}\label{def:kp-proc}
  Let $\langle \P,\SSS \rangle$ be an ADP problem, let $\alpha \in \SSS$ 
  and $\Pre(\alpha)
  \cap \SSS = \emptyset$, where  $\Pre(\alpha)$
  consists of all  ADPs $\beta \in \P$ such that there is an edge from some DP in 
  $\nonprobDP^\bot(\beta)$ to some DP in $\nonprobDP^\bot(\alpha)$ in the $\P$-dependency graph.
  Then the following processor is sound:
  \begin{align*}
      \Proc_{\mathtt{KP}}(\langle \P,\SSS \rangle) 
      = (\Pol_0, \langle \P, \SSS \setminus \{\alpha\} \rangle)
  \end{align*}
\end{restatable}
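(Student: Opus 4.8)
The plan is to verify soundness directly from \Cref{def:sound_processors}: we must show that for any well-formed proof tree and node $v$ with $L_{\C{A}}(v) = \langle \P, \SSS \rangle$ and path $v_1, \ldots, v_k = v$ from the root, conditions \eqref{eq:soundness-1} and \eqref{eq:soundness-2} hold with $c = \Pol_0$, $n = 1$, $\langle \P_1, \SSS_1 \rangle = \langle \P, \SSS \setminus \{\alpha\} \rangle$. Condition \eqref{eq:soundness-2} is immediate: $\P_1 \setminus \SSS_1 = \P \setminus (\SSS \setminus \{\alpha\}) = (\P \setminus \SSS) \cup \{\alpha\}$, and since $\Pre(\alpha) \cap \SSS = \emptyset$ together with well-formedness of the input tree gives $\iota_{\langle \P, \P \setminus \SSS \rangle} \sqsubseteq L_{\C{C}}(v_1) \oplus \cdots \oplus L_{\C{C}}(v_{k-1})$, and $\iota_{\langle \P, \{\alpha\} \rangle} \sqsubseteq \iota_{\langle \P, \Pre(\alpha) \rangle} \sqsubseteq \iota_{\langle \P, \P \setminus \SSS \rangle}$ (the key inequality discussed just before the theorem), we get $\iota_{\langle \P_1, \P_1 \setminus \SSS_1 \rangle} \sqsubseteq \iota_{\langle \P, \P \setminus \SSS \rangle} \oplus \iota_{\langle \P, \{\alpha\} \rangle} \sqsubseteq L_{\C{C}}(v_1) \oplus \cdots \oplus L_{\C{C}}(v_{k-1}) = L_{\C{C}}(v_1) \oplus \cdots \oplus L_{\C{C}}(v_{k-1}) \oplus \Pol_0$, using monotonicity of $\oplus$ w.r.t.\ componentwise complexity of the $\SSS$-argument.

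For condition \eqref{eq:soundness-1} we must show $\iota_{\langle \P, \SSS \rangle} \sqsubseteq L_{\C{C}}(v_1) \oplus \cdots \oplus L_{\C{C}}(v_{k-1}) \oplus \Pol_0 \oplus \iota_{\langle \P, \SSS \setminus \{\alpha\} \rangle}$. The first step is to observe that for any $\P$-chain tree $\F{T}$ and any finite set of ADPs $\SSS$, the expected derivation length is additive over the partition of ADPs, i.e.\ $\edl_{\langle \P, \SSS \rangle}(\F{T}) = \edl_{\langle \P, \SSS \setminus \{\alpha\} \rangle}(\F{T}) + \edl_{\langle \P, \{\alpha\} \rangle}(\F{T})$, since the sum in \Cref{def:expected-derivation-length-extension} ranges over disjoint sets of nodes (those whose step uses an ADP in $\SSS \setminus \{\alpha\}$, resp.\ uses $\alpha$). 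Passing to suprema over all chain trees with a fixed basic root and then over basic terms of bounded size, together with the sub-additivity of suprema, yields $\eirc_{\langle \P, \SSS \rangle}(n) \leq \eirc_{\langle \P, \SSS \setminus \{\alpha\} \rangle}(n) + \eirc_{\langle \P, \{\alpha\} \rangle}(n)$ for all $n$. Now I invoke the bound established in the text preceding the theorem, $\eirc_{\langle \P, \{\alpha\} \rangle}(n) \leq 1 + d \cdot \eirc_{\langle \P, \Pre(\alpha) \rangle}(n)$, where $d$ is the maximal number of annotated symbols on a right-hand side of an ADP in $\P$; hence $\iota_{\langle \P, \{\alpha\} \rangle} \sqsubseteq \iota_{\langle \P, \Pre(\alpha) \rangle}$ since adding a constant and multiplying by the constant $d$ does not change the asymptotic complexity class. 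Finally, because $\Pre(\alpha) \subseteq \P \setminus \SSS$, well-formedness of the input proof tree gives $\iota_{\langle \P, \Pre(\alpha) \rangle} \sqsubseteq \iota_{\langle \P, \P \setminus \SSS \rangle} \sqsubseteq L_{\C{C}}(v_1) \oplus \cdots \oplus L_{\C{C}}(v_{k-1})$. Combining, $\iota_{\langle \P, \SSS \rangle} \sqsubseteq \iota_{\langle \P, \SSS \setminus \{\alpha\} \rangle} \oplus \iota_{\langle \P, \{\alpha\} \rangle} \sqsubseteq \iota_{\langle \P, \SSS \setminus \{\alpha\} \rangle} \oplus L_{\C{C}}(v_1) \oplus \cdots \oplus L_{\C{C}}(v_{k-1})$, which is exactly \eqref{eq:soundness-1} with $c = \Pol_0$.

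The main obstacle is establishing the per-chain-tree inequality $\edl_{\langle \P, \{\alpha\} \rangle}(\F{T}) \leq 1 + d \cdot \edl_{\langle \P, \Pre(\alpha) \rangle}(\F{T})$ rigorously, i.e.\ the claim that every $\bat$- or $\baf$-step with $\alpha$ in $\F{T}$, except possibly the one at the root, is the immediate successor of a step using some ADP in $\Pre(\alpha)$, and that a single step with $\beta \in \Pre(\alpha)$ at probability $p$ can "spawn" at most $d$ such $\alpha$-steps. The subtle point is the correspondence between edges in the dependency graph (which quantify over substitutions and reductions of $\nonprob(\P)$) and actual consecutive annotated steps in a chain tree: one must argue that if $t_v$ contains an annotated redex that is rewritten by $\alpha$ at a node $v$ that is not the root, then the annotation on that redex was created by the step at $v$'s parent $u$, hence $\P(u)$'s right-hand side contains, among its $\le d$ annotated subterms, one that reduces (via $\itor$, i.e.\ via $\nonprob(\P)$-steps at non-root positions interleaved with the innermost structure) to an $\alpha$-redex — which is precisely an edge $\nonprobDP^\bot(\P(u)) \to \nonprobDP^\bot(\alpha)$, so $\P(u) \in \Pre(\alpha)$. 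Making this bookkeeping precise — in particular handling the fact that annotations can persist through several $\bnt$-steps before the $\alpha$-step fires, and that $\bnt$/$\bnf$-steps are not counted in $\edl$ — is where the real work lies; one likely tracks, for each annotated position in each node, the most recent ancestor step that introduced it, and charges the $\alpha$-step to that ancestor.
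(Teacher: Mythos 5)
Your proposal is correct and follows essentially the same route as the paper's own proof: both reduce soundness to the additive split $\iota_{\langle \P, \SSS \rangle} = \iota_{\langle \P, \SSS \setminus \{\alpha\} \rangle} \oplus \iota_{\langle \P, \{\alpha\} \rangle}$, the key per-tree bound $\edl_{\langle \P, \{\alpha\}\rangle}(\F{T}) \leq 1 + d \cdot \edl_{\langle \P, \Pre(\alpha)\rangle}(\F{T})$ (which the paper likewise imports from the discussion preceding the theorem without further elaboration), and the chain $\iota_{\langle \P,\{\alpha\}\rangle} \sqsubseteq \iota_{\langle \P,\Pre(\alpha)\rangle} \sqsubseteq \iota_{\langle \P, \P\setminus\SSS\rangle} \sqsubseteq L_{\C{C}}(v_1)\oplus\cdots\oplus L_{\C{C}}(v_{k-1})$ via well-formedness. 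Your closing remark correctly identifies the one place where the argument is informal (charging each $\alpha$-step to the ancestor step that created its annotation), but the paper's proof leaves this at the same level of detail.
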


\begin{example}[Knowledge Propagation Processor]\label{ex:ch6-rloop-6}
 We still have to solve the ADP problem $\langle {\P_1}|_{\JJ_2},
 \{\eqref{eq:ADP-1-5-2-scc}\}  \rangle$ from \Cref{ex:ch6-rloop-5}.
 We have $\nonprobDP^\bot(\eqref{eq:ADP-1-5-2-scc}) = \{ \eqref{dpR1-5} \}$ and
 \pagebreak[3]
 the only DPs
with edges to $\eqref{dpR1-5}$ in the dependency graph of \Cref{fig:dependency-graph} are \eqref{dpR1-1} and
\eqref{dpR1-4}, where $\eqref{dpR1-1} \in \nonprobDP^\bot(\eqref{eq:ADP-1-1-2-scc})$ and
$\eqref{dpR1-4} \in \nonprobDP^\bot(\eqref{eq:ADP-1-4-2-scc})$.
Thus,  $\Pre(\eqref{eq:ADP-1-5-2-scc}) = \{ \eqref{eq:ADP-1-1-2-scc},
\eqref{eq:ADP-1-4-2-scc} \}$, i.e.,
in particular $\eqref{eq:ADP-1-5-2-scc} \notin
\Pre(\eqref{eq:ADP-1-5-2-scc})$. Hence, we can 
 apply the knowledge propagation processor and obtain $\Proc_{\mathtt{KP}}(\langle {\P_1}|_{\JJ_2},
 \{\eqref{eq:ADP-1-5-2-scc}\}  \rangle) = (\Pol_0, \langle  {\P_1}|_{\JJ_2}, \emptyset
 \rangle)$, again containing a solved ADP problem.

The 
 solved proof tree is shown in \Cref{fig:proof_tree_R1}.
 Thus, we inferred that $\R_1$ is $\SAST$ and its
 complexity is at most linear, i.e.,
  $\iota_{\R_1} \sqsubseteq \Pol_1$.
 
\end{example}

\begin{figure}
	\centering
    \scriptsize
    \begin{tikzpicture}
        \tikzstyle{adam}=[thick,draw=black!100,fill=white!100,minimum size=4mm, rectangle split parts=2,rectangle split horizontal]
        \tikzstyle{empty}=[rectangle,thick,minimum size=4mm]
        
        \node[empty,label=left:{\small \textcolor{red}{$\Pol_0$}}] at (0, 0)  (a) {$\langle \ADPair{\R_1}, \ADPair{\R_1} \rangle$};

        \node[empty,label=left:{\small \textcolor{red}{$\Pol_0$}}] at (0, -0.8)  (b)
             {$\langle
\{\eqref{eq:ADP-1-1-usablerule} -
    \eqref{eq:ADP-1-3-usablerule}\}, \{\eqref{eq:ADP-1-1-usablerule} -
    \eqref{eq:ADP-1-3-usablerule}\} \rangle$};

        \node[empty,label=left:{\small \textcolor{red}{$\Pol_0$}}] at (-2, -1.6)  (c)
             {$\langle
\{ \eqref{eq:ADP-1-1-scc} - \eqref{eq:ADP-1-3-scc} \}, \{ \eqref{eq:ADP-1-1-scc} - \eqref{eq:ADP-1-3-scc} \}\rangle$};
        \node[empty,label=left:{\small \textcolor{red}{$\Pol_0$}}] at (2, -1.6)  (d)
             {$\langle
\{ \eqref{eq:ADP-1-1-2-scc} - \eqref{eq:ADP-1-3-2-scc} \}, \{ \eqref{eq:ADP-1-1-2-scc} -
\eqref{eq:ADP-1-3-2-scc} \} \rangle$};

        \node[empty,label=left:{\small \textcolor{red}{$\Pol_0$}}] at (-2, -2.4)  (c1)
             {$\langle
\{ \eqref{eq:ADP-1-1-scc} - \eqref{eq:ADP-1-3-scc} \}, \emptyset \rangle$};

        \node[empty,label=left:{\small \textcolor{red}{$\Pol_1$}}] at (2, -2.4)  (d1)
             {$\langle
\{ \eqref{eq:ADP-1-1-2-scc} - \eqref{eq:ADP-1-3-2-scc} \}, \{ \eqref{eq:ADP-1-5-2-scc} \}  \rangle$};

        \node[empty,label=left:{\small \textcolor{red}{$\Pol_0$}}] at (2, -3.2)  (d11)
             {$\langle
\{ \eqref{eq:ADP-1-1-2-scc} - \eqref{eq:ADP-1-3-2-scc} \}, \emptyset  \rangle$};
        
        \draw (a) edge[->] node[right] {$\Proc_{\mathtt{UR}}$} (b);
        \draw (b) edge[->] node[right=19pt] {$\Proc_{\mathtt{DG}}$} (c);
        \draw (b) edge[->] (d);
        \draw (c) edge[->] node[right] {$\Proc_{\mathtt{RP}}$} (c1);
        \draw (d) edge[->] node[right] {$\Proc_{\mathtt{RP}}$} (d1);
        \draw (d1) edge[->] node[right] {$\Proc_{\mathtt{KP}}$} (d11);
    \end{tikzpicture} 
    \caption{Solved proof tree for $\R_1$}\label{fig:proof_tree_R1}
\end{figure}

\subsection{Probability Removal Processor}\label{sec:prob-removal}

Our framework may also yield
ADP (sub)problems with non-probabi\-lis\-tic structure, 
i.e., where every ADP has the form $\ell \to \{1:r\}^{m}$.
Then, the \emph{probability removal processor} can switch
to ordinary (non-probabilistic) DT problems
for complexity analysis from \cite{noschinski2013analyzing}.

These DT problems have four components 
$(\P, \SSS, \K, \R)$:
a set of dependency tuples $\P$,
the subset $\SSS \subseteq \P$
that is counted for complexity,
a subset $\K$ whose complexity has already been taken into account (see \Cref{K-Component}),
and a set of rewrite rules $\R$.

\begin{restatable}[Probability Removal Processor]{theorem}{ProbabilityRemovalProc}\label{theorem:prob-NPP}
	Let $\langle \PP, \SSS \rangle$ be an ADP problem where every ADP in $\PP$ has the form $\ell \to \{1:r\}^{m}$.
    Let
    $\nonprobDT(\ell \to \{1\!:r\}^{m}) = \ell^\sharp \to [t^\sharp_1, \ldots, t^\sharp_n]$ 
    if $\{t \mid t \trianglelefteq_{\sharp} r\} = \{t_1, \ldots,\linebreak
    t_n\}$, and let
   $\nonprobDT(\P) = \{\nonprobDT(\alpha) \mid \alpha \in \PP\}$.
    Then the expected runtime complexity of $\langle \P, \SSS \rangle$ is equal to the
    runtime
    complexity of the
    non-probabilistic DT problem $\beta = (\nonprobDT(\P),\nonprobDT(\SSS),\nonprobDT(\P \setminus \SSS), 
\nonprob(\P))$.  
    So the processor $\Proc_{\mathtt{PR}}(\langle \PP, \SSS \rangle) = (c, \emptyset)$
    is sound if the DT framework returns 
    $c$ as an upper bound on the runtime complexity of
$\beta$.
\end{restatable}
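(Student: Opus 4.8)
Since every ADP in $\PP$ has the non-probabilistic form $\ell \to \{1:r\}^m$, the relation $\tored{i}{}{\PP}$ maps every term to a one-element multi-distribution, so every $\PP$-chain tree is a (possibly infinite) linear path on which $p_v = 1$ for all nodes $v$. Hence, by \Cref{def:expected-derivation-length-extension,def:expected-complexity-of-terms}, $\edl_{\langle\PP,\SSS\rangle}$ of such a path equals the number of its steps that are $\bat$- or $\baf$-steps using an ADP from $\SSS$, and $\edh_{\langle\PP,\SSS\rangle}(t)$ equals the supremum of this number over all $\tored{i}{}{\PP}$-sequences starting in $t^{\sharp}$. It therefore suffices to prove, for every basic $t \in \TB_{\PP}$, that this supremum coincides with the quantity underlying the runtime-complexity measure of \cite{noschinski2013analyzing} for the DT problem $\beta$, namely the supremum over all innermost $(\nonprobDT(\PP),\nonprob(\PP))$-chains starting in $t^{\sharp}$ of the number of dependency-tuple steps using a DT from $\nonprobDT(\SSS)$. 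Equality of $\eirc_{\langle\PP,\SSS\rangle}$ with the complexity function of $\beta$, and thus of $\iota_{\langle\PP,\SSS\rangle}$ with the runtime complexity of $\beta$, then follows by taking suprema over $|t|\le n$ and applying $\iota$; here we use that the $\K$-component $\nonprobDT(\PP\setminus\SSS)$ of $\beta$ is only auxiliary bookkeeping (cf.\ \Cref{K-Component}), so any bound $c$ the DT framework returns for $\beta$ is a valid upper bound on $\iota_{\langle\PP,\SSS\rangle}$.

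The heart of the proof is a step-by-step simulation, in both directions, between $\tored{i}{}{\PP}$-sequences from $t^{\sharp}$ and innermost $(\nonprobDT(\PP),\nonprob(\PP))$-chains from $t^{\sharp}$, in the spirit of the correspondence between $\ito_{\R}$ and $\tored{i}{}{\ADPair{\R}}$ used for the chain criterion (\Cref{theo:chain-criterion-extended-adps}). The simulation maintains the invariant that the positions of annotated symbols of the current ADP term are precisely the positions of the ``active'' dependency-pair calls in the DT configuration, while the non-annotated part records the current argument-rewrite state. Under this invariant, an $\bat$- or $\baf$-step with ADP $\alpha$ at an annotated position $\pi$ corresponds to applying the dependency tuple $\nonprobDT(\alpha)$ at $\pi$: the subterms $t_j \trianglelefteq_{\sharp} r$ at annotated positions of the right-hand side are exactly the components of $\nonprobDT(\alpha)$, so the multisets of active calls stay in sync; such a step is counted iff $\alpha \in \SSS$ iff $\nonprobDT(\alpha) \in \nonprobDT(\SSS)$; and an annotation-free ADP $\alpha$ becomes the DT $\ell^{\sharp} \to [\,]$ with empty right-hand side, so an $\bat$/$\baf$-step with such an $\alpha$ correctly just consumes one active call. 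An $\bnt$-step with a flag-$\ttrue$ ADP $\ell \to \{1:r\}^{\ttrue}$ at a non-annotated position corresponds to an ordinary rewrite step with the rule $\ell \to \flat(r) \in \nonprob(\PP)$ that evaluates the arguments of a pending call, and is not counted on either side. Conversely, every DT-step and every $\nonprob(\PP)$-rewrite step of a $\beta$-chain is mirrored by such an ADP step; one uses here that the innermost side condition $\flat(s|_{\pi}) = \ell\sigma \in \ANF_{\PP}$ of $\tored{i}{}{\PP}$ agrees with the innermost restriction of DT-chains, since annotated symbols that happen to be in normal form never become redexes and hence never obstruct argument evaluation.

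The only remaining case is that of $\bnf$-steps (flag $\tfalse$, non-annotated position), which have no DT-analogue because the corresponding rule is absent from $\nonprob(\PP)$. Such a step rewrites a subterm $s|_{\pi} = \ell\sigma \in \ANF_{\PP}$ and, via $\disannoPos{\pi}$, deletes all annotations strictly above $\pi$. We argue that it can always be elided without changing the count: the call whose annotation it destroys would in the $\beta$-chain never fire again anyway, since its pending argument can no longer be reduced with $\nonprob(\PP)$; and any annotations that remain below $\pi$ sit inside the argument-normal-form $\ell\sigma$, hence on subterms that are $\tored{i}{}{\PP}$-normal forms and will therefore never fire either. Thus omitting the $\bnf$-step changes neither the set of calls that can still contribute to the count nor the innermost structure of the rest of the sequence. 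Together with the observation that the very first step of a sequence/chain needs no preceding generating step on either side, this yields $\edh_{\langle\PP,\SSS\rangle}(t)$ equal to the runtime complexity of $\beta$ evaluated at $t$ for all $t \in \TB_{\PP}$, and hence $\iota_{\langle\PP,\SSS\rangle}$ equals the runtime complexity of $\beta$.

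Finally, soundness of $\Proc_{\mathtt{PR}}(\langle\PP,\SSS\rangle) = (c,\emptyset)$ in the sense of \Cref{def:sound_processors} is immediate: \eqref{eq:soundness-2} is vacuous since there are no resulting subproblems, and \eqref{eq:soundness-1} reduces to $\iota_{\langle\PP,\SSS\rangle} \sqsubseteq L_{\C{C}}(v_1) \oplus \cdots \oplus L_{\C{C}}(v_{k-1}) \oplus c$, which holds because $\iota_{\langle\PP,\SSS\rangle}$ equals the runtime complexity of $\beta$ and $c$ is, by assumption, an upper bound returned by the DT framework for it. I expect the main obstacle to be making the annotation bookkeeping rigorous: verifying the ``active calls $\leftrightarrow$ annotated positions'' invariant through $\bat$/$\baf$-steps together with the interplay of the flag $m$, showing that $\bnf$-steps are genuinely redundant, and cross-checking that our innermost side condition ``$\in \ANF_{\PP}$'' coincides with the innermost DT-chain notion of \cite{noschinski2013analyzing}.
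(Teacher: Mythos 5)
Your bidirectional simulation between $\PP$-chain trees (which, as you note, degenerate to single paths with all $p_v = 1$) and $\langle \nonprobDT(\PP), \nonprob(\PP)\rangle$-DT chain trees is essentially the paper's own argument: $\bat$/$\baf$-steps become DT applications, the intermediate argument evaluations become $\ito^*_{\nonprob(\PP)}$-steps, a surviving root annotation forces all steps strictly below it to use flag-$\ttrue$ ADPs (hence rules of $\nonprob(\PP)$), and $\bnf$-steps only destroy annotations that could never fire anyway. That part is sound.

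The gap is in your final soundness step. In the DT framework of \cite{noschinski2013analyzing}, the runtime complexity of the four-component problem $\beta = (\nonprobDT(\P),\nonprobDT(\SSS),\nonprobDT(\K),\nonprob(\P))$ with $\K = \P\setminus\SSS$ is defined as $\iota_{\langle \nonprobDT(\P), \nonprobDT(\SSS), \nonprob(\P) \rangle} \ominus \iota_{\langle \nonprobDT(\P), \nonprobDT(\K), \nonprob(\P) \rangle}$, where $c' \ominus d' = \Pol_0$ whenever $d' \sqsupseteq c'$. So the $\K$-component is \emph{not} mere bookkeeping: the bound $c$ returned for $\beta$ may legitimately be $\Pol_0$ even though $\iota_{\langle \P,\SSS\rangle} = \iota_{\langle \nonprobDT(\P), \nonprobDT(\SSS), \nonprob(\P) \rangle}$ is, say, quadratic --- namely whenever the complexity of the already-accounted-for part $\K$ dominates that of $\SSS$. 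In that case your claimed inequality $\iota_{\langle\PP,\SSS\rangle} \sqsubseteq L_{\C{C}}(v_1)\oplus\cdots\oplus L_{\C{C}}(v_{k-1}) \oplus c$ does not follow from ``$c$ bounds the complexity of $\beta$''; it requires well-formedness of the proof tree, via $\iota_{\langle\P,\SSS\rangle} \sqsubseteq \iota_{\langle\P,\P\setminus\SSS\rangle} \sqsubseteq L_{\C{C}}(v_1)\oplus\cdots\oplus L_{\C{C}}(v_{k-1})$. The paper's proof makes exactly this case distinction: if the $\SSS$-part is dominated by the $\K$-part, soundness holds for any $c$ by well-formedness; otherwise $\ominus$ collapses and $c$ genuinely bounds $\iota_{\langle\P,\SSS\rangle}$. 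Relatedly, the identity you need from the simulation is $\iota_{\langle\P,X\rangle} = \iota_{\langle\nonprobDT(\P),\nonprobDT(X),\nonprob(\P)\rangle}$ for \emph{every} $X\subseteq\P$ (in particular for $X = \P\setminus\SSS$), not only for $X=\SSS$.
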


When proving $\AST$ as in \cite{FLOPS2024}, one should
move to the non-probabilistic DP framework for termination whenever possible, because then one can analyze function
calls in  right-hand sides of  rules separately.
In contrast, in
the non-probabilistic DT framework for complexity analysis,
one also has to consider all function calls from a right-hand side
simultaneously.
However, 
the switch to the non-probabilistic setting is still advantageous, since, e.g., the
reduction pair processor
of the non-probabilistic DT framework allows more orderings than just multilinear
polynomial interpretations.

\subsection{Further Processors}\label{sec:further}

In the non-probabilistic setting,
there are several further processors used to derive upper complexity bounds.
For example, the DP and DT frameworks of
\cite{arts2000termination,giesl2006mechanizing,noschinski2013analyzing}
also provide processors that use narrowing,
rewriting, or instantiations to 
\emph{transform} DP/DT problems to 
simplify the task of proving termination or complexity.
To show how to adapt such transformational processors 
for our novel ADP framework,
we
present a corresponding adapted
\emph{narrowing processor} in\report{ \Cref{sec:trans}.}\paper{  \cite{PPDPreport}.}

\pagebreak[3]
\section{Evaluation}\label{sec:Evaluation}

We implemented our new DP framework for upper bounds on the expected innermost runtime complexity 
in our termination prover \tool{AProVE}~\cite{JAR-AProVE2017}.
To this end, \tool{AProVE} first creates the canonical ADPs and then applies
processors according to the following strategy:

\begin{table}
  \centering
  \footnotesize
  \begin{center}
      \begin{tabular}{||c | c | c | c | c ||}
      \hline
      Strategy & Start Terms & \tool{POLO} & \natt{} & \aprove{} \\ [0.5ex] 
      \hline
      Full & Arbitrary & 30 & 33 & 33 \\ 
      \hline
      Full & Basic & 30 & 33 & 44 \\ 
      \hline
      Innermost & Arbitrary & 30 & 33 & 42 \\ 
      \hline
      Innermost & Basic & 30 & 33 & 55 \\ 
      \hline
      \end{tabular}
  \end{center}
  \caption{\vspace*{-.5cm}Number of $\SAST$ proofs found by each tool}\label{fig:tab_One}
\end{table}

First, we try
to apply the dependency graph processor $\Proc_{\mathtt{DG}}$, the usable rules processor
$\Proc_{\mathtt{UR}}$,
the knowledge propagation processor $\Proc_{\mathtt{KP}}$,
and the probability removal processor $\Proc_{\mathtt{PR}}$ in this order.
The advantage of these processors is that they do not rely on searching (i.e., they are very fast)
and they simplify
the ADP problem whenever they are applicable. If none of these processors can be applied
anymore, then we search for 
CPIs for the reduction pair processor  $\Proc_{\mathtt{RP}}$ in order to derive polynomial
complexity bounds for certain ADPs (otherwise, we try to apply
$\Proc_{\mathtt{RP}}$ with a non-CPI
polynomial
interpretation to derive an exponential bound).  
As soon as one of the processors is applied successfully, we re-start the strategy, since
other processors might be applicable again on the simplified 
subproblems.
Moreover, before the first application of the reduction pair processor, we
use\report{ the rule overlap instantiation processor $\Proc_{\mathtt{ROI}}$.}\paper{ our
adaption of the narrowing processor.}
Since it does not always help in inferring an upper bound and often increases the number of ADPs,
we use\report{ $\Proc_{\mathtt{ROI}}$}\paper{ this processor} only once on a fixed number of
terms.

For every PTRS, the user
can indicate whether one wants to analyze termination or complexity, 
consider arbitrary  or only basic start terms,
and whether one wants
to analyze innermost or full rewriting (with an arbitrary rewrite strategy).
Since our novel DP framework only works for innermost rewriting and basic start terms,
if the user asks for complexity analysis or $\SAST$ of full rewriting,
we check
whether the PTRS belongs to a known class where,
e.g., upper bounds on the expected innermost runtime complexity are upper bounds
w.r.t.\
an arbitrary rewrite strategy as well. 
Such properties were studied in \cite{FoSSaCS2024,FoSSaCS-Journal}.
If one wants to consider arbitrary instead of basic
start terms, we perform the transformation
of \cite{fuhs2019transformingdctorc}
(adapted to PTRSs, see 
\cite{FoSSaCS2024,FoSSaCS-Journal})
in order to 
move from derivational to runtime complexity, i.e.,  the PTRS $\R$ is transformed 
into a new PTRS $\R'$ such that the complexity of $\R'$ on basic start terms is a bound
on the 
complexity of $\R$ on all start terms.

For our evaluation,
we used the benchmark set of all 128 PTRSs from
the \emph{Termination Problem Data Base} \cite{TPDB}, i.e., the benchmarks considered for the
annual \emph{Termination and Complexity Competition} \cite{termcomp}, containing 128
typical probabilistic programs, including examples with complicated probabilistic
structure and probabilistic algorithms on lists and 
trees.
Note that this set  also contains many examples that are $\AST$, but not
$\SAST$. Therefore, we extended the collection by  10 additional examples 
that  are interesting
for
expected complexity analysis (including all examples from our paper).

To evaluate our novel framework for proving $\SAST$,
we compare \tool{AProVE}  with its
previous version (called ``\tool{POLO}'' in \Cref{fig:tab_One})
whose 
only way to 
prove 
$\SAST$ was to search for a monotonic, multilinear polynomial interpretation such that
all rules of the PTRS are strictly decreasing \cite{avanzini2020probabilistic}, and with
the
tool \tool{NaTT} \cite{natt_sys_2014} that implements polynomial and matrix interpretations to prove
$\SAST$.
Thus, \tool{POLO} and \tool{NaTT}  neither consider a specific rewrite
strategy nor start terms.  As shown by the
experiments on all 138 PTRSs in \Cref{fig:tab_One},
our novel ADP framework
increases
the power of proving $\SAST$
significantly.
Here, the numbers indicate for how many PTRSs the respective tool proved $\SAST$.
Note that as usual for benchmarks in termination or complexity analysis,
the termination behavior or runtime complexity of the benchmarks
is in general not known, i.e., the best quantitative performance measure for a tool is the comparison to other tools.

\begin{table}
  \centering
  \footnotesize
  \begin{center}
      \begin{tabular}{||c | c | c | c | c | c | c | c ||}
      \hline
      Strategy & Start Terms & $\Pol_0$ & $\Pol_1$ & $\Pol_2$ & $\Exp$ & $\DExp$ & $\omega$ \\ [0.5ex] 
      \hline
      Full & Arbitrary & 2 & 3 & 0 & 26 & 2 & 105 \\ 
      \hline
      Full & Basic & 2 & 13 & 0 & 27 & 2 & 94 \\ 
      \hline
      Innermost & Arbitrary & 4 & 10 & 3 & 24 & 1 & 96 \\ 
      \hline
      Innermost & Basic & 14 & 11 & 6 & 23 & 1 & 83 \\ 
      \hline
      \end{tabular}
  \end{center}
  \caption{\vspace*{-.5cm}Runtime complexities proven by \aprove{}}\label{fig:tab_Two}
\end{table}

\Cref{fig:tab_Two} shows the
upper bounds inferred by \aprove.
So \aprove{} obtains
numerous constant and/or linear
bounds, even for full rewriting and/or arbitrary start terms.
Note that in contrast to the non-probabilistic setting, a PTRS with expected constant
runtime
is not necessarily trivial as it
can have evaluations of unbounded (and even infinite) length.
However, the transformation of \cite{fuhs2019transformingdctorc}
to move  from arbitrary to basic start terms may add rules with\linebreak linear runtime
to the PTRS. This explains
the low number of constant upper bounds for arbitrary start
terms.
Due to the restriction to multilinear polynomial interpretations,
$\Pol_a$ for $a > 1$ can currently only be inferred from multilinear, but non-linear
interpretations like $\I_{f^\sharp}(x,y) = x
\cdot y$. In the future, we intend to extend our implementation to
 also use,
e.g., 
matrix orderings \cite{DBLP:journals/jar/EndrullisWZ08}
in
order to improve the inference of polynomial bounds of higher degrees.

For more details on our experiments, the collection of examples,
and for instructions on how to run our implementation
in \textsf{AProVE} via its \emph{web interface}
  or locally, we refer to:
\begin{center}
  \url{https://aprove-developers.github.io/PTRSExpectedRuntime/}
\end{center}

\section{Conclusion}\label{sec:Conclusion}

In this paper, we presented
the first DP framework 
to infer upper bounds 
on the expected innermost runtime complexity of PTRSs automatically. 
Our implementation in \aprove{}
is the first tool for automatic complexity analysis
of PTRSs and it improves substantially over previous tools to analyze $\SAST$ of PTRSs.

There are several directions for future work,
e.g., by extending the reduction pair processor of  \Cref{RP-proc} to other orderings, by
adapting  further transformational processors to our new ADP framework,
and
by developing variants of our framework that are directly applicable for full instead of
innermost rewriting and/or for arbitrary instead of basic start terms.

\begin{acks}
 {\small This work is supported \pagebreak[3] by the \grantsponsor{ 2236 }{DFG Research
    Training Group} \grantnum{ 2236 }{ UnRAVeL}.}
\end{acks}

\bibliographystyle{ACM-Reference-Format}
\bibliography{biblio}

\report{\appendix

\section{Additional Theory and Proofs}\label{Additional Theory}

In this appendix,
we first present some additional theory
on the  size of a term under a polynomial interpretation
in \Cref{subsec:poly-interpret}. 
Afterwards, we prove all our theorems and lemmas in \Cref{subsec:proofs}.
Finally, we adapt the transformational \emph{narrowing processor} to our novel
ADP framework in \Cref{sec:trans}. 

\subsection{Size of Polynomial Interpretations}\label{subsec:poly-interpret}
We first analyze the size of $\I_0(t)$ for every term $t$ and every polynomial interpretation $\I$.
Here, $\I_0$ denotes the interpretation that behaves like $\I$, and
in addition, maps every variable to $0$.

\subsubsection{Size of Constructor Terms}
Let $\I$ be a CPI.
Recall that a $\CPI$ $\I$ guarantees that $\I(f(x_1, \ldots, x_k)) = a_1 x_1 + \ldots + a_k x_k + b$, 
with $a_j \in \{0,1\}, b\in \IN$ for all constructors $f \in \Sigma_{C}$.
Let $s \in \TSet{\SignatureC}{\VSet}$ be a constructor term, i.e., a term
containing only constructors and variables. 
Let $b_{max}$ be the maximum of all $b$ for all constructors $f \in \Sigma_C$.
For $|s| = 1$ we get $\I_0(s) \leq b_{max}$, 
and for $|s| > 1$ we get $\I_0(s) \leq \I_0(s') + b_{max}$ 
for some constructor term $s'$ with $|s'| = |s| - 1$.
Hence, for every term $s \in \TSet{\SignatureC}{\VSet}$
we obtain $\I_0(s) \leq b_{max} \cdot |s|$.

Next, let $\I$ not be a CPI but linear on constructors,
i.e., $\I(f(x_1,\linebreak \ldots, x_k)) = a_1 x_1 + \ldots + a_k x_k + b$, 
with $a_1, \ldots, a_n, b\in \IN$ for all constructors $f \in \Sigma_{C}$.
Let $b_{max}$ be the maximum of all $b$ and $a_{max}$ be the maximum of all $a$.
Again, let $s \in \TSet{\SignatureC}{\VSet}$.
For $|s| = 1$ we get $\I_0(s) \leq b_{max}$, 
and for $|s| > 1$ we get $\I_0(s) \leq a_{max} \cdot \I_0(s') + b_{max}$ 
for some constructor term $s'$ with $|s'| = |s| - 1$.
This is a first-order non-homogeneous linear recurrence relation, 
which can be upper-bounded by $2^{\pl(|s|)}$, for some polynomial $\pl(n)$ in $n$,
i.e., we have $\I_0(s) \in \mathcal{O}(2^{\pl(|s|)})$.

Finally, if $\I$ is not a CPI and not even linear on constructors, 
e.g., $\I(f(x_1, x_2)) = x_1 \cdot x_2$ or $\I(f(x_1, x_2)) = x_1^2 + x_2^2$,
then we can only guarantee that $\I_0(s) \leq 2^{2^{\pl(|s|)}}$, for some polynomial $\pl(n)$ in $n$.
To be precise,
let  $a_{max}$ be the maximum
coefficient of all non-constant monomials in all interpretations $I_f$ of constructors $f$,
and 
let $b_{max}$ be the maximum constant monomial in all these interpretations. 
Moreover, let $c_{max}$ be the maximum degree of all $\I(f(x_1, \ldots, x_k))$, 
and $d$ be the maximal arity for all constructors $f \in \Sigma_C$.
For $|s| = 1$ we again have $\I_0(s) \leq b_{max}$, 
and for $|s| > 1$ we get $\I_0(s) \leq 2^d \cdot a_{max} \cdot \I_0(s')^{c_{max}} + b_{max}$ 
for some constructor term $s'$ with $|s'| = |s| - 1$.
After conversion to homogeneous form, and taking the $\log$ on both sides,
we result in a first-order linear recurrence relation again.
Hence, we have $\log(\I_0(s)) \leq 2^{\pl(|s|)}$, for some polynomial $\pl(n)$,
i.e., we have $\I_0(s) \in \mathcal{O}(2^{2^{\pl'(|s|)}})$, for some polynomial $\pl'(n)$.

\subsubsection{Size of Basic Terms}
First, let $\I$ be a CPI where
$\I(f(x_1, \ldots,\linebreak x_k)) = a_1 x_1 + \ldots + a_k x_k + b$, 
with $a_i \in \{0,1\}, b\in \IN$ for all constructors $f \in \Sigma_{C}$.
Again, let $b_{max}$ be the maximum of all $b$ for all constructors $f \in \Sigma_C$.
Moreover, let $t = f(t_1, \ldots, t_k) \in \TB_{\R}$.
Here, we have
\begin{align*}
    \I_0(t) &= \I_{f}(\I_0(t_1), \ldots, \I_0(t_k))\\
    &\leq \I_{f}(b_{max} \cdot |t_1|, \ldots, b_{max} \cdot |t_k|)\\
    &\leq \I_{f}(b_{max} \cdot |t|, \ldots, b_{max} \cdot |t|)\\
    &\leq b^a_{max} \cdot \I_{f}(|t|, \ldots, |t|), \quad \text{where $a$ is the degree of } \I_{f}\!
\end{align*}
Hence, we have
$\I_0(t) \in \mathcal{O}(|t|^a)$.

If $\I$ is only linear on constructors,
we get $\I_0(t) \in \mathcal{O}(2^{\pl(|t|)})$, 
and if $\I$ is not linear, we obtain $\I_0(t) \in \mathcal{O}(2^{2^{\pl(|t|)}})$ in a similar fashion.

\subsubsection{Size of Arbitrary Terms}
If we have $\I(f(x_1, \ldots, x_k)) = a_1 x_1 + \ldots + a_k x_k + b$
with $a_j \in \{0,1\}, b\in \IN$ for all $f \in \Sigma$, then $\I_0(t) \in \mathcal{O}(|t|)$.
If $\I$ is linear for all $f \in \Sigma$, then $\I_0(t) \in \mathcal{O}(2^{\pl(|t|)})$,
and if $\I$ is not linear, we obtain $\I_0(t) \in \mathcal{O}(2^{2^{\pl(|t|)}})$.

\subsubsection{Bounds on Expected Complexity via Polynomials}
Assume that there exists a monotonic polynomial interpretation $\I$ such that $\I(\ell) > \I(r)$ for every rule $\ell \to r \in \R$.
Since the interpretation is decreasing for every rule and bounded from below by $0$,
the interpretation $\I(t)$ of a term $t$ gives an upper bound on its derivation height.
Hence, e.g., if $\I$ is a CPI, then the runtime complexity is at most $\Pol_a$, where $a$
is the highest degree of $\I_f$ for
any defined symbol $f$.

\subsection{Proofs}\label{subsec:proofs}

To prove the soundness of the chain criterion, we proceed as follows:
Given an $\R$-RST $\F{T}$ that starts with the basic term $t$, we create an $\ADPair{\R}$-CT $\F{T}'$ that starts with $t^{\sharp}$ and
mirrors every rewrite step by using the corresponding ADP for the used rewrite rule.
We will show that each rewrite step takes place at an annotated position, i.e.,
we always use case $(\mathbf{at})$ with the ADPs.
Since the resulting CT has the same structure as the original RST (same nodes with the same probabilities) 
and we only have $(\mathbf{at})$-steps (that we count for the expected derivation length of a CT),
we get $\edl(\F{T}) = \edl_{\langle \ADPair{\R}, \ADPair{\R} \rangle}(\F{T}')$ for every such RST $\F{T}$,
and hence $\edh_\R(t) \leq \edh_{\langle \ADPair{\R}, \ADPair{\R} \rangle}(t)$ for every basic term $t$.
For the other direction, i.e., $\edh_\R(t) \geq \edh_{\langle \ADPair{\R}, \ADPair{\R} \rangle}(t)$,
we transform every $\ADPair{\R}$-CT $\F{T}$ into an $\R$-RST $\F{T}'$ by simply removing all occurring annotations.
Then, we directly get $\edl(\F{T}') = \edl_{\langle \ADPair{\R}, \ADPair{\R} \rangle}(\F{T})$ again.

When applying an ADP, we may remove annotations of normal forms,
hence, we define the set of all positions of subterms that are not in normal form.
During the construction of the proof of the chain criterion, we show that at least those
positions are annotated.
In the following, as usual we say that two positions $\pi_1$ and $\pi_2$ are \emph{orthogonal}
(or \emph{parallel}) if $\pi_1$ is not above $\pi_2$ and $\pi_2$ is not above $\pi_1$.

\begin{definition}[{\normalfont{$\PosDPoss$}}]
  \label{def:prop-important-sets}
  Let $\R$ be a PTRS.
  For a term $t \in \TT$ we define $\normalfont{\PosDPoss}(t) = \{\pi \mid \pi \in \posD(t), t|_\pi \notin \NF_{\R}\}$.
\end{definition}

\ChainCrit*

\begin{proof}
    In the following, we will often implicitly use that for an annotated term $t \in \TT^\sharp$, we have $\flat(t) \in \ANF_{\R}$ 
    iff $t \in \ANF_{\ADPair{\R}}$ since a rewrite rule and its corresponding canonical annotated dependency pair have the same left-hand side.
    \smallskip

    \noindent
    \textbf{Soundness} ($\edh_\R(t) \leq \edh_{\langle \ADPair{\R}, \ADPair{\R} \rangle}(t)$):
    Let $\F{T}=(V,E,L)$ be an $\R$-RST whose root is labeled with $(1:t)$ for some term $t \in \TB_{\R}$.
    We create an $\ADPair{\R}$-CT $\F{T}'=(V,E,L')$ that starts with $t^{\sharp}$ and
    mirrors every rewrite step by using the ADP corresponding to
    the used rewrite rule.
    Moreover, we ensure that every rewrite step is an $(\mathbf{at})$-step, and hence, 
    we have $\edl(\F{T}) = \edl_{\langle \ADPair{\R}, \ADPair{\R} \rangle}(\F{T}')$.
    
    We construct the new labeling $L'$ for the $\ADPair{\R}$-CT inductively such that for all inner nodes $x \in V \setminus \ctleaf$ 
    with children nodes $xE = \{y_1,\ldots,y_k\}$ we have 
    $t_x' \tored{}{}{\ADPair{\R}} \{\tfrac{p_{y_1}}{p_x}:t_{y_1}', \ldots, \tfrac{p_{y_k}}{p_x}:t_{y_k}'\}$ 
    and use Case $(\mathbf{at})$.
    Let $X \subseteq V$ be the set of nodes $x$ where we have already defined the labeling $L'(x)$.
    During our construction, we ensure that the following property holds for all $x \in X$:
    \begin{equation} \label{chain-crit-1-soundness-induction-hypothesis}
        \flat(t_x) = \flat(t_x') \land \PosDPoss(t_x) \subseteq \posT(t_x').
    \end{equation}
    This means that the corresponding term $t_x$ for the node $x$ in $\F{T}$ has the same structure as the term $t_x'$ in $\F{T}'$, 
    and additionally, all the possible redexes in $t_x$ are annotated in $t_x'$.

    \vspace*{-0.4cm}
    \begin{center}
    \scriptsize
    \begin{tikzpicture}
        \tikzstyle{adam}=[thick,draw=black!100,fill=white!100,minimum size=4mm, shape=rectangle split, rectangle split parts=2,rectangle split horizontal] \tikzstyle{empty}=[rectangle,thick,minimum size=4mm]

        \node[empty] at (-3, 0) (rst) {$\R$-RST:};
        \node[adam] at (-1.5, 0) (a) {$1$
        \nodepart{two} $t$};
        \node[adam] at (-2, -0.8) (b) {$p_1$
        \nodepart{two} $t_{1}$};
        \node[adam] at (-1, -0.8) (c) {$p_2$
        \nodepart{two} $t_{2}$};
        \node[adam] at (-3, -1.6) (d) {$p_3$
        \nodepart{two} $t_3$};
        \node[adam] at (-2, -1.6) (e) {$p_4$
        \nodepart{two} $t_4$};
        \node[adam] at (-1, -1.6) (f) {$p_5$
        \nodepart{two} $t_5$};
        \node[empty] at (-3, -2.4) (g) {$\ldots$};
        \node[empty] at (-2, -2.4) (h) {$\ldots$};
        \node[empty] at (-1, -2.4) (i) {$\ldots$};

        \node[empty] at (0, -0.8) (arrow) {$\leadsto$};

        \node[empty] at (1, 0) (ct) {$\ADPair{\R}$-CT:};
        \node[adam,pin={[pin distance=0.1cm, pin edge={,-}] 140:\tiny \textcolor{blue}{$(\mathbf{at})$}}] at (2.5, 0) (a2) {$1$
            \nodepart{two} $\annoD(t)$};
        \node[adam,pin={[pin distance=0.1cm, pin edge={,-}] 140:\tiny \textcolor{blue}{$(\mathbf{at})$}}] at (2, -0.8) (b2) {$p_1$
            \nodepart{two} $t'_1$};
        \node[adam,pin={[pin distance=0.1cm, pin edge={,-}] 45:\tiny \textcolor{blue}{$(\mathbf{at})$}}] at (3, -0.8) (c2) {$p_2$
            \nodepart{two} $t'_2$};
        \node[adam,pin={[pin distance=0.1cm, pin edge={,-}] 140:\tiny \textcolor{blue}{$(\mathbf{at})$}}] at (1, -1.6) (d2) {$p_3$
            \nodepart{two} $t'_3$};
        \node[adam,pin={[pin distance=0.1cm, pin edge={,-}] 45:\tiny \textcolor{blue}{$(\mathbf{at})$}}] at (2, -1.6) (e2) {$p_4$
            \nodepart{two} $t'_4$};
        \node[adam,pin={[pin distance=0.1cm, pin edge={,-}] 45:\tiny \textcolor{blue}{$(\mathbf{at})$}}] at (3, -1.6) (f2) {$p_5$
            \nodepart{two} $t'_5$};
        \node[empty] at (1, -2.4) (g2) {$\ldots$};
        \node[empty] at (2, -2.4) (h2) {$\ldots$};
        \node[empty] at (3, -2.4) (i2) {$\ldots$};

        \draw (a) edge[->] (b);
        \draw (a) edge[->] (c);
        \draw (b) edge[->] (d);
        \draw (b) edge[->] (e);
        \draw (c) edge[->] (f);
        \draw (d) edge[->] (g);
        \draw (e) edge[->] (h);
        \draw (f) edge[->] (i);

        \draw (a2) edge[->] (b2);
        \draw (a2) edge[->] (c2);
        \draw (b2) edge[->] (d2);
        \draw (b2) edge[->] (e2);
        \draw (c2) edge[->] (f2);
        \draw (d2) edge[->] (g2);
        \draw (e2) edge[->] (h2);
        \draw (f2) edge[->] (i2);
    \end{tikzpicture}
    \end{center}
    \vspace*{-0.4cm}

    We label the root of $\F{T}'$ with $\annoD(t)$.
    Here, we obviously have $\flat(t) = \flat(\annoD(t))$ and $\PosDPoss(t) \subseteq \posD(t) = \posT(\annoD(t))$.
    As long as there is still an inner node $x \in X$ such that its successors are not contained in $X$, we do the following.
    Let $xE = \{y_1, \ldots, y_k\}$ be the set of its successors.
    We need to define the corresponding terms $t_{y_1}', \ldots, t_{y_k}'$ for the nodes $y_1, \ldots, y_k$.
    Since $x$ is not a leaf, we have $t_x \itor \{\tfrac{p_{y_1}}{p_x}:t_{y_1}, \ldots, \tfrac{p_{y_k}}{p_x}:t_{y_k}\}$.
    This means that there is a rule $\ell \to \{p_1:r_1, \ldots, p_k:r_k\} \in \R$, a position $\pi$, 
    and a substitution $\sigma$ such that ${t_x}|_\pi = \ell\sigma \in \ANF_{\R}$.
    Furthermore, we have $t_{y_j} = t_x[r_j \sigma]_{\pi}$ for all $1 \leq j \leq k$.
    So the labeling of the successor $y_j$ in $\F{T}$ is $L(y_j) = (p_x \cdot p_j: t_x[r_j\sigma]_\pi)$ for all $1 \leq j \leq k$.

    The corresponding ADP for the rule is $\ell \to \{ p_1 : \annoD(r_1), \ldots, p_k : \annoD(r_k) \}^{\ttrue}$.
    Furthermore, $\pi \in \PosDPoss(t_x) \subseteq \posT(t_x')$ and $\flat(t_x) =
    \flat(t_x')$ by the induction hypothesis.
    Hence, we can rewrite $t_x'$ with $\ell \to \{ p_1 : \annoD(r_1), \ldots, p_k : \annoD(r_k) \}^{\ttrue}$, 
    using the position $\pi$ and the substitution $\sigma$, and Case $(\mathbf{at})$ applies.
    We get $t_x' \tored{}{}{\ADPair{\R}} \{p_1: t_{y_1}', \ldots, p_k: t_{y_k}'\}$ with $t_{y_j}' = t_x'[\annoD(r_j) \sigma]_{\pi}$.
    This means that we have $\flat(t_{y_j}) = \flat(t_{y_j}')$.
    
    It remains to prove $\PosDPoss(t_{y_j}) \subseteq \posT(t_{y_j}')$ for all $1 \leq j \leq k$.
    For all $\tau \in \PosDPoss(t_{y_j}) = \PosDPoss(t_x[r_j \sigma]_{\pi})$ that are
    orthogonal to or above $\pi$, 
    we have $\tau \in \PosDPoss(t_{x},\R) \subseteq \posT(t_{x}')$ by the induction hypothesis, 
    and all annotations orthogonal to or above $\pi$ remain in $t_{y_j}'$ as they were in $t_{x}'$.
    For all positions $\tau \in \PosDPoss(t_{y_j}) = \PosDPoss(t_x[r_j \sigma]_{\pi})$ that are below $\pi$, 
    we know that, due to innermost evaluation, at least the defined root symbol of a term that is not in normal form must be inside $r_j$, 
    and thus $\tau \in \posT(t_{y_j}')$, as all defined symbols of $r_j$ are annotated in $t_{y_j}' = t_x'[\annoD(r_j) \sigma]_{\pi}$.
    \smallskip

    \noindent
    \textbf{Completeness} ($\edh_\R(t) \geq \edh_{\langle \ADPair{\R}, \ADPair{\R} \rangle}(t)$): 
    Let $\F{T}=(V,E,L)$ be an $\ADPair{\R}$-RST whose root is labeled with $(1:t^\sharp)$ for some term $t \in \TB_{\R}$.
    We create an $\R$-RST $\F{T}'=(V,E,L')$ that starts with $t$ and
    mirrors every rewrite step by using the corresponding original probabilistic rule for the used ADP.
    Hence, we get $\edl_{\langle \ADPair{\R}, \ADPair{\R} \rangle}(\F{T}) = \edl(\F{T}')$.

    We label all nodes $x \in V$ in $\F{T}'$ with $\flat(t_x)$, where $t_x$ is the term for the node $x$ in $\F{T}$, i.e., we remove all annotations.
    We only have to show that $\F{T}'$ is indeed a valid RST, i.e., that the edge relation represents valid rewrite steps with $\to_{\R}$, 
    but this follows directly from the fact that if we remove all annotations in \Cref{def:annotated-dps}, 
    then we get the ordinary probabilistic term rewrite relation again.

    \vspace*{-0.1cm}
    \begin{center}
    \scriptsize
    \begin{tikzpicture}
        \tikzstyle{adam}=[thick,draw=black!100,fill=white!100,minimum size=4mm, shape=rectangle split, rectangle split parts=2,rectangle split horizontal] \tikzstyle{empty}=[rectangle,thick,minimum size=4mm]

        \node[empty] at (-3, 0) (ct) {$\ADPair{\R}$-CT:};
        \node[adam] at (-1.6, 0) (a) {$1$
        \nodepart{two} $t$};
        \node[adam] at (-2.2, -0.8) (b) {$p_1$
        \nodepart{two} $t_{1}$};
        \node[adam] at (-1, -0.8) (c) {$p_2$
        \nodepart{two} $t_{2}$};
        \node[adam] at (-3.4, -1.6) (d) {$p_3$
        \nodepart{two} $t_3$};
        \node[adam] at (-2.2, -1.6) (e) {$p_4$
        \nodepart{two} $t_4$};
        \node[adam] at (-1, -1.6) (f) {$p_5$
        \nodepart{two} $t_5$};
        \node[empty] at (-3.4, -2.4) (g) {$\ldots$};
        \node[empty] at (-2.2, -2.4) (h) {$\ldots$};
        \node[empty] at (-1, -2.4) (i) {$\ldots$};

        \node[empty] at (0, -0.8) (arrow) {$\leadsto$};

        \node[empty] at (1, 0) (rst) {$\R$-RST:};
        \node[adam] at (2.8, 0) (a2) {$1$
        \nodepart{two} $\flat(t)$};
        \node[adam] at (2.2, -0.8) (b2) {$p_1$
        \nodepart{two} $\flat(t_1)$};
        \node[adam] at (3.4, -0.8) (c2) {$p_2$
        \nodepart{two} $\flat(t_2)$};
        \node[adam] at (1, -1.6) (d2) {$p_3$
        \nodepart{two} $\flat(t_3)$};
        \node[adam] at (2.2, -1.6) (e2) {$p_4$
        \nodepart{two} $\flat(t_4)$};
        \node[adam] at (3.4, -1.6) (f2) {$p_5$
        \nodepart{two} $\flat(t_5)$};
        \node[empty] at (1, -2.4) (g2) {$\ldots$};
        \node[empty] at (2.2, -2.4) (h2) {$\ldots$};
        \node[empty] at (3.4, -2.4) (i2) {$\ldots$};

        \draw (a) edge[->] (b);
        \draw (a) edge[->] (c);
        \draw (b) edge[->] (d);
        \draw (b) edge[->] (e);
        \draw (c) edge[->] (f);
        \draw (d) edge[->] (g);
        \draw (e) edge[->] (h);
        \draw (f) edge[->] (i);

        \draw (a2) edge[->] (b2);
        \draw (a2) edge[->] (c2);
        \draw (b2) edge[->] (d2);
        \draw (b2) edge[->] (e2);
        \draw (c2) edge[->] (f2);
        \draw (d2) edge[->] (g2);
        \draw (e2) edge[->] (h2);
        \draw (f2) edge[->] (i2);
    \end{tikzpicture}
    \end{center}
    \vspace*{-0.2cm}
\end{proof}

\WellFormednessPreservation*

\begin{proof}
  For the first condition required for well-formed proof trees, note that
in $\F{P}$, we had
  $L_{\C{C}}'(v) =
\iota_{\langle \PP, \SSS \rangle}$, because $v$ was a leaf. Since $v$
is not a leaf anymore  in $\F{P}'$, here we have
$L_{\C{C}}'(v) = L_{\C{C}}(v) = c$.
However, $v$ now has the children $w_1, \ldots, w_n$. Let $v_1, \ldots, v_k = v$ be the
path from the root to $v$. Hence, we obtain
\begin{align*}
   & L_{\C{C}}(v_1) \oplus \ldots \oplus L_{\C{C}}(v_{k-1})
\oplus
\iota_{\langle \PP, \SSS \rangle}\\
\sqsubseteq \;& L_{\C{C}}(v_1) \oplus  \ldots \oplus L_{\C{C}}(v_{k-1})
\oplus c \oplus
\iota_{\langle \PP_1, \SSS_1 \rangle}
\oplus \ldots \oplus
\iota_{\langle \PP_n, \SSS_n \rangle} \tag{$\dagger$}\\
=  \;& L_{\C{C}}(v_1) \oplus  \ldots \oplus L_{\C{C}}(v_{k-1})
\oplus L_{\C{C}}(v_{k}) \oplus L_{\C{C}}'(w_1) \oplus \ldots 
L_{\C{C}}'(w_n)
\end{align*}
Here, $(\dagger)$ holds due to the first
condition \eqref{eq:soundness-1} of \cref{def:sound_processors} 
Thus, $\F{P}'$ also satisfies
the first condition for well-formed proof trees.

  Now we consider the second condition required for well-formed proof trees.
       Since $\F{P}$ is well formed and $\Proc$ is sound,
      the second condition \eqref{eq:soundness-2} of \cref{def:sound_processors} 
    implies that $\iota_{\langle \P_{i}, \P_i \setminus \SSS_i \rangle} \sqsubseteq L_{\C{C}}(v_1) \oplus \dots \oplus L_{\C{C}}(v_{k-1}) \oplus L_{\C{C}}(v_{k})$
    with $L_{\C{C}}(v_{k}) = c$ holds for all $1 \leq i \leq n$.
    Hence, $\F{P}'$ is well formed as well.
\end{proof}

\UsableRules*

\begin{proof}
    Let $\F{P}$ be a well-formed proof tree with $L_{\C{A}}(v) = \langle \P, \S \rangle$ and let $v_1,\ldots,v_k = v$ be the path from the root node $v_1$ to $v$.
    Moreover, let $L_{\C{A}}(w) = \langle \P', \S' \rangle$ for the only successor $w$ of $v$ in the proof tree.

    We first show that \eqref{eq:soundness-1} holds, i.e.,
    \begin{align*}
        \iota_{\langle \P, \SSS \rangle} \sqsubseteq L_{\C{C}}(v_1) \oplus ... \oplus L_{\C{C}}(v_{k-1}) \oplus \Pol_0 \oplus
        \iota_{\langle \P', \SSS' \rangle}
    \end{align*}
    Every $\P$-CT can also be seen as a $\P'$-CT, since in innermost reductions, 
    variables are always instantiated with normal forms. Thus,
  the only rules applicable to the right-hand sides of ADPs are the usable rules.
    Additionally, we start with basic terms, and hence, in every $\P$-CT only usable rules can be applied below the root of an annotated subterm.
    Thus, we have $\iota_{\langle \P, \SSS \rangle} = \iota_{\langle \P', \SSS'
      \rangle}$, which directly implies
    \eqref{eq:soundness-1}.
    
  The condition \eqref{eq:soundness-2} follows by the same reasoning:
    Every $\P'$-CT can also be seen as a $\P$-CT, so that 
    $\iota_{\langle \P', \P' \setminus \SSS' \rangle} = \iota_{\langle \P, \P \setminus
      \SSS \rangle}$. By well-formedness of
 $\F{P}$, we have
      $\iota_{\langle \P, \P \setminus \SSS \rangle} \sqsubseteq L_{\C{C}}(v_1) \oplus
    \dots \oplus L_{\C{C}}(v_{k-1})$. Thus, 
  $\iota_{\langle \P', \P' \setminus \SSS' \rangle} = \iota_{\langle \P, \P \setminus
      \SSS \rangle}$ implies 
\eqref{eq:soundness-2}.
\end{proof}

\renewcommand{\myfootnote}[1]{}
\DependencyGraph*
\renewcommand{\myfootnote}[1]{\footnote{#1}}

\begin{proof}
    Let $\F{P}$ be a well-formed proof tree with $L_{\C{A}}(v) = \langle \P, \S \rangle$, 
    let $v_1,\ldots,v_k = v$ be the path from the root node $v_1$ to $v$,
    and let $\PathComp_{(v_1,\ldots,v_k)} = L_{\C{C}}(v_1) \oplus ... \oplus L_{\C{C}}(v_{k-1})$.

    Let $\{\JJ_1, \ldots, \JJ_n\}$ be the set of all SCC-prefixes
    of the $\P$-depen\-den\-cy graph.
    To show that \eqref{eq:soundness-1} holds, it suffices to show
    \begin{align*}
        \iota_{\langle \P, \SSS \rangle} \sqsubseteq\;&  \iota_{\langle \P|_{\JJ_1}, \S|_{\JJ_1} \rangle} \oplus ... \oplus
        \iota_{\langle \P|_{\JJ_n}, \S|_{\JJ_n} \rangle} \oplus \PathComp_{(v_1,\ldots,v_k)}
    \end{align*}

    Let $\F{T} = (V,E,L)$ be a $\P$-CT that starts with the term $t^\sharp$ at the root.
    We will create $n$ trees $\F{T}_1, \ldots,\F{T}_n$ from $\F{T}$
    such that $\F{T}_i$ is a $\P|_{\JJ_i}$-CT that starts with $t^\sharp$ and
    \begin{align}
        &\edl_{\langle \P, \S \rangle}(\F{T})\label{dependency_graph_proof_split}\\
        \leq &\edl_{\langle \P|_{\JJ_1}, \S|_{\JJ_1} \rangle}(\F{T}_1) + \ldots + \edl_{\langle \P|_{\JJ_n}, \S|_{\JJ_n} \rangle}(\F{T}_n) + B \cdot Q + D,\nonumber
    \end{align}
    where $B,D \in \IN$ are constants and
    {\small\begin{align*}
        Q = \edl_{\langle \P|_{\JJ_1}, \S|_{\JJ_1} \rangle}(\F{T}_1) + \ldots + \edl_{\langle \P|_{\JJ_n}, \S|_{\JJ_n} \rangle}(\F{T}_n) + \edl_{\langle \P, \P \setminus \S \rangle}(\F{T}).
    \end{align*}}

    \noindent
    While ``$\edl_{\langle \P|_{\JJ_i}, \S|_{\JJ_i} \rangle}(\F{T}_i)$'' 
    considers all $(\mathbf{at})$- and $(\mathbf{af})$-steps performed with
    ADPs corresponding to DPs from $\JJ_i$,
    ``$B \cdot Q + D$'' considers all $(\mathbf{at})$- and $(\mathbf{af})$-steps that do not occur in any $\F{T}_i$ anymore.
    As \eqref{dependency_graph_proof_split}
holds for every $\P$-CT $\F{T}$, we get 
    $\iota_{\langle \P, \SSS \rangle} \sqsubseteq \iota_{\langle \P|_{\JJ_1}, \S|_{\JJ_1} \rangle} \oplus ... \oplus \iota_{\langle \P|_{\JJ_n}, \S|_{\JJ_n} \rangle} \oplus \iota_{\langle \P, \P \setminus \S \rangle}$
    and by well-formedness of $\F{P}$ we obtain $\iota_{\langle \P, \P \setminus \S
      \rangle} \sqsubseteq \PathComp_{(v_1,\ldots,v_k)}$, which implies
\eqref{eq:soundness-1}.

    We first present the construction to get from $\F{T}$ to $\F{T}_1, \ldots,\F{T}_n$,
    and afterwards, we show that the number of $(\mathbf{at})$- and $(\mathbf{af})$-steps performed in $\F{T}$
    that are not performed in any $\F{T}_1, \ldots,\F{T}_n$ is $\leq B \cdot Q + D$ for some constants $B,D \in \IN$, 
    which implies \eqref{dependency_graph_proof_split}

\smallskip
    
    \noindent
    \textbf{\underline{1. Every $\P$-CT gives rise to $n$ CTs $\F{T}_1, \ldots, \F{T}_n$ }}
    
    \noindent
    Let $1 \leq i \leq n$.
    We will construct the $\P|_{\JJ_i}$-CT $\F{T}_i = (V,E,L_i)$ using 
    the same underlying tree structure and an adjusted labeling such that $p_x^{\F{T}} =
    p_x^{\F{T'}}$ for all $x \in V$. Moreover, 
   the term at the root of $\F{T}_i$ will still be labeled with $(1:t^\sharp)$.
    
    We now recursively define the new labeling $L'$ for the $\P|_{\JJ_i}$-CT $\F{T}_i$. 
    Let $X \subseteq V$ be the set of nodes where we have already defined the labeling $L'$. 
    During our construction, we ensure that the following property holds for all $x \in X$
    \begin{align}
        \label{Cond-1} \flat(t_x) = \flat(t'_x) \land \pos_{\D^{\sharp}}(t_x) \setminus \text{Junk}(t_x, \JJ_i) \subseteq \pos_{\D^{\sharp}}(t'_x).
    \end{align}
    Here, for any annotated term $t_x$, let $\text{Junk}(t_x, \JJ_i)$ denote the set of all positions of annotations in $t_x$ 
    that will never be used for a rewrite step in $\F{T}$ with some ADP that corresponds to the DPs from $\JJ_i$.
    We define $\text{Junk}(t_x, \JJ_i)$ recursively:  
    For the term $t$ at the root, we define $\text{Junk}(t, \JJ_i) = \emptyset$.  
    For a node $y_j$ for some $1 \leq j \leq k$ with predecessor $x$ such that  
    $ t_x \tored{i}{}{\P} \{ \tfrac{p_{y_1}}{p_x} : t_{y_1}, \dots, \tfrac{p_{y_k}}{p_x} : t_{y_k} \}$
    at position $\pi$, we define
    $\text{Junk}(t_{y_j}, \JJ_i) = \{ \rho \mid \rho \in \text{Junk}(t_x, \JJ_i), \pi \nless \rho \}$
    if $\pi \notin \pos_{\D^{\sharp}}(t_x)$, and otherwise we define  
    $\text{Junk}(t_{y_j}, \JJ_i)$ to be the union of $\{ \rho \mid \rho \in \text{Junk}(t_x, \JJ_i), \pi \nless \rho \}$
    (all positions that were already in $\text{Junk}(t_x, \JJ_i)$ and are not below $\pi$), 
    and $\{ \pi.\rho \mid \rho \in \pos_{\SignatureA}(\alpha), \rho \notin \pos_{\SignatureA}(\alpha|_{\JJ_i}) \}$
    (all annotated positions where the annotation was removed by the dependency graph
    processor within the ADP). Here, as usual, $\pi < \rho$ means that $\pi$ is strictly above
    $\rho$ (i.e., $\pi$ is a proper prefix of $\rho$).
    
    We start with the same term $t$ at the root. 
    Here, our property (\ref{Cond-1}) is clearly satisfied. 
    As long as there is still an inner node $x \in X$ such that its successors are not contained in $X$, we do the following:  
    
    Let $xE = \{y_1, \dots, y_k\}$ be the set of its successors. 
    We need to define the terms for the nodes $y_1, \dots, y_k$ in $\F{T}_i$.
    Since $x$ is not a leaf and $\F{T}$ is a $\P$-CT, we have  
    $ t_x \tored{i}{}{\P} \{ \tfrac{p_{y_1}}{p_x} : t_{y_1}, \dots, \tfrac{p_{y_k}}{p_x} : t_{y_k} \}$.
    
    If we performed a step with $\tored{i}{}{\P} $ using the ADP $\alpha$,
    the position $\pi$ and the substitution $\sigma$ in $\F{T}$, then we can use the ADP 
    $\alpha|_{\JJ_i}$ with the same position $\pi$ and the same substitution $\sigma$. 
    
    Now, we directly obtain \eqref{Cond-1} for all $t_{y_j}$ with $1 \leq j \leq k$, 
    since the original rule contains the same terms with more annotations, 
    but all missing annotations are in $\text{Junk}(t_x, \JJ_i)$.

\smallskip \pagebreak[3]
    
    \noindent
    \underline{\textbf{2. Removed $(\mathbf{at})$- and $(\mathbf{af})$-steps are bounded by $B \cdot Q + D$.}}

    \noindent
    During the above construction, it can happen that
    $(\mathbf{at})$- or $(\mathbf{af})$-steps from $\F{T}$
    are replaced by only $(\mathbf{nt})$- and $(\mathbf{nf})$-steps in all $\F{T}_1, \ldots,\linebreak \F{T}_n$.
    The number of those rewrite steps is bounded by $B \cdot Q + D$.
    To be precise, we infer a bound the number of rewrite steps taking place at a term $t_x$ in node $x \in X$
    at a position $\pi \in \bigcup_{1 \leq i \leq n} \text{Junk}(t_x, \JJ_i)$ with an ADP from $\S$.
    Note that those rewrite steps belong to DPs that do not reach any SCC within the dependency graph, e.g., leaves.
    Moreover, we can bound the number of such rewrite steps, similar to the bound used for the knowledge propagation processor, 
    by its (not necessarily direct) predecessors from an SCC-prefix or the ADP
    initially used at the root.

    Let $\alpha \in \SSS$ be an ADP such that all DPs in $\nonprobDP^{\bot}(\alpha)$ cannot reach any SCC within the dependency graph.
    We will obtain a bound on $\edl_{\langle \P, \{ \alpha \} \rangle}(\F{T})$ using the complexity of the ``predecessor SCC-prefixes'',
    i.e., those SCC-prefixes that can lead to an $\alpha$-step at an annotated position later in the chain tree.

    Let $d$ be the maximal number of annotated symbols in any term on a right-hand side of an ADP from $\PP$.
    Every term in the multi-distribution of some right-hand side of some ADP $\beta$ with a DP in 
    $\nonprobDP^{\bot}(\beta)$ that is a direct predecessor
    of some DP in $\nonprobDP^{\bot}(\alpha)$
    can trigger at most $d$ $\alpha$-steps.
    Recall that even if $\alpha$ contains no annotations,
    we still have $\nonprobDP^{\bot}(\alpha) \neq \emptyset$.
    Moreover, every ADP with a DP in 
    $\nonprobDP^{\bot}(\beta)$ that is a $2$-step predecessor
    of some DP in $\nonprobDP^{\bot}(\alpha)$ 
    (i.e., we can reach a DP in $\nonprobDP^{\bot}(\alpha)$ from a DP in $\nonprobDP^{\bot}(\beta)$ in at most $2$ steps)
    can trigger at most $d^2$ $\alpha$-steps.
    In general, every $e$-step predecessor can trigger at most $d^e$ $\alpha$-steps.
    A path in the dependency graph starting at a DP from some SCC-prefix
    that reaches a DP from $\nonprobDP^{\bot}(\alpha)$
    without node repetition has a length of at most $|\nonprobDP(\P)|$.
    Moreover, we can also create $\alpha$-steps without going through an SCC-prefix,
    by following a path from the initially used ADP to $\alpha$.
    Overall, we get
    \begin{align}
        \edl_{\langle \P, \{ \alpha \} \rangle}(\F{T}) \leq
        d^{|\nonprobDP(\P)|} \cdot (1 + \sum_{i=1}^{n} \edl_{\langle \P|_{\JJ_{i}}, \P|_{\JJ_{i}} \rangle}(\F{T}_{i})).\label{eq:dep-graph-abschaetzung}
    \end{align}
    Since SCC-prefixes may contain nodes from $\nonprobDP(\S)$ and $\nonprobDP(\P \setminus \S)$
    we have 
    \begin{align}
            & \edl_{\langle \P|_{\JJ_{i}}, \P|_{\JJ_{i}} \rangle}(\F{T}_{i}) \nonumber \\
            =\;&
            \edl_{\langle \P|_{\JJ_{i}}, \S|_{\JJ_{i}} \rangle}(\F{T}_{i}) + \edl_{\langle
              \P|_{\JJ_{i}}, \P|_{\JJ_{i}} \setminus \S|_{\JJ_{i}} \rangle}(\F{T}_{i}) \nonumber \\
            \leq \; &
            \edl_{\langle \P|_{\JJ_{i}}, \S|_{\JJ_{i}} \rangle}(\F{T}_{i}) + \edl_{\langle
              \P, \P \setminus \S \rangle}(\F{T}). \label{DepGraphClaim}
                    \end{align}
    Hence, we even have 
    {\small\[\textstyle
        \begin{array}{r@{\;}l}
            &\edl_{\langle \P, \{ \alpha \} \rangle}(\F{T}) \\
         &\leq
            d^{|\nonprobDP(\P)|} \cdot (1 + \sum_{j=1}^{n} \edl_{\langle \P|_{\JJ_{i}},
              \P|_{\JJ_{i}} \rangle}(\F{T}_{i}))
\quad \text{by \eqref{eq:dep-graph-abschaetzung}}
            \\
            &\leq 
            d^{|\nonprobDP(\P)|} \cdot (1 + \sum_{j=1}^{n} \big( \edl_{\langle
              \P|_{\JJ_{i}}, \S|_{\JJ_{i}} \rangle}(\F{T}_{i}) + \edl_{\langle \P, \P
              \setminus \S \rangle}(\F{T}) \big))
\quad \text{by \eqref{DepGraphClaim}}
            \\
            &= 
            d^{|\nonprobDP(\P)|} \cdot (1 + n \cdot \edl_{\langle \P, \P \setminus \S \rangle}(\F{T}) + \sum_{j=1}^{n} \edl_{\langle \P|_{\JJ_{i}}, \S|_{\JJ_{i}} \rangle}(\F{T}_{i}))\\
            &\leq 
            d^{|\nonprobDP(\P)|} + d^{|\nonprobDP(\P)|} \cdot n \cdot (\edl_{\langle \P, \P \setminus \S \rangle}(\F{T}) + \sum_{j=1}^{n} \edl_{\langle \P|_{\JJ_{i}}, \S|_{\JJ_{i}} \rangle}(\F{T}_{i})).\!
        \end{array}
    \]}

    \noindent
    So the number of $(\mathbf{at})$- and $(\mathbf{af})$-steps with such an ADP $\alpha$ is $\leq d^{|\nonprobDP(\P)|} \cdot n \cdot Q + d^{|\nonprobDP(\P)|}$.
    Since there are at most $|\P|$ such ADPs $\alpha$,
    the final constants $B$ and $D$ for \eqref{dependency_graph_proof_split} are 
    $B = |\P| \cdot d^{|\nonprobDP(\P)|} \cdot n$ and $D = |\P| \cdot d^{|\nonprobDP(\P)|}$.

    \noindent
    Now we show that \eqref{eq:soundness-2} holds.
   For all $1 \leq i \leq n$ we have to show
    \begin{align*}
        \iota_{\langle \P|_{\JJ_i}, \P|_{\JJ_i} \setminus \SSS|_{\JJ_i} \rangle}
        \sqsubseteq\;&
    c_{(v_1,\ldots,v_k)}.
    \end{align*}
     Every $\P|_{\JJ_i}$-CT $\F{T}$ gives rise to a $\P$-CT $\F{T}'$ by using the same ADPs just with (possibly) more annotations.
    Hence, for every such $\F{T}$ we get $\edl_{\langle \P|_{\JJ_i}, \P|_{\JJ_i} \setminus \SSS|_{\JJ_i} \rangle}(\F{T}) \leq 
    \edl_{\langle \P, \P \setminus \SSS \rangle}(\F{T}')$,
    and thus, we have\linebreak 
    $\iota_{\langle \P|_{\JJ_i}, \P|_{\JJ_i} \setminus \SSS|_{\JJ_i} \rangle} 
    \sqsubseteq \iota_{\langle \P, \P \setminus \SSS \rangle}$.
    So \eqref{eq:soundness-2} holds by well-formedness of $\F{P}$, 
    which implies
    $\iota_{\langle \P, \P \setminus \SSS \rangle} \sqsubseteq
c_{(v_1,\ldots,v_k)}$.
\end{proof}

\ReductionPair*

\begin{proof}
    Let $\I_0^\sharp$ behave exactly as $\ISum^\sharp$ but in addition, it 
    maps every variable to $0$.
    Moreover, let $\text{poloC} : \IN \to \IN \cup \{\omega\}$ be the
    function that maps $n$ to the maximal interpretation of any basic term of size $\leq n$
    which is annotated at the root.
    Thus, $\text{poloC}(n) = \sup\{\I_0^\sharp(t^\sharp) \mid t \in \TB
    \text{ and } |t| \leq n\}$.

    As in \cite{FLOPS2024}, the conditions (1), (2), (3) from \cref{RP-proc} can be lifted 
    to rewrite steps with $\tored{}{}{\P}$ instead of just rules and, therefore, to edges of a CT.
    For each ADP $\alpha = \ell \to \{p_1:r_j, \ldots, p_k:r_k\}^m \in \P_{>}$ 
    we can find an $\varepsilon_{\alpha} > 0$ such that
    $\I(\ell^\sharp) 
    \geq \varepsilon_\alpha + \sum_{1\leq j \leq k} p_j \cdot \ISum^\sharp(r_j)$.
    Let $\varepsilon = \min\{\varepsilon_{\alpha} \mid \alpha \in \P_{>}\}$ be the smallest such number.
    After the lifting, we get:

    \begin{enumerate}
        \item For every $s \tored{}{}{\P} \mu$ with $(\mathbf{nt})$ or $(\mathbf{nf})$, we have 
        \[ \textstyle
        \ISum^\sharp(s) \geq \sum_{1\leq j \leq k} p_j \cdot \ISum^\sharp(t)
        \]
        \item For every $s \tored{}{}{\P} \mu$ with $(\mathbf{at})$ or $(\mathbf{af})$ and an ADP from $\PP_{\geq}$, we have 
        \[ \textstyle
        \ISum^\sharp(s) \geq \sum_{1\leq j \leq k} p_j \cdot \ISum^\sharp(t)
        \]
        \item For every $s \tored{}{}{\P} \mu$ with $(\mathbf{at})$ or $(\mathbf{af})$ and an ADP from $\PP_{>}$, we have 
        \[ \textstyle
        \ISum^\sharp(s) \geq \varepsilon + \sum_{1\leq j \leq k} p_j \cdot \ISum^\sharp(t) 
        \]
    \end{enumerate}

    We only have to prove that the complexity of $\P$ when counting only $\P_>$-rules is bounded by 
    $\iota_{\text{poloC}}$, i.e, $\iota_{\langle \P, \P_>\rangle} \sqsubseteq \iota_{\text{poloC}}$.
    Then, $\iota_{\text{poloC}} \sqsubseteq c$ follows by the same reasoning that was used in \Cref{subsec:poly-interpret} to find upper bounds
    on the polynomial interpretation of terms.
    Instead of basic terms, here we only need to consider terms that have an annotation at the root, 
    and constructor terms as proper subterms.

    Afterwards, we can conclude that \eqref{eq:soundness-1} holds, i.e.,
    \begin{align*}
        \iota_{\langle \P, \SSS \rangle} \sqsubseteq\;&  c \oplus \iota_{\langle \P, \S \setminus \P_> \rangle},
    \end{align*}
    because
    \begin{align*}
        &\iota_{\langle \P, \SSS \rangle}\\
        =\;& \iota_{\langle \P, \SSS \cap \P_> \rangle}
        \oplus \iota_{\langle \P, \SSS \setminus \P_> \rangle}\\
        \sqsubseteq\;&
        \iota_{\langle \P,  \P_> \rangle}
        \oplus \iota_{\langle \P, \SSS \setminus \P_> \rangle}\\
        \sqsubseteq\; &
        c \oplus \iota_{\langle \P,\SSS \setminus \P_> \rangle}\!
    \end{align*}

    \smallskip

    Similarly, one can also conclude that \eqref{eq:soundness-2} holds.
    Let $\F{P}$ be a well-formed proof tree with $L_{\C{A}}(v) = \langle \P, \S \rangle$,
    let $v_1,\ldots,v_k = v$ be the path from the root node $v_1$ to $v$,
    and let $\PathComp_{(v_1,\ldots,v_k)} = L_{\C{C}}(v_1) \oplus ... \oplus
    L_{\C{C}}(v_{k-1})$.
    Then we have
    \begin{align*}
        \iota_{\langle \P, \P \setminus (\SSS \setminus \P_>) \rangle} \sqsubseteq\;& \PathComp_{(v_1,\ldots,v_k)} \oplus c
    \end{align*}
    because $\iota_{\langle \P, \P \setminus (\SSS \setminus \P_>) \rangle} \sqsubseteq \iota_{\langle \P, (\P \setminus \SSS) \cup \P_> \rangle}$ and 
    \begin{align*}
        \iota_{\langle \P, (\P \setminus \SSS) \cup \P_> \rangle} = \iota_{\langle \P, \P \setminus \SSS\rangle} \oplus \iota_{\langle \P, \P_> \rangle} \sqsubseteq \PathComp_{(v_1,\ldots,v_k)} \oplus c\!
    \end{align*}

    \smallskip
    
    So it remains to show that
    $\iota_{\langle \P, \P_>\rangle} \sqsubseteq \iota_{\text{poloC}}$ holds. Consider a
    basic term $t$ and a
    $\PP$-CT $\F{T}$ with $t^\sharp$ at the root. We want to show that
    up to a factor, $\I_0^\sharp(t^\sharp)$ is a bound
on the expected derivation length 
of $\F{T}$.

    We have $\I_0^\sharp(t^\sharp)$ as the initial value at the root.
    Whenever we perform a rewrite step with $(\mathbf{nt})$ or $(\mathbf{nf})$ or with an ADP from $\PP_{\geq}$, 
    we weakly decrease the value in expectation.
    And whenever we perform a rewrite step with $(\mathbf{at})$ or $(\mathbf{af})$ and an ADP from $\PP_{>}$ at a node $v$, 
    we strictly decrease the value by at least $\varepsilon \cdot p_v$ in expectation.
    This behavior is illustrated in \Cref{fig:rpp-proof}.
    Since the value is bounded from below by $0$, we get:
       \begin{align}
         &\I_0^\sharp(t^\sharp)
         - \sum_{v \in V \setminus \ctleaf^{\F{T}}} \;
           \sum_{\substack{\P(v) \in \P_> \times \{(\mathbf{at}),(\mathbf{af})\}}} \epsilon
           \cdot p_v \geq 0 \nonumber\\
                    \Longleftrightarrow\;& \sum_{v \in V \setminus \ctleaf^{\F{T}}} \; \sum_{\substack{\P(v) \in \P_> \times \{(\mathbf{at}),(\mathbf{af})\}}} \epsilon \cdot p_v \leq \I_0^\sharp(t^\sharp) \label{cond-root-expected-value}
     \end{align}
    Therefore, we get:
     \begin{align*}
         &\epsilon \cdot \edl_{\langle \P, \P_> \rangle}(\F{T})\\
      =\;&  \sum_{v \in V \setminus \ctleaf^{\F{T}}}
 \;  \sum_{\substack{\P(v) \in \P_> \times \{(\mathbf{at}),(\mathbf{af})\}}}^{} \, \epsilon
 \cdot p_v \tag{\small definition of edl}\\ 
 \leq\;& \I_0^\sharp(t^\sharp)\tag{with \eqref{cond-root-expected-value}}  
     \end{align*}
    and therefore, $\edl_{\langle \P, \P_> \rangle}(\F{T}) \leq \nicefrac{1}{\varepsilon} \cdot \I_0^\sharp(t^\sharp)$ 
    and thus, $\iota_{\langle \P, \P_> \rangle} \sqsubseteq \iota_{\text{poloC}}$.
\end{proof}

\begin{figure}[t]
    \begin{center}
        \begin{small}
            \begin{tikzpicture}
                \tikzstyle{nodeStyle}=[thick, draw=black, fill=white, minimum width=12mm, minimum height=5mm, shape=rectangle split, rectangle split parts=2, rectangle split horizontal, inner sep=2pt]
                \tikzstyle{edgeStyle}=[->, thick]
                \tikzstyle{blueLabel}=[above, blue, font=\tiny]
                \tikzstyle{empty}=[rectangle,thick,minimum size=4mm]
                \tikzstyle{evalue}=[rectangle,thick,minimum size=4mm, font=\small]
                
                \node[nodeStyle, label=above:{$(\P_>, (\mathbf{at}))$}] (rootR) at (6, 3) {1 \nodepart{two} $\I_0^\sharp(t^\sharp)$};
                
                \node[nodeStyle, label=above left:{$(\P_>, (\mathbf{at}))$}] (n1R) at (5.2, 1.5) {$p_1$ \nodepart{two} $\I_0^\sharp(t_1)$};
                \node[nodeStyle, label=above right:{$(\P_\geq, (\mathbf{at}))$}] (n2R) at (6.8, 1.5) {$p_2$ \nodepart{two} $\I_0^\sharp(t_2)$};
                
                \node[empty] (n3R) at (4.6, 0) {$\ldots$};
                \node[empty] (n4R) at (5.8, 0) {$\ldots$};
                \node[empty] (n5R) at (6.2, 0) {$\ldots$};
                \node[empty] (n6R) at (7.4, 0) {$\ldots$};
                
                \draw[edgeStyle] (rootR) -- (n1R);
                \draw[edgeStyle] (rootR) -- (n2R);
                \draw[edgeStyle] (n1R) -- (n3R);
                \draw[edgeStyle] (n1R) -- (n4R);
                \draw[edgeStyle] (n2R) -- (n5R);
                \draw[edgeStyle] (n2R) -- (n6R);

                \node (A) at (10, 2.75) {};
                \node (B) at (10, 1.85) {};
                
                \node (C) at (10, 1) {};
                \node (D) at (10, 0) {};
                
                \draw[->, bend left] (A) to (B);
                \draw[->, bend left] (C) to (D);
                
                \node[evalue] (ev1) at (10, 3) {$\IE(\I_0^\sharp(t^\sharp)) = \I_0^\sharp(t^\sharp)$};
                \node[empty]  at (11, 2.325) {$> \epsilon \cdot 1$};
                \node[evalue] (ev2) at (10, 1.7) {$\IE(\I_0^\sharp (\{ p_1 : t_1, p_2 : t_2\}))$};
                \node[evalue] (ev21) at (10, 1.2) {$= p_1 \cdot \I_0^\sharp(t_1) + p_2 \cdot \I_0^\sharp(t_2)$};
                \node[empty]  at (11, 0.5) {$> \epsilon \cdot p_1$};
                \node[empty] (ev3) at (10, 0) {$\ldots$};
                
            \end{tikzpicture}
        \end{small}
    \end{center}
    \vspace*{-2mm}
    \caption{Expected decrease of $\I_0^\sharp$ in a chain tree}\label{fig:rpp-proof}
\end{figure}

\KnowledgeProp*

\begin{proof}
    Let $\F{P}$ be a well-formed proof tree with $L_{\C{A}}(v) = \langle \P, \S \rangle$, 
    let $v_1,\ldots,v_k = v$ be the path from the root node $v_1$ to $v$,
    and let $\PathComp_{(v_1,\ldots,v_k)} = L_{\C{C}}(v_1) \oplus ... \oplus L_{\C{C}}(v_{k-1})$.
    
    We first show that \eqref{eq:soundness-1} holds, i.e.,
    \begin{align*}
        \iota_{\langle \P, \SSS \rangle} \sqsubseteq\;& \PathComp_{(v_1,\ldots,v_k)} \oplus \Pol_0 \oplus \iota_{\langle \P, \S \setminus \{\alpha\} \rangle}
    \end{align*}
    Let $\F{T} = (V,E,L)$ be a $\P$-CT.
    Let $d$ be the maximal number of annotated symbols in any term on a right-hand side of an ADP from $\PP$.
    Recall that 
    \[\textstyle
        \begin{array}{rcl}
        \edl_{\langle \P, \{ \alpha \} \rangle}(\F{T}) &\leq& 1 +
        \sum_{v \in V \setminus \ctleaf^{\F{T}}, \;
        \P(v) \in \Pre(\alpha) \times  \{\bat, \baf\}} \, d \cdot p_v\\
        &=& 1 + d
        \cdot   \edl_{\langle \P, \Pre(\alpha) \rangle}(\F{T}),
        \end{array}
    \]
    which implies 
    $\eirc_{\langle \P,\{ \alpha \} \rangle}(n) \leq 1 + d
    \cdot \eirc_{\langle
    \P,\Pre(\alpha) \rangle}(n)$
    for all $n \in \IN$ and thus,
    $\iota_{\langle \P,\{ \alpha \} \rangle}  \sqsubseteq  \iota_{\langle \P,\Pre(\alpha)
      \rangle}$. Hence,
 \eqref{eq:soundness-1} holds, because of well-formedness of $\F{P}$, i.e.,
    \begin{align*}
        \iota_{\langle \P, \SSS \rangle} =\;&\iota_{\langle \P, \SSS \setminus \{\alpha\} \rangle} \oplus \iota_{\langle \P, \{\alpha\} \rangle} \sqsubseteq \iota_{\langle \P, \SSS \setminus \{\alpha\} \rangle} \oplus \iota_{\langle \P,\Pre(\alpha) \rangle}\\
        \sqsubseteq\;&\iota_{\langle \P, \SSS \setminus \{\alpha\} \rangle} \oplus \iota_{\langle \P,\P \setminus \SSS \rangle} \sqsubseteq \iota_{\langle \P, \SSS \setminus \{\alpha\} \rangle} \oplus \PathComp_{(v_1,\ldots,v_k)}\!
    \end{align*}
Now we show \eqref{eq:soundness-2}, i.e.,
    \begin{align*}
        \iota_{\langle \P, \P \setminus (\SSS \setminus \{\alpha\}) \rangle} \sqsubseteq\;& \PathComp_{(v_1,\ldots,v_k)} \oplus \Pol_0
    \end{align*}
    This holds again because $\F{P}$ is well formed:
    \begin{align*}
        \iota_{\langle \P, \P \setminus (\SSS \setminus \{\alpha \}) \rangle} =\;&\iota_{\langle \P, (\P \setminus \SSS) \cup \{\alpha\} \rangle} = \iota_{\langle \P, \P \setminus \SSS \rangle} \oplus \iota_{\langle \P, \{\alpha\} \rangle}\\
        \sqsubseteq\;& \PathComp_{(v_1,\ldots,v_k)}\oplus \PathComp_{(v_1,\ldots,v_k)} = \PathComp_{(v_1,\ldots,v_k)}\!
    \end{align*}
\end{proof}

Next, we prove the soundness of the probability removal processor.
Here, we follow the notation of \cite{noschinski2013analyzing} 
for the non-probabilistic dependency tuple framework.

In \cite{noschinski2013analyzing}, the runtime complexity for a DT problem $(\P,\SSS,\K,\R)$
is defined via \emph{$\langle \P, \R\rangle$-DT chain trees}.
Nodes of these trees are labeled by pairs $(\ell^\sharp \to [t^\sharp_1, \ldots, t^\sharp_n], \sigma)$
of a DT and a substitution such that $\ell^\sharp \sigma \in \NF_{\R}$,
and if a node $(\ell^\sharp \to [t^\sharp_1, \ldots, t^\sharp_n], \sigma)$ has children
$(\ell_1^\sharp \to \ldots, \delta_1), \ldots, (\ell_k^\sharp \to \ldots, \delta_k)$,
then there are pairwise different $i_1, \ldots, i_k \in \{1, \ldots, n\}$ 
such that $t^\sharp_{i_j} \sigma \ito_{\R}^* \ell_{j}^\sharp \delta_{j}$ for all $1 \leq j \leq k$.\footnote{So the branching in $\langle \P, \R\rangle$-DT chain trees is not due to probabilities but due to the right-hand sides of DTs containing several terms that can be evaluated.}
The $\S$-\emph{derivation length} $\idl_{\langle \P, \S, \R \rangle}(\F{T})$ of a $\langle \P, \R\rangle$-DT chain tree $\F{T}$ is 
the number of its nodes labeled by DTs from $\S$.
The \emph{derivation height} of a term $t^\sharp$ w.r.t.\ $\langle \P, \S, \R \rangle$
is the supremum over all $\S$-derivation lengths of all $\langle \P, \R\rangle$-DT chain trees starting with $t^\sharp$, i.e.,
{\small\[
  \begin{array}{l}
    \idh_{\langle \P, \S, \R \rangle}(t^\sharp) =\\
    \qquad \sup \{ \idl_{\langle \P, \S, \R\rangle}(\F{T}) \mid 
    \F{T} \text{ is a $\langle \P, \R\rangle$-DT chain tree starting with } t^\sharp \}.
    \end{array}
\]}
Finally, the runtime complexity function of $\langle \P, \S, \R \rangle$ is defined in a
similar way as before
\[
    \irc_{\langle \P, \S, \R \rangle}(n) = \sup \{ \idh_{\langle \P, \S, \R \rangle}(t^\sharp) \mid t \in \TB_\R, |t| \leq n \},
\]
and its \emph{runtime complexity} $\iota_{\langle \P, \S, \R \rangle}$ is $\iota(\irc_{\langle \P, \S, \R \rangle})$.

As mentioned before, in the non-probabilistic DT framework one has two components $\S$ and $\K$ for 
those DTs that we still need to count and for those DTs for which we
already have a complexity bound on how often
this DT can occur in a $\langle \P, \R \rangle$-DT chain tree, respectively.
The overall complexity of a DT problem $(\P,\SSS,\K,\R)$ is
$\iota_{(\P,\SSS,\K,\R)} = \iota_{\langle \P, \S, \R \rangle} \ominus \iota_{\langle \P,
  \K, \R \rangle}$.\footnote{We do not need such a $\ominus$-operation due to our notion of
well-formed proof trees that did not exist in \cite{noschinski2013analyzing}.}
Here, $c \ominus d = c$ if $d \sqsubset c$ and $c \ominus d = \Pol_0$ otherwise 
(so, e.g., $\Pol_2 \ominus \Pol_1 = \Pol_2$ and $\Pol_1 \ominus \Pol_2 = \Pol_0$)

\ProbabilityRemovalProc*

\begin{proof}
    First, every $\P$-CT $\F{T}$ gives rise to a 
    $\langle \nonprobDT(\P), \nonprob(\P) \rangle$-DT chain tree $\F{T}'$
    such that $\edh_{\langle \P, X \rangle}(\F{T}) = \idh_{\langle \nonprobDT(\P),
      \nonprobDT(X), \nonprob(\P)
      \rangle}(\F{T}')$
    for every $X \subseteq \P$, and vice versa.
    To see this, note that every $\PP$-CT is a single (not necessarily finite) path, due to the trivial probabilities in its rules.
    Moreover, the DTs $\alpha$ in $\F{T}'$ correspond to those ADPs $\beta$ that are used
    at
    annotated subterms, i.e.,
    with Case $(\mathbf{at})$ or $(\mathbf{af})$, and we have $\alpha = \nonprobDT(\beta)$.
    The substitution is the corresponding substitution used for the rewrite step at this node.

    To be precise, if we have $t \tored{}{}{\PP} \{1:s\}$ at the root of $\F{T}$,
    where we use an ADP $\alpha = \ell \to \{1:r\}^m \in \P$ at position $\pi$
    with the substitution $\sigma$ such that $\flat(t|_\pi) = \ell\sigma \in \ANF_{\P}$,
    then we start with $(\nonprobDT(\alpha), \sigma)$ at the root of $\F{T}'$.
    Let $\nonprobDT(\alpha) = \ell^\sharp \to [r^\sharp_1, \ldots, r^\sharp_n]$.
    If we eventually rewrite  at the subterm corresponding to some $r^\sharp_j$ in $\F{T}$
    with an ADP $\alpha' = \ell' \to \{1:r'\}^{m'} \in \P$ and the substitution $\sigma'$,
    and this subterm still contains its annotation at the root, 
    then we must have $r^\sharp_j \sigma \ito_{\nonprob(\P)} \ell'^\sharp \sigma'$.
    Hence, we can create a child node of $(\nonprobDT(\alpha), \sigma)$ 
    labeled by $(\nonprobDT(\alpha'), \sigma')$.
    We can construct the whole tree $\F{T}'$ inductively.
    Note that whenever we perform a rewrite step at an annotated position
in $\F{T}$,
    we create a corresponding node in $\F{T}'$, hence we have 
    $\edh_{\langle \P, X \rangle}(\F{T}) = \idh_{\langle \nonprobDT(\P), \nonprobDT(X), \nonprob(\P) \rangle}(\F{T}')$.

    For the converse, i.e., to get from a $\langle \nonprobDT(\P), \nonprob(\P) \rangle$-DT chain tree $\F{T}'$
    to the corresponding $\P$-CT $\F{T}$, we simply perform all rewrite steps with $\R$ in $\F{T}$ 
    that are omitted in $\F{T}'$.
    To be precise, if we have $(\nonprobDT(\alpha), \sigma)$ at the root of $\F{T}'$
    for some ADP $\alpha = \ell \to \{1:r\}^m \in \P$,
    then we can perform the rewrite step $\ell^\sharp \sigma \tored{}{}{\PP} \{1:r \sigma\}$ at the root of $\F{T}$.
    Let $\nonprobDT(\alpha) = \ell^\sharp \to [r^\sharp_1, \ldots, r^\sharp_n]$.
    If there is a child
 of the root in $\F{T}'$
    labeled by $(\nonprobDT(\alpha'), \sigma')$,
    then we have $r_j^\sharp \sigma \ito_{\nonprob(\P)} \ell'^\sharp \sigma'$ for some $1
    \leq j \leq n$,
where $\ell'$ is the left-hand side of $\alpha'$.
We can simply perform those rewrite steps in $\F{T}$.
Since $\nonprob(\P)$ only contains rules resulting from 
 ADPs with the flag $m = \ttrue$, this does not remove any annotations from 
    subterms that are not in normal form.
    Hence, we can do this iteratively for every such child.
    In this way, we can create the whole tree $\F{T}$ inductively.
    Again, for every node of the $\langle \nonprobDT(\P), \nonprob(\P) \rangle$-DT chain tree,
    we perform a rewrite step with Case $(\mathbf{at})$ or $(\mathbf{af})$ in the $\P$-CT.
    Therefore, 
    $\edh_{\langle \P, X \rangle}(\F{T}) = \idh_{\langle \nonprobDT(\P), \nonprobDT(X), \nonprob(\P)  \rangle}(\F{T}')$.

    As this relation holds for arbitrary $\P$-CTs and $\langle \nonprobDT(\P), \nonprob(\P) \rangle$-DT chain trees,
    we obtain $\iota_{\langle \P, X \rangle} = \iota_{\langle \nonprobDT(\P), \nonprobDT(X), \nonprob(\P) \rangle}$
    for every $X \subseteq \P$.

    We can now conclude soundness:
    Let $\F{P}$ be a well-formed proof tree with $L_{\C{A}}(v) = \langle \P, \S \rangle$, 
    let $v_1,\ldots,v_k = v$ be the path from the root node $v_1$ to $v$,
    and let $\PathComp_{(v_1,\ldots,v_k)} = L_{\C{C}}(v_1) \oplus ... \oplus L_{\C{C}}(v_{k-1})$.

    Since the resulting ADP problems are solved, we only have to show \eqref{eq:soundness-1}.
    Let $\K = \P \setminus \S$.
    If we have $\iota_{\langle \nonprobDT(\P), \nonprobDT(\S), \nonprob(\P) \rangle} 
    \sqsubseteq \iota_{\langle \nonprobDT(\P), \nonprobDT(\K), \nonprob(\P) \rangle}$,
    then
    $\iota_{\langle \P, \S \rangle} \sqsubseteq \iota_{\langle \P, \P \setminus \S \rangle} 
    \sqsubseteq \PathComp_{(v_1,\ldots,v_k)}$
    by well-formedness of $\F{P}$, which proves that 
    $\Proc_{\mathtt{PR}}(\langle \PP, \SSS \rangle) = (c, \emptyset)$ is sound for every $c \in \C{C}$.

    Next, consider the case $\iota_{\langle \nonprobDT(\P), \nonprobDT(\K), \nonprob(\P) \rangle}
    \sqsubset \iota_{\langle \nonprobDT(\P), \nonprobDT(\S), \nonprob(\P) \rangle}$.
    This implies that $\iota_{(\nonprobDT(\P), \nonprobDT(\S), \nonprobDT(\K), \nonprob(\P))} 
    = \iota_{\langle \nonprobDT(\P), \nonprobDT(\S), \nonprob(\P) \rangle} 
    \ominus \iota_{\langle \nonprobDT(\P), \nonprobDT(\K), \nonprob(\P) \rangle} 
    = \iota_{\langle \nonprobDT(\P), \nonprobDT(\S), \nonprob(\P) \rangle}$, and
    therefore, we have
$\iota_{\langle \P, \S \rangle} =  \iota_{\langle \nonprobDT(\P), \nonprobDT(\S),
      \nonprob(\P) \rangle}
    = 
    \iota_{\langle \nonprobDT(\P), \nonprobDT(\S), \nonprobDT(\K), \nonprob(\P) \rangle} \sqsubseteq c$,
    since $c$ is returned by the DT framework as a bound on the runtime complexity of
    $\langle \nonprobDT(\P), \nonprobDT(\S), \nonprobDT(\K), \nonprob(\P) \rangle$. 
    Again, this proves soundness of a processor with
    $\Proc_{\mathtt{PR}}(\langle \PP, \SSS \rangle) = (c, \emptyset)$.
\end{proof}

\subsection{Transformational Processors}\label{sec:trans}

Finally, to show how to adapt transformational processors \cite{arts2000termination,giesl2006mechanizing,noschinski2013analyzing} for our novel ADP framework,
we consider the \emph{narrowing processor}.

Let $\PP = \PP' \uplus \{\ell \ruleArr{}{}{} \{ p_1:r_{1}, \ldots, p_k: r_k\}^{m}\}$ be
a set of 
ADPs
and let  $t \trianglelefteq_{\sharp} r_j$ for some $1 \leq j \leq k$.
If we have to perform rewrite steps on (an instance of) 
$t$ in order to enable the next application of an ADP at an annotated position, 
then the idea of the narrowing processor is to perform the first step 
of this reduction already on the ADP $\ell \ruleArr{}{}{} \{ p_1:r_{1}, \ldots, p_k:
r_k\}^{m}$ via narrowing.
So whenever there is a $t \trianglelefteq_{\sharp} r_j$ and a non-variable position $\tau$ in $t$ 
such that $t|_\tau$ unifies with the left-hand side $\ell'$ of some (variable-renamed) ADP 
$\ell' \ruleArr{}{}{} \{p_1':r_{1}', \ldots, p_{k'}': r_{k'}'\}^{m'} \in \PP$ 
using an mgu $\delta$ such that $\ell \delta, \ell' \delta \in \ANF_{\PP}$, 
then $\delta$ is a \emph{narrowing substitution} of $t$.
This is analogous to the narrowing substitutions defined in \cite{noschinski2013analyzing} for DTs.

As shown in \cite{FLOPS-Journal},  in the probabilistic setting,
the narrowing processor can only be used 
in a weaker version. Hence, here
it was renamed to the \emph{rule overlap instantiation processor}.
While we can apply the narrowing substitution to the ADP, we cannot perform any rewrite steps.
To be precise, if $\delta_1, \ldots ,\delta_d$ are all narrowing substitutions of $t$, 
then we can replace $\ell \ruleArr{}{}{} \{ p_1:r_{1}, \ldots, p_k: r_k\}^{m}$ 
by $\ell \delta_e \to \{p_1:r_1 \delta_e, \ldots, p_k:r_k \delta_e\}$, 
for all $1 \leq e \leq d$, but we cannot perform any rewrite steps on $t\delta_e$ directly.

Moreover, there could be another subterm $t' \trianglelefteq_{\sharp} r_j$ (with $t' \neq t$) 
which was involved in a chain tree (i.e., $t'^\sharp \sigma \ito^*_{\nonprob(\PP)}
\tilde{\ell}\tilde{\sigma}$ for some substitutions $\sigma, \tilde{\sigma}$ 
and a left-hand side $\tilde{\ell}$ of an ADP), 
but this reduction
is no longer possible when replacing $t'$ by the instantiations
$t' \delta_1, \ldots, t' \delta_d$. 
We say that $t'$ is \emph{captured} by $\delta_1, \ldots, \delta_d$ 
if for each narrowing substitution $\rho$ of $t'$,
there is a $\delta_e$ with $1 \leq e \leq d$ such that $\delta_e$ is more general than $\rho$, 
i.e., $\rho = \delta_e \rho'$ for some substitution $\rho'$. 
So the narrowing processor has to add another ADP 
$\ell \ruleArr{}{}{} \{ p_1:\sharp_{\capt_1(\delta_1,\ldots,\delta_d)}(r_{1}), 
\ldots, p_k: \anno_{\capt_k(\delta_1,\ldots,\delta_d)}(r_k)\}^{m}$, 
where $\capt_j(\delta_1,\ldots,\delta_d)$ contains all positions of subterms 
$t' \trianglelefteq_{\sharp} r_j$ which are not captured by the narrowing substitutions $\delta_1, \ldots, \delta_d$ of $t$. 

\begin{restatable}[Rule Overlap Instantiation Processor]{theorem}{RuleInstantiation}\label{theorem:rule-overlap-inst}
  Let $\langle \PP, \SSS \rangle$ be an ADP problem with 
  $\PP = \PP' \uplus \{\alpha\}$
  for $\alpha = \ell \ruleArr{}{}{} \{ p_1:r_{1}, \ldots, p_k: r_k\}^{m}$, 
 let $1 \leq j \leq k$, and let  $t \trianglelefteq_{\sharp} r_j$.
  Let $\delta_1, \ldots, \delta_d$ be all narrowing substitutions of $t$.
  Then $\Proc_{\mathtt{ROI}}(\langle \PP, \SSS \rangle) = \!(\Pol_0, \{\langle \PP' \cup
  N, \widetilde{\SSS} \rangle\})$  
  is sound, where

\vspace*{-.5cm}
  
  {\small\[
    \arraycolsep=2pt
    \begin{array}{rcl}
      N & = & \phantom{\cup} \; \{ \ell \delta_e \to \{p_1:r_1 \delta_e, \ldots,
      p_k:r_k \delta_e\}^{m} \mid 1 \leq e \leq d \} \\
        && \cup  \; \{
          \ell \ruleArr{}{}{} \{ p_1:\anno_{\capt_1(\delta_1,\ldots,\delta_d)}(r_{1}), \ldots, p_k: \anno_{\capt_k(\delta_1,\ldots,\delta_d)}(r_k)\}^{m}\!
     \} \\[.1cm]
        \widetilde{\SSS} & = & \begin{cases} (\SSS \setminus \{\alpha\}) \cup N,& \text{if } \alpha \in \SSS \\ 
        \SSS,& \text{otherwise}\! \end{cases}\!
    \end{array}
  \]}
\end{restatable}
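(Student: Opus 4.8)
The plan is to prove soundness of $\Proc_{\mathtt{ROI}}$ by showing conditions \eqref{eq:soundness-1} and \eqref{eq:soundness-2} of \Cref{def:sound_processors}. Since the returned complexity is $\Pol_0$ and there is exactly one successor problem $\langle \PP' \cup N, \widetilde{\SSS} \rangle$, it suffices to show $\iota_{\langle \PP, \SSS \rangle} = \iota_{\langle \PP' \cup N, \widetilde{\SSS} \rangle}$; then \eqref{eq:soundness-1} follows immediately, and \eqref{eq:soundness-2} follows from well-formedness of the given proof tree together with the observation that $\PP \setminus \SSS$ and $(\PP' \cup N) \setminus \widetilde{\SSS}$ induce the same complexity (the ADPs in $N$ replacing $\alpha$ land in $\widetilde{\SSS}$ iff $\alpha \in \SSS$). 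So the core task is the equality of complexities, which I would prove by two simulations between $\PP$-chain trees and $(\PP' \cup N)$-chain trees that preserve the tree structure, all probabilities, and the multiset of counted $\bat$/$\baf$-steps with $\widetilde{\SSS}$.

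First I would handle the direction $\iota_{\langle \PP, \SSS \rangle} \sqsubseteq \iota_{\langle \PP' \cup N, \widetilde{\SSS} \rangle}$: given a $\PP$-CT $\F{T}$, I construct a $(\PP' \cup N)$-CT $\F{T}'$ with the same node set and labels. The only steps that need adjustment are those applying $\alpha$. Consider a node where $\alpha = \ell \to \{p_1:r_1,\dots,p_k:r_k\}^m$ is applied at position $\pi$ with substitution $\sigma$. I distinguish whether $\sigma$ is an instance of some $\delta_e$ on the variables relevant to the narrowed subterm $t$: if the reduction in $\F{T}$ below $\pi$ ever rewrites an instance of $t$ to enable a later annotated step, then $\sigma$ must factor through one of the narrowing substitutions $\delta_e$ (this is the crucial point: a narrowing step is exactly a one-step reduction on an instance of $t$ whose redex overlaps with a left-hand side, so the set $\{\delta_1,\dots,\delta_d\}$ captures all ``relevant'' instantiations of $t$), and I use the ADP $\ell\delta_e \to \{p_1:r_1\delta_e,\dots\}^m$ instead. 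Otherwise — either because $t$-instances are never reduced, or they are but $t'$-subterms with $t' \neq t$ and $t'\trianglelefteq_\sharp r_j$ would lose reachability — I use the extra ADP with $\capt$-restricted annotations $\ell \to \{p_1:\sharp_{\capt_1(\vec\delta)}(r_1),\dots\}^m$; the annotations that are \emph{not} in $\capt_j(\vec\delta)$ are precisely those over subterms captured by $\vec\delta$, which would have led to steps already accounted for via the $\delta_e$-instances, so removing them does not decrease the counted derivation length when we take the supremum over all CTs.

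Second, the reverse direction $\iota_{\langle \PP' \cup N, \widetilde{\SSS} \rangle} \sqsubseteq \iota_{\langle \PP, \SSS \rangle}$ is the easy one: every ADP in $N$ is an instance of $\alpha$ (either a variable-instance $\ell\delta_e \to \dots\delta_e$ or a variant of $\alpha$ with fewer annotations), so any step with an ADP from $N$ can be mimicked by a step with $\alpha$ at the same position with a composed substitution, yielding the same term, the same probabilities, and at most the same number of counted steps (removing annotations via $\capt$ can only decrease the count, and the $\delta_e$-instances give exactly the same count). Hence every $(\PP'\cup N)$-CT maps to a $\PP$-CT with $\edl_{\langle \PP'\cup N,\widetilde\SSS\rangle}(\F{T}') \leq \edl_{\langle \PP,\SSS\rangle}(\F{T})$, giving the inequality on runtime complexities.

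The main obstacle I anticipate is the forward direction, specifically making rigorous the claim that the narrowing substitutions capture exactly the relevant instantiations, and correctly handling the $\capt$ bookkeeping for ``sibling'' annotated subterms $t' \trianglelefteq_\sharp r_j$. This requires an induction over the chain tree with an invariant relating, at each node, the term in $\F{T}$ to the term in $\F{T}'$ up to which annotations may be absent — precisely, an absent annotation in $\F{T}'$ must sit over a subterm that is never the root of a future annotated redex in $\F{T}$, or one already created via a $\delta_e$-instance. I would import the analysis of narrowing/capturing from \cite{noschinski2013analyzing} and the probabilistic weakening from \cite{FLOPS-Journal} to structure this argument, since the combinatorial heart of "why only instantiation, not rewriting, is sound probabilistically" is already established there; the new content is threading it through the $\bat/\baf/\bnt/\bnf$ case distinction and the $\edl$/$\edh$/$\iota$ definitions of our ADP framework, plus the well-formedness argument for \eqref{eq:soundness-2}.
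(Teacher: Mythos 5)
Your proposal matches the paper's proof in all essentials: both directions are handled by chain-tree simulations, the forward one by an induction with an invariant stating that any annotation missing in the simulating tree lies over a subterm that is never again rewritten at an annotated position (choosing an instantiated ADP $\ell\delta_e \to \{p_1:r_1\delta_e,\ldots\}^m$ when the later redex is $t$ itself or a captured sibling, and the $\capt$-annotated ADP otherwise), and the backward one by generalizing each $N$-ADP back to $\alpha$, which also yields $\iota_{\langle \PP'\cup N, (\PP'\cup N)\setminus\widetilde{\SSS}\rangle} \sqsubseteq \iota_{\langle \PP,\PP\setminus\SSS\rangle}$ and hence \eqref{eq:soundness-2} via well-formedness. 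The only cosmetic difference is that you assert exact equality of the two complexities, whereas the paper establishes (and only needs) the two inequalities just mentioned.
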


\begin{proof} 
    We first show that \eqref{eq:soundness-1} holds, i.e., we show
    \begin{align*}
        \iota_{\langle \P, \SSS \rangle} \sqsubseteq    \iota_{\langle \PP' \cup N, \widetilde{\SSS} \rangle}
    \end{align*}
    Every $\P$-CT $\F{T}$ gives rise to a $(\PP' \cup N)$-CT $\F{T}'$, 
    since every rewrite step with $\alpha$ 
    can also be done with a rule from $N$.
    Moreover, if $\alpha \in \SSS$, then $N \subseteq \widetilde{\SSS}$,
    so every rewrite step that counts for the expected derivation length of $\F{T}$ w.r.t.\ 
    $\langle \PP, \SSS \rangle$ still counts for the expected derivation length of $\F{T}'$ w.r.t.\ $\langle \PP' \cup N, \widetilde{\SSS} \rangle$.
    To be precise, we get $\edl_{\langle \PP, \SSS \rangle}(\F{T}) 
    \leq \edl_{\langle \PP' \cup N, \widetilde{\SSS} \rangle}(\F{T})$.\footnote{Note 
    that we only have ``$\leq$'' and not ``$=$'' since we may add rules to $\SSS$ that did not count before,
    but finding a tree of equal or greater expected derivation length suffices.}
    As this holds for every $\P$-CT, we obtain $\iota_{\langle \P, \SSS \rangle} \sqsubseteq \iota_{\langle \PP' \cup N, \widetilde{\SSS} \rangle}$,
    and \eqref{eq:soundness-1} holds.

    Now we explain the precise construction. 
    Let $\F{T} = (V,E,L)$ be a $\PP$-CT and let $\overline{\PP'} = \PP' \cup N$.
    We will create an $\overline{\PP'}$-CT $\F{T}' = (V,E,L')$.
    As mentioned, the core idea of this construction is that every rewrite step with $\alpha$ 
    can also be done with a rule from $N$.
    If we use $\ell \ruleArr{}{}{} \{ p_1:\anno_{\capt_1(\delta_1,\ldots,\delta_d)}(r_{1}), \ldots, p_k: \anno_{\capt_k(\delta_1,\ldots,\delta_d)}(r_k)\}^{m} \in N$, 
    we may create fewer annotations than we did when using the old ADP $\alpha$.
    However, we will never rewrite at the position of the annotations that got removed in the CT $\F{T}$, hence we can ignore them.
    We construct the new labeling $L'$ for the $\overline{\PP'}$-CT $\F{T}'$ inductively such that for all nodes 
    $x \in V \setminus \ctleaf$ with $xE = \{y_1, \ldots, y_m\}$ we have 
    $t'_x \tored{}{}{\overline{\PP'}} \{\tfrac{p_{y_1}}{p_x}:t'_{y_1}, \ldots, \tfrac{p_{y_m}}{p_x}:t'_{y_m}\}$.
    Let $X \subseteq V$ be the set of nodes $x$ where we have already defined the labeling $L'(x)$.
    During our construction, we ensure that the following property holds for all $x \in X$:
    \begin{equation} \label{soundness-roi-induction-hypothesis}
        \flat(t_x) = \flat(t'_x) \land \posT(t_x) \cap \PosDPoss(\flat(t_x)) \subseteq \posT(t'_x).
    \end{equation}
   Thus, all annotations of root symbols of subterms that are not in normal form in $t_x$ are still annotated in $t'_x$.

    For the construction, we start with the same term at the root.
    Here, \eqref{soundness-roi-induction-hypothesis} obviously holds.
    As long as there is still an inner node $x \in X$ such that its successors are not contained in $X$, we do the following.
    Let $xE = \{y_1, \ldots, y_m\}$ be the set of its successors.
    We need to define the corresponding terms $t_{y_1}', \ldots, t_{y_m}'$ for the nodes $y_1, \ldots, y_m$.
    Since $x$ is not a leaf and $\F{T}$ is a $\PP$-CT, we have $t_x \tored{}{}{\PP} \{\tfrac{p_{y_1}}{p_x}:t_{y_1}, \ldots, \tfrac{p_{y_m}}{p_x}:t_{y_m}\}$.
    We have the following three cases:
    \begin{enumerate}
        \item[(A)] If it is a step with $\tored{}{}{\PP}$ using an ADP that is different from $\alpha$ in $\F{T}$, 
        then we perform a rewrite step with the same ADP, the same redex, and the same substitution in $\F{T}'$.
        Then, it is easy to see that \eqref{soundness-roi-induction-hypothesis}  holds for the resulting terms.
        \item[(B)] If it is a step with $\tored{}{}{\PP}$ using $\alpha$ at a position $\pi \notin \posT(t_x)$ in $\F{T}$, 
        then we perform a rewrite step with the new ADP 
        $\ell \ruleArr{}{}{} \{ p_1:\anno_{\capt_1(\delta_1,\ldots,\delta_d)}(r_{1}), \ldots, p_k: \anno_{\capt_k(\delta_1,\ldots,\delta_d)}(r_k)\}^{m}$, 
        the same redex, same substitution, and same position in $\F{T}'$.
        Since the new rule has the same underlying terms as $\alpha$, it is easy to see
        that \eqref{soundness-roi-induction-hypothesis} holds for the resulting terms.
        Note that the rule that we use contains fewer annotations than the original rule, 
        but since $\pi \notin \posT(t_x)$, we remove all annotations from the rule during the application of the rewrite step anyway.
        \item[(C)] If it is a step with $\tored{}{}{\PP}$ using $\alpha$ at a position $\pi \in \posT(t_x)$ in $\F{T}$, 
        then we look at the specific successors to find a substitution $\delta$ such that 
        $\ell \delta \ruleArr{}{}{} \{ p_1:r_{1}\delta, \ldots, p_k: r_k\delta\}^{m} \in N$ or we detect that we can use the ADP 
        $\ell \ruleArr{}{}{} \{ p_1:\anno_{\capt_1(\delta_1,\ldots,\delta_d)}(r_{1}), \ldots,\linebreak p_k: \anno_{\capt_k(\delta_1,\ldots,\delta_d)}(r_k)\}^{m}$ 
        and perform a rewrite step with this new ADP, at the same position in $\F{T}'$.
    \end{enumerate}

    \noindent
   It remains to consider Case (C) in detail.
    Here, we have $t_x \tored{}{}{\PP} \{\tfrac{p_{y_1}}{p_x}:t_{y_1}, \ldots, \tfrac{p_{y_k}}{p_x}:t_{y_k}\}$ using the ADP $\alpha$, 
    the position $\pi \in \posT(t_x)$, and a substitution $\sigma$ such that $\flat(t_x|_{\pi}) = \ell \sigma \in \ANF_{\PP}$.

    We first consider the case where there is no successor $v$ of $x$ where an ADP is applied at an annotated position below or at $\pi$, 
    or an ADP is applied on a position strictly
    above $\pi$ before reaching such a node $v$.
    Then, we can use $\ell \ruleArr{}{}{} \{ p_1:\anno_{\capt_1(\delta_1,\ldots,\delta_d)}(r_{1}), \ldots, p_k: \anno_{\capt_k(\delta_1,\ldots,\delta_d)}(r_{k})\}^{m}$ 
    instead, because the annotations will never be used, so they do not matter.

    Otherwise, there exists a successor $v$ of $x$ where an ADP is applied at an annotated
    position below or at $\pi$, and no ADP is applied on a position strictly  above $\pi$ before.
    Let $v_1, \ldots, v_n$ be all (not necessarily direct) successors that rewrite below
    position $\pi$, or rewrite at position $\pi$, and on the path from $x$ to $v$ there is
    no other node with this property, and no node that performs a rewrite step strictly above $\pi$.
    Furthermore, let $t_1, \ldots, t_n$ be the used redexes and $\rho_1, \ldots, \rho_n$ be the used substitutions.
    \begin{itemize}
        \item \textbf{(C1) If} none of the redexes $t_1, \ldots, t_n$ is captured by $t$,
          then we use $\ell \ruleArr{}{}{} \{
          p_1:\anno_{\capt_1(\delta_1,\ldots,\delta_d)}(r_{1}), \ldots, p_k:
          \anno_{\capt_k(\delta_1,\ldots,\delta_d)}(r_k)\}^{m}$ with the position $\pi \in
          \posT(t_{x}) \cap \PosDPoss(\flat(t_{x})) \subseteq_{(IH)} \posT(t'_{x})$ and
          the substitution $\sigma$. 
       Again, \eqref{soundness-roi-induction-hypothesis} is satisfied for our resulting terms.

        \item \textbf{(C2) If} $t = t_i$ for some $1 \leq i \leq n$, then we can find a narrowing substitution $\delta_e$ of $t$ that is more general than $\sigma$, i.e., we have $\delta_e \gamma = \sigma$.
            Now, we use the ADP $\ell \delta_e \ruleArr{}{}{} \{ p_1:r_{1}\delta_e, \ldots, p_k: r_k\delta_e\}^{m}$ with the position $\pi \in \posT(t_{x}) \cap \PosDPoss(\flat(t_{x})) \subseteq_{(IH)} \posT(t'_{x})$ and the substitution $\gamma$ such that $\flat(t_x|_{\pi}) = \ell \delta_e \gamma = \ell \sigma \in \ANF_{\PP}$.
            Again, \eqref{soundness-roi-induction-hypothesis} is satisfied for our resulting terms.

        \item \textbf{(C3) If} $t \neq t_i$ for all $1 \leq i \leq n$ but there is an $1 \leq i \leq n$ such that $t_i$ is captured, then, since $t_i$ is captured, there exists a narrowing substitution $\delta_e$ of $t$ that is more general than $\rho_i$, i.e., there exists a substitution $\kappa_1$ with $\delta_e \kappa_1 = \rho_i$, and since we use $\rho_i$ later on we additionally have that $\rho_i$ is more general than $\sigma$, i.e., there exists a substitution $\kappa_2$ with $\rho_i \kappa_2 = \sigma$.
            Now, we use the ADP $\ell \delta_e \ruleArr{}{}{} \{ p_1:r_{1}\delta_e, \ldots, p_k: r_k\delta_e\}^{m}$ with the position $\pi \in \posT(t_{x}) \cap \PosDPoss(\flat(t_{x})) \subseteq_{(IH)} \posT(t'_{x})$ and the substitution $\kappa_1 \kappa_2$ such that $\flat(t_x|_{\pi}) = \ell \delta_e \kappa_1 \kappa_2 = \ell \sigma \in \ANF_{\PP}$.
            Again, \eqref{soundness-roi-induction-hypothesis} is satisfied for our resulting terms.
    \end{itemize}

    \smallskip
Now we show condition \eqref{eq:soundness-2}.
  Let $\F{P}$ be a well-formed proof tree with $L_{\C{A}}(v) = \langle \P, \S \rangle$, 
    let $v_1,\ldots,v_k = v$ be the path from the root node $v_1$ to $v$,
    and let $\PathComp_{(v_1,\ldots,v_k)} = L_{\C{C}}(v_1) \oplus ... \oplus L_{\C{C}}(v_{k-1})$.
    We have
    \begin{align*}
        \iota_{\langle \PP' \cup N, (\PP' \cup N) \setminus \widetilde{\SSS} \rangle} \sqsubseteq\;& \PathComp_{(v_1,\ldots,v_k)},
    \end{align*}
    because every $(\PP' \cup N)$-CT $\F{T}$ gives rise to a $\PP$-CT $\F{T}'$ with
    at least the same $(\mathbf{at})$- and $(\mathbf{af})$-steps.
    We can replace each usage of an ADP $\ell \delta_e \ruleArr{}{}{} \{ p_1:r_{1}\delta_e, \ldots, p_k: r_k\delta_e\}^{m}$
    with the more general ADP $\alpha$,
    and each ADP $\ell \ruleArr{}{}{} \{ p_1:\anno_{\capt_1(\delta_1,\ldots,\delta_d)}(r_{1}), \ldots, p_k: \anno_{\capt_k(\delta_1,\ldots,\delta_d)}(r_k)\}^{m}$ 
    can be replaced by $\alpha$ as well, leading to more annotations than before.

    If an $(\mathbf{at})$- or $(\mathbf{af})$-step is performed with an ADP $\beta \in (\PP' \cup N) \setminus \widetilde{\SSS}$ in $\F{T}$,
    we have the following cases:
    \begin{itemize}
        \item If $\beta \in \P' \setminus \widetilde{\SSS}$, and $\alpha \in \S$, then
        \begin{align*}
            \beta \in   \P' \setminus \widetilde{\SSS} = \P' \setminus ((\SSS \setminus \{\alpha\}) \cup N) = \P' \setminus (\SSS \cup N) \subseteq \P \setminus \SSS\!
        \end{align*}
        \item If $\beta \in \P' \setminus \widetilde{\SSS}$, and $\alpha \notin \S$, then
        \begin{align*}
            \beta \in \P' \setminus \widetilde{\SSS} = \P' \setminus \SSS \subseteq \P \setminus \SSS\!
        \end{align*}
        \item If $\beta \in N \setminus \widetilde{\SSS}$, then $\alpha \notin \S$, and we use $\alpha$ in $\F{T}'$.
    \end{itemize}
    In every case, if the step counts towards the expected derivation length w.r.t.\ $(\PP' \cup N) \setminus \widetilde{\SSS}$ in $\F{T}$,
    it will count for the expected derivation length w.r.t.\ $\PP \setminus \SSS$ in $\F{T}'$ as well.
    Therefore, we result in $\edl_{\langle \PP' \cup N, (\PP' \cup N) \setminus \widetilde{\SSS} \rangle}(\F{T}) \leq \edl_{\langle \PP, \P \setminus \S \rangle}(\F{T})$.
    As this holds for every $(\PP' \cup N)$-CT, we get 
    $\iota_{\langle \PP' \cup N,
(\PP' \cup N) \setminus \widetilde{\SSS} \rangle} \sqsubseteq \iota_{\langle \P, \P \setminus \SSS \rangle} \sqsubseteq \PathComp_{(v_1,\ldots,v_k)}$,
    by well-formedness of $\F{P}$, and \eqref{eq:soundness-2} holds.
\end{proof}

\begin{example}[Rule Overlap Inst.\ Proc.]\label{ex:roi}
  Consider $\R_{\mathtt{ROI}}$ from \cite{FLOPS-Journal}.
    \begin{align*}
        \tf(\td(x))  \to \{ \nicefrac{3}{4}: \te(\,\tf(\tg(x)),\,\tf(\th(x))\,), \;
        \nicefrac{1}{4}: \ta \}
        \quad 
        \tg(\ta) & \to \{1:\td(\ta)\} \\[-.08cm]
        \th(\tb) & \to \{1:\td(\tb)\}\!
    \end{align*}
  $\R_{\mathtt{ROI}}$
    has constant  runtime complexity, 
    i.e., $\iota_{\R_{\mathsf{ROI}}} = \Pol_0$,
    because for every instantiation, 
    at most one of the two recursive $\tf$-calls 
    in the right-hand side of the $\tf$-rule can be evaluated.
    The reason is that we can either use the $\tg$-rule if  $x$ is instantiated with $\ta$, 
    or we can apply the $\th$-rule if $x$ is instantiated with $\tb$, but not both.

    With the rule overlap instantiation processor,
our ADP framework can determine this constant complexity automatically.
Using the  dependency graph processor on the canonical ADP problem $\langle
\ADPair{\R_{\mathtt{ROI}}}, \ADPair{\R_{\mathtt{ROI}}} \rangle$  yields
 $\langle \PP_{\mathsf{ROI}}, \PP_{\mathsf{ROI}} \rangle$, where $\PP_{\mathsf{ROI}}$
consists of

\vspace*{-.3cm}

{\small
    \begin{align*}
        \tf(\td(x))  &\to \{ \nicefrac{3}{4}: \te(\,\tF(\tg(x)),\,\tF(\th(x))\,), \;
        \nicefrac{1}{4}: \ta \}^{\ttrue}\\[-.08cm]
        \tg(\ta) & \to \{1:\td(\ta)\}^{\ttrue} \\[-.08cm]
        \th(\tb) & \to \{1:\td(\tb)\}^{\ttrue}\!
    \end{align*}}

    We apply $\Proc_{\mathtt{ROI}}$ using the term $t = \tf(\tg(x))$ whose
 only narrowing substitution is $\delta = \{x/\ta\}$.
\pagebreak[3]    For the other subterm $\tF(\th(x))$ with annotated root, $\tf(\th(x))$ is not captured
    by $\delta$. 
Hence, we generate an additional ADP where this second subterm is annotated.
    Thus, we replace the former $\tf$-ADP by the following two new ADPs.

\vspace*{-.3cm}

{\small
    \begin{align*}
    \tf(\td(\ta)) & \to \{ \nicefrac{3}{4}: \te(\,\tF(\tg(\ta)),\,\tF(\th(\ta))\,), \;
    \nicefrac{1}{4}: \ta \}^{\ttrue} \\[-.08cm]
    \tf(\td(x))   & \to \{ \nicefrac{3}{4}: \te(\,\tf(\tg(x)),\,\tF(\th(x))\,), \; \nicefrac{1}{4}: \ta \}^{\ttrue}\!
\end{align*}}

\noindent
    Now one can remove the annotation of $\tF(\th(\ta))$ from the first ADP by the dependency graph processor 
    and then apply the reduction pair processor with the interpretation that maps 
    $\tF$ to 1 and all other symbols to 0 to remove all annotations. This
   proves $\iota_{\langle \PP_{\mathsf{ROI}}, \PP_{\mathsf{ROI}} \rangle} = \iota_{\R_{\mathsf{ROI}}} = \Pol_0$.
    Using a constant polynomial interpretation 
    would not be possible without the rule overlap instantiation processor.
\end{example}

}

\end{document}